\newtheorem{theorem}{Theorem}
\newtheorem{lemma}[theorem]{Lemma}
\newtheorem{corollary}[theorem]{Corollary}
\theoremstyle{plain}
\newtheorem{proposition}[theorem]{Proposition}
\theoremstyle{definition}
\newtheorem{definition}[theorem]{Definition}
\newtheorem{example}[theorem]{Example}
\newcommand{\aryrelation}[1]{\Sigma^*_1\times\ldots\times\Sigma^*_{#1}}
\newcommand{\aryomegarelation}[1]{\Sigma^\omega_1\times\ldots\times\Sigma^\omega_{#1}}
\newcommand{\maybeomega}{$(\omega\text{-})$}
\newcommand{\omegaintitle}{\texorpdfstring{$\mathbb{\omega}$}{\text{omega}}}
\DeclareMathOperator{\lcm}{lcm}
\DeclareMathOperator{\dom}{dom}
\DeclareMathOperator{\equiindex}{index}
\DeclareMathOperator{\rev}{rev}
\author{Christof L\"oding\affiliationmark{1} \and Christopher Spinrath\affiliationmark{2}}
\title[Decision Problems for Subclasses of Rational Relations]{Decision Problems for Subclasses of Rational Relations over Finite and Infinite Words}
\affiliation{
	RWTH Aachen University, Germany\\
	TU Dortmund University, Germany}
\keywords{rational relations, automatic relations, omega-automata, finite transducers, visibly pushdown automata}
\begin{document}

\publicationdetails{21}{2019}{3}{4}{4393}

\maketitle

\begin{abstract}
	We consider decision problems for relations over finite and infinite words defined by finite automata. We prove that the equivalence problem for binary deterministic rational relations over infinite words is undecidable in contrast to the case of finite words, where the problem is decidable. Furthermore, we show that it is decidable in doubly exponential time for an automatic relation over infinite words whether it is a recognizable relation. We also revisit this problem in the context of finite words and improve the complexity of the decision procedure to single exponential time. The procedure is based on a polynomial time regularity test for deterministic visibly pushdown automata, which is a result of independent interest.
\end{abstract}

\section{Introduction} \label{sec:intro}
We consider in this paper algorithmic problems for relations over
words that are defined by finite automata. Relations over words extend
the classical notion of formal languages. However, there are different
ways of extending the concept of regular language and finite automaton
to the setting of relations. Instead of processing a single input
word, an automaton for relations has to read a tuple of input
words. The existing finite automaton models differ in the way how the
components can interact while being read. In the following, we briefly
sketch the four main classes of automaton definable relations, and
then describe our contributions.

A (nondeterministic) finite transducer (see, \textit{e.g.},
\cite{berstel79,Sak09}) has a standard finite state control and at
each time of a computation, a transition can consume the next input
symbol from any of the components without restriction (equivalently,
one can label the transitions of a transducer with tuples of finite
words).  The class of relations that are definable by finite
transducers is referred to as the class of rational relations.
In the binary case, the first tape is often referred to as input, and the second one as output tape.
The class of rational relations is not
closed under intersection and complement, and many algorithmic
problems, like universality, equivalence, intersection emptiness, are
undecidable (for details we refer to \cite{RS59}).  A deterministic version of finite
transducers defines the class of deterministic rational relations (see
\cite{Sak09}) with slightly better properties compared to the
nondeterministic version, in particular it has been shown by \cite{Bir73,HarjuK91}
that the equivalence problem is decidable.

Another important subclass of rational relations are the synchronized
rational relations which have been studied by \cite{FS93} and are defined
by automata that
synchronously read all components in parallel (using a padding symbol
for words of different length). These relations are often referred to
as automatic relations, a terminology that we also adopt, and
basically have all the good properties of regular languages because
synchronous transducers can be viewed as standard finite automata over
a product alphabet. These properties lead to applications of automatic
relations in algorithmic model theory as a finite way of representing
infinite structures with decidable logical theories (so called
automatic structures, \textit{cf.}\
\cite{KhoussainovN95,BlumensathG00}), and in regular model checking, a
verification technique for infinite state systems (\textit{cf.}\
\cite{Abdulla12}).

Finally, there is the model of recognizable relations, which can be
defined by a tuple of automata, one for each component of the
relation, that independently read their components and only
synchronize on their terminal states, \textit{i.e.}, the tuple of states at the
end determines whether the input tuple is accepted. Equivalently, one
can define recognizable relations as finite unions of products of
regular languages. Recognizable relations play a role.
For example,
\cite{BozzelliMP15} use relations over words for
identifying equivalent plays in incomplete information games. The
task is to compute a winning strategy that does not distinguish
between equivalent plays. While this problem is undecidable for
automatic relations, it is possible to synthesize strategies for
recognizable equivalence relations. In view of such results, it
is an interesting question whether one can decide for a given relation
whether it is recognizable.

All these four concepts of automaton definable relations can directly
be adapted to infinite words using the notion of $\omega$-automata
(see \cite{Tho90} for background on $\omega$-automata), leading to the
classes of (deterministic) $\omega$-rational, $\omega$-automatic, and
$\omega$-recognizable relations. Applications like automatic
structures and regular model checking have been adapted to relations
over infinite words, \textit{e.g.}\ by \cite{BlumensathG00,BoigelotLW04}, for instance for
modeling systems with continuous parameters represented by real
numbers (which can be encoded as infinite words, see \textit{e.g.}\ \cite{BoigelotJW05}).

Our contributions are the following, where some background on the
individual results is given below. We note that~(\ref{contr:vpa}) is
not a result on relations over words. It is used in the proof of
(\ref{contr:rec-aut}) but we state it explicitly because we believe
that it is an interesting result on its own.
\begin{enumerate}[(1)]
\item We show that the equivalence problem for binary deterministic
  $\omega$-rational
  relations is undecidable, already for the B\"uchi acceptance
  condition (which is weaker than parity or Muller acceptance
  conditions in the case of deterministic automata).  \label{contr:omega-drat}
\item We show that it is decidable in doubly exponential time for an
  $\omega$-automatic
  relation whether it is
  $\omega$-recognizable.  \label{contr:omega-rec-aut}
\item We reconsider the complexity of deciding for a binary automatic
  relation whether it is recognizable, and prove that it can be done
  in exponential time. \label{contr:rec-aut}
\item\label{contr:vpa} We prove that the regularity problem for deterministic visibly
  pushdown automata --- a model introduced by \cite{AM04} --- is decidable in polynomial time.
\end{enumerate}

The algorithmic theory of deterministic $\omega$-rational
relations has not yet been studied in detail. We think, however, that this class is worth studying in order to understand whether it can
be used in applications that are studied for $\omega$-automatic
relations. One such scenario could be the synthesis of finite state
machines from (binary) $\omega$-automatic
relations. In this setting, an $\omega$-automatic
relation is viewed as a specification that relates input streams to
possible output streams. The task is to automatically synthesize a
synchronous sequential transducer (producing one output letter for each
input letter) that outputs a string for each possible input such that
the resulting pair is in the relation (for instance, \cite{Thomas09} provides an overview of
this kind of automata theoretic synthesis). It has recently
been shown by \cite{FiliotJLW16} that this synchronous synthesis problem can be lifted to
the case of asynchronous automata if the relation is deterministic
rational. This shows that the class of
deterministic rational relations has some interesting properties, and
motivates our study of the corresponding class over infinite words.
Our contribution (\ref{contr:omega-drat}) contrasts the decidability of
equivalence for deterministic rational relations over finite words shown by
\cite{Bir73,HarjuK91} and thus exhibits a difference between
deterministic rational relations over finite and over infinite words.
We prove the undecidability by a reduction from the
intersection emptiness problem for deterministic rational relations
over finite words. The reduction is inspired by a recent construction of \cite{Boeh+} for
proving the undecidability of equivalence for deterministic B\"uchi
one-counter automata.

Contributions~(\ref{contr:omega-rec-aut}) and (\ref{contr:rec-aut})
are about the effectiveness of the hierarchies formed by the four
classes of \maybeomega{}rational,
deterministic \maybeomega{}rational,
\maybeomega{}automatic,
and \maybeomega{}recognizable
relations. A systematic overview and study on the effectiveness of
this hierarchy for finite words is provided by \cite{CCG06}: For a
given rational relation it is undecidable whether it belongs to one of
the other classes, for deterministic rational and automatic relations
it is decidable whether they are recognizable, and the problem of
deciding for a deterministic rational relation whether it is automatic
is open. 
An illustration of those results and our contributions~\ref{contr:omega-rec-aut} and~\ref{contr:rec-aut} can be found in Figure~\ref{figure_overview}.
\begin{figure}
	\centering
	\resizebox{\textwidth}{!}{
		\begin{tikzpicture}[->,>=stealth',shorten >=1pt,auto,node distance=.75cm,
		semithick,align=center]
		\node (rec) {recognizable\\ relations};
		\node [right=of rec] (recinc) {$\subsetneq$};
		\node [right=of recinc] (aut) {automatic\\ relations};
		\node [right=of aut] (autinc) {$\subsetneq$}; 
		\node [right=of autinc] (det) {deterministic rational\\ relations};
		\node [right=of det] (detinc) {$\subsetneq$};
		\node [right=of detinc] (rat) {rational\\ relations};
		
		\path[->]
		(det.south) edge[bend left=10, below, dotted] node{open} (aut.south)
		(aut.south) edge[bend left=10, below] node{\textsc{2ExpTime} (\cite{CCG06})\\ \textsc{ExpTime} for binary relations (Corollary~\ref{corollary_rec_in_aut_decidable_visibly})} (rec.south)
		(det.north) edge[bend right=7, above] node{decidable~(\cite{CCG06}), \textsc{2ExpTime} for binary relations~(\cite{Val75})} (rec.north)
		;
		
		\node [below=7em of recinc] (orecinc) {$\subsetneq$};
		\node [left=of orecinc,xshift=.5em] (orec) {$\omega$-recognizable\\ relations};
		\node [below=7em of autinc] (oautinc) {$\subsetneq$}; 
		\node [left=of oautinc,xshift=.5em] (oaut) {$\omega$-automatic\\ relations};
		\node [below=7em of detinc] (odetinc) {$\subsetneq$};
		\node [left=of odetinc,xshift=.5em] (odet) {$\omega$-deterministic rational\\ relations};
		\node [right=of odetinc,xshift=-.5em] (orat) {$\omega$-rational\\ relations};
		
		\path[->]
		(odet.south) edge[bend left=10, below, dotted] node{open} (oaut.south)
		(oaut.south) edge[bend left=10, below] node{\textsc{2ExpTime} (Theorem~\ref{theorem_rec_in_aut_omega_decidable})} (orec.south)
		(odet.north) edge[bend right=7, above, dotted] node{open} (orec.north)
		;
		\end{tikzpicture}
	}
	\vspace{-.75cm}	\caption{Illustration of decision status for the classes considered in this paper.
		An edge from class $D$ to $C$ either indicates that it is open if it is decidable, given a relation $R\in D$, whether $R\in C$ holds, or states the best known upper bound for this decision problem.
		Deciding whether a rational or $\omega$-rational relation is in one of the subclasses is undecidable (\cite{FR68},\cite{LLP79})}
	\label{figure_overview}
\end{figure}
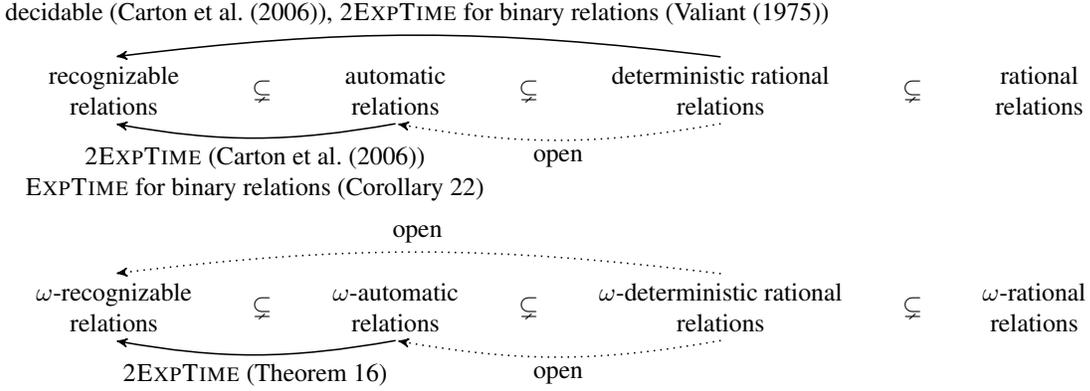

The question of the effectiveness of the hierarchy for relations over
infinite words has already been posed by \cite{Thomas92} (where the
$\omega$-automatic
relations are called B\"uchi recognizable $\omega$-relations).
The undecidability results easily carry over from finite to infinite
words. Our result~(\ref{contr:omega-rec-aut}) lifts one of the two
known decidability results for finite words  to infinite
words. The algorithm is based on a reduction to a problem over finite
words: Using a representation of $\omega$-languages
by finite encodings of ultimately periodic words as demonstrated by \cite{CNP93},
we are able to reformulate the recognizability of an
$\omega$-automatic relation in terms of \emph{slenderness} of a finite number
of languages of finite words.
We adopt the term \emph{slender} from \cite{PS95}.
A language of finite words is called
slender  if there is a bound $k$ such that the language
contains for each length at most $k$ words of this length.
By definition a language is slender if and only if it has \emph{polynomial growth} of order $0$, which is decidable for context-free languages in polynomial time due to \cite{Gaw08}.
We tighten the complexity bound for the slenderness problem for 
nondeterministic finite automata by proving that it is \textsc{NL}-complete.

As mentioned above, the decidability of recognizability of an
automatic relation has already been proved by
\cite{CCG06}. However, the exponential time complexity claimed in that
paper does not follow from the proof presented there.
We illustrate this by a family $R_n$ of automatic relations
  (see Example~\ref{example_rec_in_aut_lower_bound_construction_E}), for
  which an intermediate automaton
  is exponentially larger than the automaton for $R_n$, and the
  procedure from  \cite{CCG06} runs another exponential algorithm on
  this intermediate automaton.
So we revisit the problem and prove the exponential time upper bound for binary
relations based on the 
connection between binary rational relations and pushdown automata: 
For a relation $R$ over finite words, consider the language $L_R$ consisting of the words $\rev(u)\#v$ for all
$(u,v) \in R$, where  $\rev(u)$ denotes the
reverse of $u$.  It turns out that $L_R$ is linear context-free iff $R$ is
rational, $L_R$ is deterministic context-free iff $R$ is
deterministic rational, and $L_R$ is regular iff $R$ is recognizable (\textit{cf.}\ \cite{CCG06}). 
Since $L_R$ is regular iff $R$ is recognizable,  the recognizability test for binary deterministic rational relations 
reduces to
the regularity test for deterministic pushdown automata, which has been shown to be decidable by \cite{Ste67}.
\cite{Val75} improved this decidability result for deterministic pushdown automata by proving a doubly exponential upper bound.\footnote{Recognizability is decidable
for deterministic rational relations of arbitrary arity as shown by \cite{CCG06} but we
are not aware of a proof preserving the doubly exponential runtime.}
We adapt this technique to automatic relations $R$ and show that $L_R$
can in this case be defined by a visibly pushdown automaton (\textsc{VPA})
(see \cite{AM04}), in which the stack operation (pop, push, skip) is
determined by the input symbol, and no $\varepsilon$-transitions are
allowed. The deterministic \textsc{VPA} for $L_R$ is exponential in the size of the
automaton for $R$, and we prove that the regularity test can be done in polynomial
time, our contribution (\ref{contr:vpa}). We note that the polynomial
time regularity test for visibly pushdown processes as presented by
\cite{Srb06} does not imply our result. The model used by \cite{Srb06}
cannot use transitions that cause a pop operation when the stack is
empty.  For our translation from
automatic relations to \textsc{VPA}s we need
these kind of pop operations,
which makes the model different and the decision procedure more
involved (and a reduction to the model of \cite{Srb06} by using new internal symbols to simulate pop operations on the empty stack will not preserve regularity of the language, in general).

This paper is the full version of the conference paper of \cite{LS17}.
In contrast to the conference version, which only sketches proof ideas for the major results, it contains full proofs for all (intermediate) results.
Furthermore, Section \ref{sec:omega-rec-in-aut} is enriched by an example which illustrates why the exponential time complexity claimed by \cite{CCG06} does not follow from their proof approach.
Also, we give an additional comment on the connection of the properties slenderness -- which we exhibit to show the decidability of recognizability for $\omega$-automatic relations -- and finiteness -- used in the approach of \cite{CCG06} to show the decidability of recognizability of automatic relations.

The paper is structured as follows. In Section~\ref{sec:prelim} we
give the definitions of transducers, relations, and visibly pushdown
automata. In Section~\ref{sec:equiv-drat} we prove the undecidability
of the equivalence problem for deterministic $\omega$-rational
relations. Section~\ref{sec:omega-rec-in-aut} contains the decision
procedure for recognizability of $\omega$-automatic relations, and
Section~\ref{sec:rec-in-aut} presents the polynomial time regularity
test for deterministic \textsc{VPA}s and its use for the
recognizability test of automatic relations. Finally, we conclude in
Section~\ref{sec:conclusion}.
 \section{Preliminaries} \label{sec:prelim}
We start by briefly introducing transducers and visibly pushdown automata as well as related terminology as we need them for our results. For more details we refer to \cite{Sak09,FS93,Tho90} and \cite{AM04,BLS06}, respectively.

We denote alphabets (\textit{i.e.}\ finite non-empty sets) by $\Sigma$ and $\Gamma$.
$\Sigma^*$ and $\Sigma^\omega$ are the sets of all finite and infinite words over $\Sigma$, respectively.
Furthermore, a $k$-ary ($\omega$-)relation is a subset $R\subseteq \aryrelation{k}$ or $R\subseteq \aryomegarelation{k}$, respectively.
A unary, \textit{i.e.}\ $1$-ary, ($\omega$-)relation is called a ($\omega$-)language.
Usually we denote languages by $L,K$, \textit{etc.}\ and relations of higher or arbitrary arity by $R,S$, \textit{etc.}
The domain of a relation $R$ is the language 
\[\dom(R):= \{ w\in\Sigma_1^*\mid \exists (v_2,\ldots,v_k)\in \Sigma_2^*\times\ldots\times\Sigma_k^*: (w,v_2,\ldots,v_k)\in R \}.\]
For an $\omega$-relation $R$ the domain is defined analogously.

Lastly, for a natural number $n\in\mathbb{N}$ we define $\underline{n} := \{m \mid 1\leq m\leq n\}$.

\subsection{Finite Transducers and (\omegaintitle{}-)Rational Relations}
A transducer $\mathcal{A}$ is a tuple $(Q,\Sigma_1,\ldots,\Sigma_k,q_0,\Delta,F)$ where $Q$ is the state set, $\Sigma_i$, $1\leq i\leq k$ are (finite) alphabets, $q_0\in Q$ is the initial state, $F\subseteq Q$ denotes the accepting states, and $\Delta\subseteq Q\times (\Sigma_1\cup\{\varepsilon\})\times\ldots\times (\Sigma_k\cup\{\varepsilon\})\times Q$ is the transition relation.
$\mathcal{A}$ is deterministic if there is a state partition $Q=Q_1\cup\ldots\cup Q_k$ such that $\Delta$ can be interpreted as partial function $\delta : \bigcup_{j=1}^k(Q_j\times (\Sigma_j\cup\{\varepsilon\}))\rightarrow Q$ with the restriction that if $\delta(q,\varepsilon)$ is defined then no $\delta(q,a)$, $a\neq \varepsilon$ is defined.
Note that the state determines which component the transducer processes.
$\mathcal{A}$ is \emph{complete} if $\delta$ is total (up to the restriction for $\varepsilon$-transitions).

A run of $\mathcal{A}$ on a tuple $u\in \aryrelation{k}$ is a sequence $\rho = p_0\ldots p_n\in Q^*$ such that there is a decomposition $u = (a_{1,1},\ldots,a_{1,k}) \ldots (a_{n,1},\ldots,a_{n,k})$ where the $a_{i,j}$ are in $\Sigma_j\cup\{\varepsilon\}$ and for all $i\in\{1,\ldots,n\}$ it holds that $(p_{i-1},a_{i,1},\ldots,a_{i,k},p_i)\in\Delta$.
The run of $\mathcal{A}$ on a tuple over infinite words in $\aryomegarelation{k}$ is an infinite sequence $p_0p_1\ldots\in Q^\omega$ defined analogously to the case of finite words.
We use the shorthand notation $\mathcal{A}: p\xrightarrow{u_1\smash{/}\ldots\smash{/}u_k}q$ or $\mathcal{A}: p\xrightarrow{u}q$ to denote the existence of a run of $\mathcal{A}$ on $u = (u_1,\ldots,u_k)$, $u_j\in\Sigma_j^*$ starting in $p$ and ending in $q$.
Moreover, $\mathcal{A}: p\smash{\xrightarrow[\smash{F}]{u}}q$ denotes the existence of a run from $p$ to $q$ which contains an accepting state.
Concerning runs on tuples of infinite words we deliberately extend this notation in the natural way and write  $p\xrightarrow{u}(q\xrightarrow{v_i}q)_{i\geq 0}$ or $p\xrightarrow{u}(q\xrightarrow{v}q)^\omega$ if both, the run and the input tuple, permit it.
A run on $u$ is called \emph{accepting} if it starts in the initial state $q_0$ and ends in an accepting state.
Moreover, $\mathcal{A}$ accepts $u$ if there is an accepting run starting of $\mathcal{A}$ on $u$.
Then $\mathcal{A}$ defines the relation $R_*(\mathcal{A})\subseteq \aryrelation{k}$ containing precisely those tuples accepted by $\mathcal{A}$.
To enhance the expressive power of deterministic transducers, the relation $R_*(\mathcal{A})$ is defined as the relation of all $u$ such that $\mathcal{A}$ accepts $u(\#,\ldots,\#)$ for some fresh fixed symbol $\#\notin \bigcup_{j=1}^k\Sigma_j$.
The relations definable by a (deterministic) transducer are called \emph{(deterministic) rational relations}.
For tuples over infinite words $u\in\aryomegarelation{k}$ we utilize the B\"uchi condition (\textit{cf.}\ \cite{Buec62}). That is, a run $\rho\in Q^\omega$ is accepting if it starts in the initial state $q_0$ and a state $f\in F$ occurs infinitely often in $\rho$.
Then $\mathcal{A}$ accepts $u$ if there is an accepting run of $\mathcal{A}$ on $u$ and $R_\omega(\mathcal{A}) \subseteq\aryomegarelation{k}$ is the relation of all tuples of infinite words accepted by $\mathcal{A}$.
We refer to $\mathcal{A}$ as B\"uchi transducer if we are interested in the relation of infinite words defined by it.
The class of $\omega$-rational relations consists of all relations definable by B\"uchi transducers.

It is well-known that  deterministic B\"uchi automata are not sufficient to capture the $\omega$-regular languages (see \cite{Tho90}) which are the $\omega$-rational relations of arity one.
Therefore, we use another kind of transducer to define deterministic $\omega$-rational relations: a deterministic parity transducer is a tuple $\mathcal{A} = (Q,\Sigma_1,\ldots,\Sigma_k,q_0,\delta,\Omega)$ where the first $k+3$ items are the same as for deterministic transducers and $\Omega: Q\rightarrow \mathbb{N}$ is the priority function.
A run is accepting if it starts in the initial state and the maximal priority occurring infinitely often in the run is even (\textit{cf.}\ \cite{Pit06}).

A transducer is synchronous if for each pair $(p,a_1,\ldots,a_k,q),(q,b_1,\ldots,b_k,r)$ of successive transitions it holds that $a_j=\varepsilon$ implies $b_j=\varepsilon$ for all $j\in\{1,\ldots,k\}$.
Intuitively, a synchronous transducer is a finite automaton over the vector alphabet $\Sigma_1\times\ldots\times\Sigma_k$ and, if it operates on tuples $(u_1,\ldots,u_k)$ of finite words, the components $u_j$ may be of different length (\textit{i.e.}\ if a $u_j$ has been  processed completely, the transducer may use transitions reading $\varepsilon$ in the $j$-th component to process the remaining input in the other components).
In fact, synchronous transducers inherit the rich properties of finite automata -- \textit{e.g.}, they are closed under all Boolean operations and can be determinized.
In particular,  synchronous (nondeterministic) B\"uchi transducer and deterministic synchronous parity transducer can be effectively transformed into each other (see \cite{Sak09,FS93,Pit06}).
Synchronous (B\"uchi) transducers define the class of $(\omega\text{-})$automatic relations.

Finally, the last class of relations we consider are \maybeomega{}recognizable relations. 
A relation $R\subseteq\aryrelation{k}$ (or $R\subseteq\aryomegarelation{k}$) is \emph{\maybeomega{}recognizable} if it is the finite union of direct products of \maybeomega{}regular languages --- \textit{i.e.}\ $R = \bigcup_{i=1}^\ell L_{i,1}\times\ldots\times L_{i,k}$ where the $L_{i,j}$ are ($\omega\text{-}$)regular languages.

It is well-known that the classes of \maybeomega{}recognizable, \maybeomega{}automatic, deterministic \maybeomega{}ra\-tio\-nal relations, and \maybeomega{}rational relations form a strict hierarchy (see \cite{Sak09}).

\subsection{Visibly Pushdown Automata}
In Section~\ref{sec:rec-in-aut}, we use visibly pushdown automata (\textsc{VPA}s) which have been introduced by \cite{AM04}. They operate on typed alphabets, called pushdown alphabets below, where the type of input symbol determines the stack operation.
Formally, a \emph{pushdown alphabet} is an alphabet $\Sigma$ consisting of three disjoint parts --- namely, a set $\Sigma_c$ of \emph{call symbols} enforcing a push operation, a set $\Sigma_r$ of \emph{return symbols} enforcing a pop operation and internal symbols $\Sigma_\text{int}$ which do not permit any stack operation.
A \textsc{VPA} is a tuple $\mathcal{P} = (P,\Sigma,\Gamma,p_0,\bot,\Delta,F)$ where $P$ is a finite set of states, $\Sigma = \Sigma_c~\dot{\cup}~\Sigma_r~\dot{\cup}~\Sigma_\text{int}$ is a finite pushdown alphabet, $\Gamma$ is the stack alphabet and $\bot\in\Gamma$ is the stack bottom symbol, $p_0\in P$ is the initial state, $\Delta\subseteq (P\times\Sigma_c\times P\times(\Gamma\setminus\{\bot\}))\cup(P\times\Sigma_r\times\Gamma\times P)\cup (P\times \Sigma_\text{int}\times P)$ is the transition relation, and $F$ is the set of accepting states.

A \emph{configuration} of $\mathcal{P}$ is a pair in $(p,\alpha)\in P\times(\Gamma\setminus\{\bot\})^*\{\bot\}$ where $p$ is the current state of $\mathcal{P}$ and $\alpha$ is the current stack content ($\alpha[0]$ is the top of the stack).
Note that the stack bottom symbol $\bot$ occurs precisely at the bottom of the stack.
The stack whose only content is $\bot$, is called the \emph{empty stack}.
$\mathcal{P}$ can \emph{proceed} from a configuration $(p,\alpha)$ to another configuration $(q,\beta)$ \emph{via} $a\in\Sigma$ if
	$a\in\Sigma_c$ and there is a $(p,a,q,\gamma)\in \Delta\cap (P\times\Sigma_c\times P\times(\Gamma\setminus\{\bot\}))$ such that $\beta = \gamma\alpha$ (\emph{push} operation),
	$a\in\Sigma_r$ and there is a $(p,a,\gamma,q)\in \Delta\cap (P\times\Sigma_r\times\Gamma\times P)$ such that $\alpha = \gamma\beta$ or $\gamma = \alpha = \beta = \bot$ --- that is, the empty stack may be popped arbitrarily often (\emph{pop} operation),
	or $a\in\Sigma_\text{int}$ and there is a $(p,a,q)\in \Delta\cap (P\times\Sigma_\text{int}\times P)$ such that $\alpha = \beta$ (\emph{noop}).
A \emph{run} of $\mathcal{P}$ on a word $u = a_1,\ldots,a_n\in\Sigma^*$ is a sequence of configurations $(p_1,\alpha_1)\ldots (p_{n+1},\alpha_{n+1})$ connected by transitions using the corresponding input letter.
As for transducers we use the shorthand $\mathcal{P}:(p_1,\alpha_1)\xrightarrow{u}(p_{n+1},\alpha_{n+1})$ to denote a run.
A run is accepting if it starts with the initial configuration $(p_0,\bot)$ and ends in a configuration $(p_f,\alpha_f)$ with $p_f \in F$. We say that $\mathcal{P}$ accepts $u$ if there is an accepting run of $\mathcal{P}$ on $u$ and write $L(\mathcal{P})$ for the language of all words accepted by $\mathcal{P}$.
Furthermore, $\mathcal{P}$ accepts $u$ from $(p,\alpha)$ if there is a run of $\mathcal{P}$ starting in $(p,\alpha)$ and ending in a configuration $(p_f,\alpha_f)$ with $p_f\in F$.
We write $L(p,\alpha)$ for the set of all words accepted from the configuration $(p,\alpha)$.
Note that for the initial configuration $(p_0,\bot)$ we have that $L(\mathcal{P}) = L(p_0,\bot)$.
Two configurations $(p,\alpha),(q,\beta)$ are $\mathcal{P}$-equivalent if $L(p,\alpha) = L(q,\beta)$.
We denote the $\mathcal{P}$-equivalence relation by $\approx_\mathcal{P}$ .
That is, $(p,\alpha)\approx_\mathcal{P}(q,\beta)$ if and only if $L(p,\alpha) = L(q,\beta)$.
Lastly, a configuration $(p,\alpha)$ is \emph{reachable} if there is a run from $(p_0,\bot)$  to $(p,\alpha)$.

A \emph{deterministic} \textsc{VPA} (\textsc{DVPA})  $\mathcal{P}$ is a \textsc{VPA} that can proceed to at most one configuration for each given configuration and $a\in \Sigma$.

Viewing the call symbols as opening and the return symbols as closing parenthesis, one obtains a natural notion of a return matching a call, and unmatched call or return symbols. 
Furthermore, we need the notion of well-matched words (\textit{cf.}\ \cite{BLS06}).
The set of well-matched words over a pushdown alphabet $\Sigma$ is defined inductively by the following rules:
\begin{itemize}
	\item Each $w\in \Sigma_\text{int}^*$ is a well-matched word.
	\item For each well-matched word $w$, $c\in\Sigma_c$ and $r\in\Sigma_r$ the word $cwr$ is well-matched.
	\item Given two well-matched words $w,v$ their concatenation $wv$ is well-matched.
\end{itemize}
An important observation regarding well-matched words is that the behavior of $\mathcal{P}$ on a well-matched word $w$ is invariant under the stack content.
That is, for any configurations $(p,\alpha), (p,\beta)$ we have that $\mathcal{P}: (p,\alpha)\xrightarrow{w}(q,\alpha)$ if and only if $\mathcal{P}:(p,\beta)\xrightarrow{w}(q,\beta)$.
In particular, this holds true for the empty stack $\alpha =\bot$.
Furthermore, every word $u\in\Sigma^*$ with $\mathcal{P}: (q_0,\alpha)\xrightarrow{u}(p_{n+1},\gamma_n\ldots\gamma_1\alpha)$ for some $\gamma_n,\ldots,\gamma_1\in\Gamma\setminus\{\bot\}$, $n\leq |u|$ can be uniquely decomposed into a prefix $u'\in\Sigma^*$, well-matched words $w_1,\ldots, w_{n+1}$, and call symbols $c_1,\ldots,c_n\in\Sigma_c$ such that $u=u'w_1c_1w_2\ldots c_nw_{n+1}$ and $u'$ is minimal (in other words, $u'$ is the shortest prefix that contains unmatched return symbols).
Moreover, there are configurations \[(p_1,\alpha),(p_1',\gamma_1\alpha),\ldots,(p_n,\gamma_{n-1}\ldots\gamma_{1}\alpha),(p_n',\gamma_{n}\ldots\gamma_{1}\alpha)\]
such that 
\[(p_{i-1}',\gamma_{i-1}\ldots\gamma_1\alpha)\xrightarrow{w_i}(p_i,\gamma_{i-1}\ldots\gamma_1\alpha)\xrightarrow{c_i}(p_i,\gamma_{i}\ldots\gamma_{1}\alpha)\]
holds.
That is, informally, the symbol $c_i$ is responsible for pushing $\gamma_i$ onto the stack.

A similar unique decomposition is possible for words $u$ popping a sequence $\gamma_n\ldots\gamma_1$ from the top of the stack.
In that case we have that the word $u$ factorizes into $u=w_nr_n\ldots w_1r_1w_0u'$ where the $w_j$ are well-matched words, the $r_i$ are return symbols responsible for popping the $\gamma_i$, and $u'$ is a minimal suffix.
 \section{The Equivalence Problem for Deterministic B\"uchi Transducers} \label{sec:equiv-drat}
In this section we show that the equivalence for deterministic B\"uchi transducers is undecidable -- in difference to its analogue for relations over finite words proven by \cite{Bir73,HarjuK91}.
Our proof is derived from a recent construction by \cite{Boeh+} for proving that the equivalence problem for one-counter B\"uchi automata is undecidable.
We reduce the intersection emptiness problem for relations over finite words to the equivalence problem for deterministic B\"uchi transducers.
\begin{proposition}[\cite{RS59,berstel79}]\label{proposition_emptiness_intersection_undecidable}
	The intersection emptiness problem, asking for two binary relations given by deterministic transducers $\mathcal{A},\mathcal{B}$ whether $R_*(\mathcal{A})\cap R_*(\mathcal{B}) = \emptyset$ holds, is undecidable.
\end{proposition}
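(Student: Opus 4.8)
The plan is to reduce the Post Correspondence Problem (\textsc{PCP}), whose undecidability is classical, to the intersection emptiness problem for deterministic binary transducers. Fix a \textsc{PCP} instance consisting of pairs $(u_1,v_1),\dots,(u_n,v_n)$ of words over an alphabet $\Gamma$, and let $I=\underline{n}$ be an index alphabet disjoint from $\Gamma$. Consider the two morphisms $h,g\colon I^*\to\Gamma^*$ defined by $h(i)=u_i$ and $g(i)=v_i$. The idea is to encode the graphs of $h$ and $g$ (restricted to nonempty index words) as deterministic rational relations
\[
R_*(\mathcal{A})=\{(w,h(w))\mid w\in I^+\},\qquad
R_*(\mathcal{B})=\{(w,g(w))\mid w\in I^+\},
\]
with $\Sigma_1=I$ and $\Sigma_2=\Gamma$. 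Then $R_*(\mathcal{A})\cap R_*(\mathcal{B})\neq\emptyset$ holds if and only if there is some $w\in I^+$ with $h(w)=g(w)$, which is precisely a solution of the \textsc{PCP} instance. Since solvability of \textsc{PCP} is undecidable, so is the intersection emptiness problem.

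Next I would construct $\mathcal{A}$ explicitly (and $\mathcal{B}$ analogously). The state set is partitioned into a part $Q_1$ reading the first (index) component and a part $Q_2$ reading the second ($\Gamma$) component, so that the current state alone decides which tape is scanned, as required by the definition of a deterministic transducer. From an index-reading state, reading an index $i$ dispatches the computation into a fixed chain of $|u_i|$ states in $Q_2$ that read and verify the block $u_i$ letter by letter on the second tape; after this chain the automaton returns to an index-reading state. This is deterministic: in a $Q_2$-state exactly one letter of $\Gamma$ continues the run while any other letter leaves $\delta$ undefined, and the chain entered is determined by the index just read, so no ambiguity arises even if some $u_i$ is a prefix of another block. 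Reading off the two tapes alternately in this way, the projection of an accepting run onto the first tape spells out $w=i_1\cdots i_k$ while the projection onto the second spells out $u_{i_1}\cdots u_{i_k}=h(w)$, which is exactly the intended graph.

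It remains to respect the acceptance convention for deterministic transducers, under which $\mathcal{A}$ accepts $(w,h(w))$ by reading the padded tuple $(w\#,h(w)\#)$. I would let the block chains return to a dedicated index-reading state $q_1$ that, in addition to dispatching further indices, reads the trailing $\#$ on the first tape and then moves to a state in $Q_2$ reading the trailing $\#$ on the second tape before entering the unique accepting state; making only this state accepting and forbidding the $\#$-transition in the initial state $q_0$ enforces $w\in I^+$ and rules out the spurious pair $(\varepsilon,\varepsilon)$. The main point to get right is thus not a deep argument but the bookkeeping of the determinism discipline of the model: ensuring that the state partition into $Q_1$ and $Q_2$ together with the $\#$-padding faithfully captures the graph of the morphism with no extra accepting runs, so that the reduction is exact.
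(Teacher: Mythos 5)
Your proof is correct: the reduction from \textsc{PCP} via the graphs of the two morphisms $h$ and $g$, realized as deterministic (indeed sequential) transducers with the state partition and $\#$-padding handled as you describe, establishes the undecidability exactly. The paper itself gives no proof of this proposition --- it is stated as a classical result with citations to Rabin--Scott and Berstel --- and your argument is precisely the standard one underlying those references, so nothing further is needed.
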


\begin{theorem}\label{theorem_equivalence_undecidable}
	The equivalence problem for $\omega$-rational relations of arity at least two is undecidable for deterministic B\"uchi transducers.
\end{theorem}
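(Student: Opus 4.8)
The plan is to reduce the intersection emptiness problem for binary deterministic rational relations (Proposition~\ref{proposition_emptiness_intersection_undecidable}) to the equivalence problem for deterministic B\"uchi transducers. Since a relation of arity $k>2$ can realize a binary one by padding every further component with a single fixed letter, it suffices to produce, from two deterministic transducers $\mathcal{A},\mathcal{B}$ over $(\Sigma_1,\Sigma_2)$, two deterministic B\"uchi transducers $\mathcal{C},\mathcal{D}$ over suitably enriched alphabets such that $R_\omega(\mathcal{C})=R_\omega(\mathcal{D})$ holds \emph{if and only if} $R_*(\mathcal{A})\cap R_*(\mathcal{B})=\emptyset$. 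An equivalence test would then decide intersection emptiness, which is impossible.

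Every candidate witness $(u,v)\in\Sigma_1^*\times\Sigma_2^*$ is embedded into a tuple of infinite words by attaching a fixed infinite tail (say $\$^\omega$) to both components, so that the finite witness sits inside an $\omega$-input and the B\"uchi condition can be used to certify that this tail is well formed. Reading each component strictly from left to right, $\mathcal{C}$ and $\mathcal{D}$ may legitimately interleave the two tapes in different ways, as they are distinct automata. I would then design their acceptance behaviour so that the symmetric difference $R_\omega(\mathcal{C})\,\triangle\,R_\omega(\mathcal{D})$ is exactly the set of padded encodings of pairs in $R_*(\mathcal{A})\cap R_*(\mathcal{B})$; equivalence of $\mathcal{C}$ and $\mathcal{D}$ then amounts to this difference being empty, i.e.\ to the absence of any common witness.

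The main obstacle is that the test to be performed is conjunctive: a witness must be accepted by \emph{both} $\mathcal{A}$ and $\mathcal{B}$. A single deterministic transducer cannot check this on one presentation of $(u,v)$, because $\mathcal{A}$ and $\mathcal{B}$ in general demand incompatible interleavings of the two tapes, and synchronizing them would require buffering the unbounded lag between the two runs --- precisely the phenomenon that makes rational relations fail to be closed under intersection, and the reason why a careless construction collapses to the \emph{decidable} problems of emptiness or of finite-word equivalence. Following \cite{Boeh+}, I would escape this by letting the infinite input itself carry the synchronization information that a finite-state pass cannot retain, so that each machine has to verify only a \emph{local} consistency step at each position while the B\"uchi condition forces the global synchronization --- and hence the conjunction --- to be maintained forever. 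The delicate points, which I expect to be the heart of the argument, are to keep both $\mathcal{C}$ and $\mathcal{D}$ deterministic, to make them agree on every dishonest or non-witnessing encoding so that the symmetric difference collapses exactly onto the genuine witnesses, and to confirm that a plain B\"uchi condition (rather than a parity condition) already suffices, which yields the strongest form of the statement.
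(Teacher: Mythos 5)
You start from the right place --- the reduction from intersection emptiness of deterministic rational relations (Proposition~\ref{proposition_emptiness_intersection_undecidable}), which is exactly the paper's route, and you correctly name the central obstacle: a single deterministic transducer cannot verify membership in both $R$ and $S$ on one presentation of $(u,v)$, since the two machines demand incompatible interleavings of the tapes. But your proposal does not resolve that obstacle, and the encoding you commit to cannot work. If a witness $(u,v)$ is embedded as $(u\$^\omega,v\$^\omega)$ with a \emph{fixed} tail, then the tail carries no information: the run of a deterministic B\"uchi transducer on such an input eventually enters a loop determined by the state reached after consuming the finite prefix, so acceptance is a regular condition on $(u,v)$, the restriction of $R_\omega(\mathcal{C})$ to these encodings is again a deterministic rational relation over finite words, and equivalence of $\mathcal{C}$ and $\mathcal{D}$ on them collapses to the \emph{decidable} finite-word equivalence problem --- precisely the trap you warn against. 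The alternative you gesture at, letting the infinite input ``carry the synchronization information'' so that each machine checks only local consistency, is not a construction: it would make the input range over annotated certificates, and you would then have to make $\mathcal{C}$ and $\mathcal{D}$ agree on every dishonest annotation, which you flag as delicate but do not address. The conjunction problem is the whole proof, and it is left open.

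The paper's resolution is structurally different and is the missing idea. The input is an infinite sequence of candidate blocks $(u_0,v_0)(\#,\#)(u_1,v_1)(\#,\#)\cdots$, not a single padded witness. Both B\"uchi transducers are the disjoint union of (normalized copies of) $\mathcal{A}_R$ and $\mathcal{A}_S$, each block being checked by whichever subtransducer the run currently inhabits: on accepting a block the run returns to the initial state of the \emph{same} subtransducer, on rejecting it the run switches to the \emph{other} one. The two transducers differ only in their initial state ($q_0^R$ versus $q_0^S$), and the acceptance condition is made asymmetric: $q_a^R$, $q_r^R$, $q_r^S$ are accepting but $q_a^S$ is not. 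No single run ever checks the conjunction; instead, a pair in $R\cap S$ yields the input $(u\#,v\#)^\omega$ on which $\mathcal{B}_R$ loops through the accepting $q_a^R$ while $\mathcal{B}_S$ loops through the non-accepting $q_a^S$, and conversely, if the two transducers disagree on some input, then the rejecting run must eventually settle in $\mathcal{A}_S$ with all later blocks in $S$, the accepting run must eventually settle in $\mathcal{A}_R$ with all later blocks in $R$, and any sufficiently late block lies in $R\cap S$. Without this infinite-sequence-of-blocks encoding and the switching-plus-asymmetry mechanism, the reduction does not go through.
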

\begin{proof}
	We prove Theorem \ref{theorem_equivalence_undecidable} by providing a many-one-reduction from the emptiness intersection problem over finite relations to the equivalence problem for deterministic $\omega$-rational relations.
	Then the claim follows due to the undecidability of the emptiness intersection problem (\textit{cf.}\ Proposition \ref{proposition_emptiness_intersection_undecidable}).
	Furthermore, it suffices to provide the reduction for relations of arity $k=2$.
	For $k > 2$ the claim follows by adding dummy components to the relation.

	 Let $\mathcal{A}_R$, $\mathcal{A}_S$ be deterministic transducers defining binary relations $R$ and $S$ over finite words, respectively.
	 More precisely, we let
	 \[\mathcal{A}_R = (Q_R,\Sigma_1,\Sigma_2,q_0^R,\delta_R,F_R)\text{ and }\mathcal{A}_S = (Q_S,\Sigma_1,\Sigma_2,q_0^S,\delta_S,F_S).\]
	 We construct deterministic B\"uchi transducers $\mathcal{B}_R$ and $\mathcal{B}_S$ such that
	 \[R\cap S\neq\emptyset \Leftrightarrow R_\omega(\mathcal{B}_R)\neq R_\omega(\mathcal{B}_S). \]
	 That is, each tuple in $R\cap S$ induces a witness for $R_\omega(\mathcal{B}_R) \neq R_\omega(\mathcal{B}_S)$ and vice versa.
	
	 Recall that $\mathcal{A}_R$, $\mathcal{A}_S$ accept a tuple $(u,v)$ if there is an accepting run on $(u,v)(\#,\#)$ (where $\#$ is an endmarker symbol not contained in any alphabet involved).
	 Then it is easy to see that we can assume that the deterministic transducers $\mathcal{A}_R$ and $\mathcal{A}_S$ are in normal form according to \cite{Sak09}: the initial states $q_0^R$ and $q_0^S$ do not have incoming transitions and there are unique accepting states $q_a^R$ and $q_a^S$ as well as rejecting states $q_r^R$ and $q_r^S$ that
	 \begin{enumerate}
	 	\item are entered only by transitions labeled $\#$, and
	 	\item have no outgoing transitions.
	 \end{enumerate}
	 That is, we have that $F_R = \{q_a^R\}$ and upon the end of any run $\mathcal{A}_R$ is either in state $q_a^R$ or $q_r^R \neq q_a^R$ after reading the endmarker $\#$ in both components.
	 Analogously, the same applies for $\mathcal{A}_S$.

	The construction of $\mathcal{B}_R$ and $\mathcal{B}_S$ is illustrated in Figure \ref{figure_equivalence_buechi_undecidable_construction}.
	\begin{figure}
		\centering
		\resizebox{\textwidth}{!}{
			\begin{tikzpicture}[->,>=stealth',shorten >=1pt,auto,node distance=2.6cm,
			semithick,align=center]
			\tikzstyle{automaton-box}=[draw=black, fill=none, thick,
			rectangle, rounded corners, minimum width=5cm, minimum height=3cm,text width=5cm,node distance=10cm, align=left, text depth=3cm]
			
			\node[automaton-box] (AR) [] {$\mathcal{A}_R$};
			\node[state, initial, initial text={$\mathcal{B}_R$}] (q0R) [right=0.3cm of AR.west] {$q_0^R$};
			\node[state, accepting] (qaR) [left=.5cm of AR.east,yshift=.75cm] {$q_a^R$};
			\node[state, accepting] (qrR) [left=.5cm of AR.east,yshift=-.75cm] {$q_r^R$};
			\node[] (preqaR) [left=1cm of qaR] {$\ldots$};
			\node[] (preqrR) [left=1cm of qrR] {$\ldots$};
			\node[] (postq0R) [right=.75cm of q0R] {$\ldots$};
			
			\node[automaton-box, node distance=7cm] (AS) [right of=AR] {$\mathcal{A}_S$};
			\node[state, initial, initial text={$\mathcal{B}_S$}] (q0S) [right=0.3cm of AS.west] {$q_0^S$};
			\node[state, dashed] (qaS) [left=.5cm of AS.east,yshift=.75cm] {$q_a^S$};
			\node[state, accepting] (qrS) [left=.5cm of AS.east,yshift=-.75cm] {$q_r^S$};
			\node[] (preqaS) [left=1cm of qaS] {$\ldots$};
			\node[] (preqrS) [left=1cm of qrS] {$\ldots$};
			\node[] (postq0S) [right=.75cm of q0S] {$\ldots$};
			
			\path
			(preqaR) edge [] node{$\#/\#$} (qaR)
			(preqrR) edge [] node{$\#/\#$} (qrR)
			(preqaS) edge [] node{$\#/\#$} (qaS)
			(preqrS) edge [] node{$\#/\#$} (qrS)
			(q0R) edge [] node{} (postq0R)
			(q0S) edge [] node{} (postq0S)
			
			(qaR) edge [above, pos=.8] node{$\varepsilon/\varepsilon$} (q0R.north east)
			(qrR) edge [bend right] node{$\varepsilon/\varepsilon$} (q0S.south west)
			(qaS) edge [above, pos=.8] node{$\varepsilon/\varepsilon$} (q0S.north east)
			(qrS) edge [bend left=25,above,pos=.45] node{$\varepsilon/\varepsilon$} (q0R.south east)
			;
			\end{tikzpicture}
		}
		\vspace{-1cm}		\caption{Illustration of the transducers $\mathcal{B}_R$, $\mathcal{B}_S$.
			The labels  $\#/\#$ are just used for comprehensibility.
			In the formal construction the $\#$ symbols are read in succession and the transducers may even read other symbols between them (but only in the component where no $\#$ has been read yet).}
		\label{figure_equivalence_buechi_undecidable_construction}
	\end{figure}
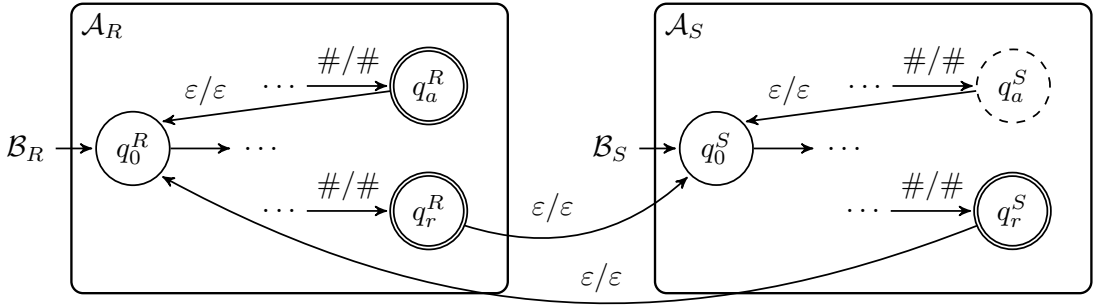
	
	 Both B\"uchi transducers are almost the same except for the initial state: both consist of the union of the transition structures of $\mathcal{A}_R$ and $\mathcal{A}_S$ complemented by transitions labeled $\varepsilon/\varepsilon$ from $q_a^X$ to $q_0^X$ and $q_r^X$ to $q_0^{Y}$ for $X,Y\in\{R,S\}$, $X\neq Y$.
	That is, upon reaching a rejecting state of $\mathcal{A}_R$ or $\mathcal{A}_S$ the new transducers will switch to the initial state of the other subtransducer\footnote{For our purpose, a subtransducer of $\mathcal{B}$ is a transducer obtained from $\mathcal{B}$ by removing states and transitions. Also, accepting states do not have to be preserved.
	In particular, $\mathcal{A}_R$ and $\mathcal{A}_S$ are subtransducers of $\mathcal{B}_R$ and $\mathcal{B}_S$ by definition.} and upon reaching an accepting state they will return to the initial state of the current subtransducer.
	The new accepting states are $q_a^R,q_r^R,q_r^S$ .
	Note that $q_a^S$ is not accepting introducing an asymmetry.
	Finally, the initial state of $\mathcal{B}_X$ is $q_0^X$.
	Formally, we set
	\[\mathcal{B}_R := (Q_\mathcal{B},\Sigma_1',\Sigma_2',q_0^R,\delta_\mathcal{B},F_\mathcal{B})\text{ and }\mathcal{B}_S := (Q_\mathcal{B},\Sigma_1',\Sigma_2',q_0^S,\delta_\mathcal{B},F_\mathcal{B})\] where
	\begin{itemize}
		\item $Q_\mathcal{B} := Q_R ~\dot{\cup}~ Q_S$,
		\item $\Sigma_i' := \Sigma_i \cup \{\#\}$, $i\in\{1,2\}$, and
		\item $F_\mathcal{B} := \{q_a^R, q_r^R, q_r^S\}$.
	\end{itemize}
	The transition relation $\delta_\mathcal{B}$ is defined as follows:
	\[\delta_\mathcal{B}(q,a) := \begin{cases}
	\delta_R(q,a),\quad &q\in Q_R\setminus\{q_a^R,q_r^R\}\\
	\delta_S(q,a),\quad &q\in Q_S\setminus\{q_a^S,q_r^S\}\\
	\delta_S(q_0^S,a),\quad &q\in \{q_r^R, q_a^S\}\\
	\delta_R(q_0^R,a),\quad &q\in \{q_r^S, q_a^R\}\\
	\end{cases} \]

	Further on, we show the correctness of our construction.
	Pick a tuple $(u,v)$ in $R\cap S$.
	We have to show that $R_\omega(\mathcal{B}_R)\neq R_\omega(\mathcal{B}_S)$.
	Then the unique runs of $\mathcal{A}_R$ and $\mathcal{A}_S$ on $(u\#,v\#)$ end in $q_a^R$ and $q_a^S$, respectively:
	\[\mathcal{A}_R:~q_0^R\xrightarrow{u\#/v\#} q_a^R\text{ and } \mathcal{A}_S:~q_0^S\xrightarrow{u\#/v\#} q_a^S.\]
	Recall that $q_a^R$ and $q_a^S$ have precisely the same transitions as $q_0^R$ and $q_0^S$, respectively. Thus, they have the same behavior.
	Hence, the unique runs of $\mathcal{B}_R$ and $\mathcal{B}_S$ on $w := (u\#,v\#)^\omega$, which are completely determined by $\mathcal{A}_R$ and $\mathcal{A}_S$, have the following shape:
	\[\mathcal{B}_R:~q_0^R\xrightarrow{u\#/v\#} \left(q_a^R \xrightarrow{u\#/v\#} q_a^R\right)^\omega \text{ and } \mathcal{B}_S:~q_0^S\xrightarrow{u\#/v\#} \left(q_a^S \xrightarrow{u\#/v\#} q_a^S\right)^\omega.\]
	Since $q_a^R\in F_\mathcal{B}$, it follows that $w\in R_\omega(\mathcal{B}_R)$.
	On the other hand, the run of $\mathcal{B}_S$ stays completely in the $\mathcal{A}_S$ subtransducer and $q_r^S$ does not occur in it. Assuming otherwise, $\mathcal{A}_S$ would reject $(u,v)$ which would be a contradiction.
	But then no state in $F_\mathcal{B}$ occurs in the run of $\mathcal{B}_S$ (recall that in contrast to $q_a^R$ the state $q_a^S$ is not in $F_\mathcal{B}$).
	Hence, $w\notin R_\omega(\mathcal{B}_S)$.
	Therefore, the induced unique run of $\mathcal{B}_R$ on $(u\#,v\#)^\omega$ is accepting while the unique run of $\mathcal{B}_S$ is rejecting.
	Thus, $w\in R_\omega(\mathcal{B}_R)\setminus R_\omega(\mathcal{B}_S)$ and we can conclude that $R_\omega(\mathcal{B}_R)\neq R_\omega(\mathcal{B}_S)$ holds.

	For the other direction, suppose $R_\omega(\mathcal{B}_R)\neq R_\omega(\mathcal{B}_S)$ holds.
	Then there is a pair of infinite words $(u,v)$ that is rejected by one of the transducers,  and accepted by the other.
	Recall that the accepting states of both $\mathcal{B}_R$ and $\mathcal{B}_S$ can only be entered by reading the $\#$-symbol in both components. 
    Hence, both components of $(u,v)$  have to contain infinitely often $\#$, since one of the B\"uchi transducers accepts.
    More precisely, the pair $(u,v)$ can be written as
    \[
    (u,v) = (u_0,v_0)(\#,\#)(u_1,v_1)(\#,\#)\ldots \in R_\omega(\mathcal{B}_R)\triangle R_\omega(\mathcal{B}_S),\text{ with } (u_i,v_i)\in\Sigma^*_1\times\Sigma_2^*~\forall i\in\mathbb{N}.
    \]
    We claim that there is a $p\in\mathbb{N}$ such that $(u_p,v_p)\in R\cap S$.
    Then $R\cap S\neq\emptyset$ follows immediately.
  
  	Let $\rho_R$ and $\rho_S$ be the unique runs of $\mathcal{B}_R$ and $\mathcal{B}_S$, respectively.
  	W.l.o.g.\ assume that $\mathcal{B}_R$ rejects $(u,v)$ while $\mathcal{B}_S$ accepts it.
  	In the other case the reasoning is exactly the same with $\mathcal{B}_R$ and $\mathcal{B}_S$ exchanged.
  
   Since $\rho_R$ is not accepting, the states in $F_\mathcal{B}$ occur only finitely often in $\rho_R$. On the other hand, the endmarker $\#$ is read infinitely often. Thus, states in $F_\mathcal{B}\cup \{q_a^S\}$ occur infinitely often because they are entered if and only if endmarkers have been read in both components.
   It follows that $\rho_R$ stays in the $\mathcal{A}_S$ subtransducer from some point on and $q_a^S$ occurs infinitely often in $\rho_R$. To be more precise, $q_a^S$ occurs precisely after reading the endmarker in both components and, afterwards, the run continues in $\mathcal{A}_S$.
   All in all, the run is determined by run fragments
   \[q_0^R \xrightarrow{u_0\#u_1\#\ldots u_{j-1}\#/v_0\#v_1\#\ldots v_{j-1}\#} q_0^S\xrightarrow{u_{j}\#/v_{j}\#} q_a^S\text{ and }  q_a^S\xrightarrow{u_{i}\#/v_{i}\#} q_a^S\]
   for some $j\in\mathbb{N}$ and all $i> j$.
   Hence, it holds that
   \begin{align}
   \exists j\geq 0~\forall i\geq j: (u_i,v_i)\in S\tag{$\star$}\label{equivalence_buechi_undecidable_proof_star}
   \end{align}
   because $\mathcal{A}_S$ and $\mathcal{B}_R$ are deterministic and, by construction, $q_a^S$ imitates $q_0^S$.
   
   Further on, it suffices to show that $\rho_S$ (the unique accepting run of $\mathcal{B}_S$) stays in $\mathcal{A}_R$ from some point on.
   Then it follows analogously to the case for $\rho_R$ that \eqref{equivalence_buechi_undecidable_proof_star} holds for $R$ and, thus, we have that $(u_p,v_p)\in R\cap S$ for some $p$.
   First of all, $\rho_S$ does not stay in $\mathcal{A}_S$ from some point on. Otherwise, it would not be accepting.
   Assume for the sake of contradiction that the run $\rho_S$ switches infinitely often between the two subtransducers, \textit{i.e.}\ both $q_r^S$ and $q_r^R$ occur infinitely often in the run.
   It follows that $\rho_S$ contains infinitely many fragments of the form
   \[q_r^R\xrightarrow{u_i\#/v_i\#} q_r^S\text{ or }q_a^S\xrightarrow{u_i\#/v_i\#} q_r^S \]
   where all intermediate states are in $Q_S$.
   Thus, because $q_r^R$ and $q_a^S$ behave in the same way as $q_0^S$ by construction and $\mathcal{A}_S$ is deterministic, we have that
   \[ \forall j\geq 0~\exists i\geq j: (u_i,v_i)\notin S. \]
   But this is a direct contradiction to \eqref{equivalence_buechi_undecidable_proof_star}.
   Hence, $\rho_S$ stays in the $\mathcal{A}_R$ subtransducer from some point on, and similarly to the case for $\rho_R$ above, it follows that
   \begin{align}
   \exists j\geq 0~\forall i\geq j: (u_i,v_i)\in R.\tag{$\star\star$}\label{equivalence_buechi_undecidable_proof_star2}
   \end{align}
   Let $p\in\mathbb{N}$ the maximum of the existentially quantified $j$'s in \eqref{equivalence_buechi_undecidable_proof_star} and \eqref{equivalence_buechi_undecidable_proof_star2}.
   Then we have that $(u_p,v_p)\in R\cap S\neq \emptyset$.

   All in all, we have shown that $R\cap S\neq\emptyset \Leftrightarrow R_\omega(\mathcal{B}_R)\neq R_\omega(\mathcal{B}_S)$ holds and, thus, the correctness of our reduction.
\end{proof}

We note that our reduction is rather generic an could be applied to other classes of automata for which the intersection emptiness problem on finite words is undecidable.
 \section{Deciding Recognizability of \texorpdfstring{$\boldsymbol{\omega}$}{omega}-Automatic Relations} \label{sec:omega-rec-in-aut}
Our aim in this section is to decide $\omega$-recognizability of $\omega$-automatic relations in doubly exponential time.
That is, given a deterministic synchronous transducer, decide whether it defines an $\omega$-recognizable relation.
The proof approach is based on an algorithm for relations over finite words given by \cite{CCG06} which we briefly discuss in Subsection \ref{subsec:omega-rec-in-aut-revision}.
Afterwards, in Subsection \ref{subsec:omega-rec-in-aut-infinite}, we present our main result of this section.
In Subsection \ref{subsec:omega-rec-in-aut-slenderness-vs-finiteness}, we comment on a connection between \cite{CCG06}'s original proof and our (alternative) approach for infinite words.

\subsection{Revision: Deciding Recognizability of Automatic Relations}\label{subsec:omega-rec-in-aut-revision}
Let $R$ be an \maybeomega{}automatic relation of arity $k$.
For each $j\leq k$ we define the equivalence relation
\begin{multline*}
	E_j := \{((u_1,\ldots,u_j),(v_1,\ldots,v_j))\mid \forall w_{j+1},\ldots,w_k:\\
	(u_1,\ldots,u_j,w_{j+1},\ldots,w_k)\in R \Leftrightarrow  (v_1,\ldots,v_j,w_{j+1},\ldots,w_k)\in R \}.
\end{multline*}
Then the key to decide \maybeomega{}recognizability is the following result which has been proven by \cite{CCG06} for relations over finite words and is easily extensible to infinite words:
\begin{lemma}[\cite{CCG06}]
	\label{lemma_restrictions_recognizable_finite_implies_rec_omega_automatic}
	Let $R$ be an \maybeomega{}automatic relation of arity $k$.
	Then for all $1\leq j\leq k$ the equivalence relation $E_j$ has finite index if and only if $R$ is \maybeomega{}recognizable.
\end{lemma}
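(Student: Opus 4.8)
The plan is to prove the two directions of the equivalence separately, using the characterization of \maybeomega{}recognizable relations as finite unions of products of \maybeomega{}regular languages. The key technical device is the \emph{index} (number of equivalence classes) of each $E_j$, and the central observation is that these equivalence relations capture exactly the distinctions a recognizable relation is allowed to make on prefixes of the argument tuple.

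For the direction ``$R$ is \maybeomega{}recognizable $\Rightarrow$ each $E_j$ has finite index,'' first I would write $R=\bigcup_{i=1}^\ell L_{i,1}\times\cdots\times L_{i,k}$ with each $L_{i,m}$ an \maybeomega{}regular language. Fix $j\leq k$. The idea is that whether a tuple $(u_1,\ldots,u_j,w_{j+1},\ldots,w_k)$ lies in $R$ depends on the prefix $(u_1,\ldots,u_j)$ only through the set of indices $i$ for which $(u_1,\ldots,u_j)\in L_{i,1}\times\cdots\times L_{i,j}$; that is, through the \emph{membership pattern} $P(u_1,\ldots,u_j):=\{\,i\in\underline{\ell}\mid u_m\in L_{i,m}\text{ for all }m\leq j\,\}$. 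I would argue that if two $j$-tuples have the same pattern, then they are $E_j$-equivalent, because for any completion $(w_{j+1},\ldots,w_k)$ membership in $R$ is determined by whether some $i$ in the common pattern also satisfies $w_m\in L_{i,m}$ for all $m>j$. Since there are at most $2^\ell$ possible patterns, $E_j$ has index at most $2^\ell$, hence finite.

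The converse direction ``each $E_j$ has finite index $\Rightarrow$ $R$ is \maybeomega{}recognizable'' is the substantive part. Here I would proceed by induction on $k$, peeling off one component at a time. The finite index of $E_k$ (equivalently, the full equivalence) lets one partition the full argument tuples into finitely many classes; but the real work is to combine the finiteness of $E_1,\ldots,E_{k-1}$ to build the product decomposition. Concretely, since $E_{k-1}$ has finite index, the tuples $(u_1,\ldots,u_{k-1})$ fall into finitely many classes $C_1,\ldots,C_s$, and for each fixed class the set of $u_k$ that complete it to a member of $R$ depends only on the class; I would want to show that each class $C_t$ is itself a recognizable $(k-1)$-ary relation and each associated ``fiber'' language of $u_k$ is \maybeomega{}regular, so that $R=\bigcup_t C_t\times (\text{fiber}_t)$, and then recurse on $C_t$ using the fact that the restrictions of $E_1,\ldots,E_{k-2}$ to $C_t$ still have finite index. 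One must be careful that the equivalence classes of $E_j$ are themselves \maybeomega{}regular (or recognizable of the appropriate arity); this follows from $R$ being \maybeomega{}automatic, since $E_j$ is then definable by a synchronous transducer and hence automatic, and an automatic relation of finite index decomposes into finitely many recognizable classes.

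The \textbf{main obstacle} I expect is precisely this last point: showing that when an \maybeomega{}automatic equivalence relation has finite index, each of its classes is \maybeomega{}regular (in the unary case) or \maybeomega{}recognizable (in higher arity), and that the finite-index hypotheses on the different $E_j$ interact coherently so that the inductive decomposition actually produces a \emph{finite} union of genuine \emph{products}. The interaction between the several index conditions --- rather than just the top-level one --- is what forces recognizability rather than mere automaticity, and pinning down why all of $E_1,\ldots,E_k$ (and not $E_k$ alone) are needed is the crux of the argument. In the infinite-word setting the extra care is to ensure the classes are \emph{$\omega$-regular}, which I would obtain by noting that saturating an \maybeomega{}automatic relation by a finite-index automatic equivalence keeps it inside the \maybeomega{}automatic class, whose unary fragment is exactly the \maybeomega{}regular languages.
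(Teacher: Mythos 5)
Your proposal is correct and follows the same overall strategy as the paper's proof: a finiteness-of-patterns argument for the easy direction, and an induction on the arity $k$ for the converse that decomposes $R$ into a finite union of products of an equivalence class with its common fiber. The differences are dual rather than substantive. In the forward direction you classify $j$-tuples by their membership pattern $\{i\in\underline{\ell}\mid u_m\in L_{i,m}\text{ for all }m\leq j\}$, giving index at most $2^\ell$; the paper instead picks one representative per product $L_{i,1}\times\cdots\times L_{i,j}$, and your version is in fact the more careful one, since a tuple lying in several of the products need not be equivalent to any single such representative. In the converse direction you peel off the \emph{last} component, writing $R=\bigcup_t C_t\times(\text{fiber}_t)$ with $C_t$ the classes of $E_{k-1}$ and the unary fibers directly $\omega$-regular, and you recurse on the $(k-1)$-ary classes $C_t$; the paper peels off the \emph{first} component, writing $R=\bigcup_i [u_i]_{E_1}\times R|_{u_i}$ with the unary classes $[u_i]_{E_1}$ directly $\omega$-regular, and recurses on the $(k-1)$-ary fibers $R|_{u_i}$. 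Both recursions are justified the same way (the equivalences $E_j$ of $R$ refine the corresponding equivalences of the $(k-1)$-ary relation one recurses on), and both need that non-empty \maybeomega{}automatic relations contain (ultimately periodic) tuples whose singletons are automatic, so that the classes and fibers come out \maybeomega{}regular. The one point to repair in your write-up is the closing assertion that ``an automatic relation of finite index decomposes into finitely many recognizable classes'': for arity above one this is precisely the statement being proved, so it must be obtained from the induction hypothesis applied to $C_t$ rather than cited as a known fact.
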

Here we shall adapt the proof of \cite{CCG06} to $\omega$-automatic relations.

\begin{proof}
	\underline{$\Rightarrow$}: Assume $R\neq \emptyset$ is $\omega$-recognizable. Then we have that $R = \bigcup_{i=1}^\ell L_{i,1}\times\ldots\times L_{i,k}$ for some $\omega$-regular languages $L_{i,n}\subseteq \Sigma^\omega_n$.
	Pick words $u_i\in L_{i,1}$ for all $i\in\{1,\ldots,\ell \}$.
	Then each word in $\Sigma_1^\omega$ is either equivalent w.r.t.\ $E_1$ to one of these $u_i$ or belongs to the equivalence class $\Sigma_1^\omega\setminus\dom(R)$.
	Thus, $E_1$ has finite index.
	The proof for $E_j$, $j>1$ works similarly by picking tuples $(u_{i,1},\ldots,u_{i,j})\in L_{i,1}\times\ldots\times L_{i,j}$.
	
	\underline{$\Leftarrow$}: Assume all $E_j$ have finite index.
	We show the claim by induction over $k$.
	For the base case of a $1$-ary $\omega$-rational relation the claim is trivial.
	Suppose that $k\geq 2$.
	We prove that $R$ can be written as finite union of direct products of $\omega$-regular languages.
	Then it follows by definition that $R$ is recognizable.
	Recall that $\dom(R)$ is $\omega$-regular.
	Hence, we can pick an ultimately periodic word $u_1\in \dom(R)$ (\textit{cf.}\ \cite{Buec62}).
	Clearly, $\{u_1\}\times \Sigma^\omega_1$ is $\omega$-automatic.
	Furthermore, since $\omega$-automatic relations are closed under intersection (and $E_1$ itself is $\omega$-automatic) it holds that
	\[E_1\cap(\{u_1\}\times\Sigma_1^\omega) = \{u_1\}\times \{w\in \Sigma_1^\omega \mid (u_1,w)\in E_1 \} = \{u_1\}\times [u_1]_{E_1} \]
	is $\omega$-automatic.
	In particular, it follows that the equivalence class of $u_1$, denoted $[u_1]_{E_1}$, is $\omega$-regular.
	Moreover, each $\{u_1 \}\times \Sigma^\omega_2\times\ldots\times\Sigma^\omega_k$ is $\omega$-recognizable.
	Hence, the relation
	\[R|_{u_1} := \{(x_2,\ldots,x_k)\mid (u_1,x_2,\ldots,x_k)\in R \}\subseteq\Sigma^\omega_2\times\ldots\times\Sigma^\omega_k\]
	is $\omega$-automatic.
	By iteratively applying this reasoning to $\dom(R)\setminus [u_1]_{E_1}$ we obtain a sequence $u_1,u_2,\ldots$ such that all $u_i$ are pairwise non-equivalent ultimately periodic representatives.
	Since $E_1$ has finite index by assumption, this sequence is finite.
	Observe that this implies that $R$ can be written as the finite union $R=\bigcup_{i=1}^\ell [u_i]_{E_1}\times R|_{u_i}$.
	Recall that the $R|_{u_i}$ are $\omega$-automatic.
	Hence, it suffices to show that for each $R|_{u_1}$ the induced equivalence relations have all finite index.
	Then the claim follows by the induction hypothesis.
	For this purpose, observe that an equivalence class of some $E_j$ is solely determined by a set of possible outputs.
	More precisely, for $T:=R|_{u_1}$ and all $2\leq j\leq k$ we have that
	\begin{multline*}
	T|_{v_2,\ldots,v_j} = \{(v_{j+1},\ldots,v_k)\mid (v_2,\ldots,v_k)\in T\}\\
	= \{(v_{j+1},\ldots,v_k)\mid (u_i,v_2,\ldots,v_k)\in R\} = R|_{u_i,v_2,\ldots,v_j}.
	\end{multline*}
	Therefore, it follows by the definition of the $E_j$ that for all $R|_{u_i}$ the induced equivalence relations have all finite index.
\end{proof}

Based on that lemma, the recognizability test presented by \cite{CCG06} proceeds as follows. It is shown that each $E_j$ is an automatic equivalence relation by constructing a synchronous transducer for $E_j$. It remains to decide for an automatic equivalence relation whether it is of finite index. This can be achieved by constructing a synchronous transducer that accepts a set of representatives of the equivalence classes of $E_j$ (based on a length-lexicographic ordering). Then $E_j$ has finite index if and only if this set of representatives is finite, which can be decided in polynomial time.

It is unclear whether this approach can be used to obtain an exponential time upper bound for the recognizability test.\footnote{\cite{CCG06} mainly focused on decidability, and they agree that the proof as presented in that paper does not yield an exponential time upper bound.} One can construct a family $(R_n)_{n\in\mathbb{N}}$ of automatic binary relations $R_n$ defined by a deterministic synchronous transducer of size $\mathcal{O}(n^2)$ such that every synchronous transducer defining $E_1$ has size (at least) exponential in $n$ (\textit{cf.}\ Example \ref{example_rec_in_aut_lower_bound_construction_E}).
It is unclear whether it is possible to decide in polynomial time for such a transducer whether the equivalence relation it defines is of finite index. For this reason, we revisit the problem for finite words in Section~\ref{sec:rec-in-aut} and provide an exponential time upper bound for binary relations using a different approach.

\begin{example}\label{example_rec_in_aut_lower_bound_construction_E}
	We construct for any $n\in\mathbb{N}$ an automatic (binary) relation $R_n$ in terms of a deterministic synchronous transducer of size $\mathcal{O}(n^2)$ such that every synchronous transducer defining $E_1$ has size (at least) exponential in $n$.
	Let $n\in \mathbb{N}$ and $\Sigma := \{0,1\}$. Consider the relation
	\[R_n := \{(u\#v,t)\mid u,v,t\in\Sigma^n,~|t|_1 \leq 1,~\forall 0\leq i<n:t[i] = 1\rightarrow u[i] = v[i] \}\]
	That is, $R_n$ consists of tuples $(u\#v,t)$ where $u,v$ and $t$ are bit strings of length $n$ and $t$ contains at most one $1$.
	Moreover, the occurrence of this $1$ (if present) marks a position where $u$ and $v$ are equal.
	
	A deterministic synchronous transducer $\mathcal{A}_n$ can define $R_n$ with the help of two finite counters ranging over $\{0,\ldots,n\}$ as follows: the first counter measures the length of $u$ and $t$ and the second counter determines the position of the $1$-symbol in the second component $t$ (if present).
	Initially, both counters are increased by one on each transition of $\mathcal{A}_n$.
	Whether the second component is malformed (\textit{i.e.}\ contains two $1$'s), is verifiable with a single control bit in the state space.
	Also, $\mathcal{A}_n$ can remember the bit indicated by a $1$ in the second component with a single bit in the state space and stop the second counter (containing the correct position of the $1$).
	Up on reaching the separator $\#$ in the first component, $\mathcal{A}_n$ resets the first counter (assuming the input has been well-formed so far; otherwise, it rejects) and utilizes it to verify the length of $v$.
	Moreover, it decreases the second counter on each transition. If $0$ is reached it has found the position $i$ in $v$ marked by the second component and can compare $v[i]$ with $u[i]$ which has been saved in the state space.
	Finally, the case that the second counter does not stop (\textit{i.e.}\ there is no $1$ in the second component) can be handled with another control bit in the state space.
	$\mathcal{A}_n$ has to store both counters plus finitely many control/memory bits (whose number is independent of $n$) in the state space.
	Hence, $R_n$ is definable by a deterministic synchronous transducer of size $\mathcal{O}(n^2)$.
	
	It remains to show that every transducer defining $E_1$ has size exponential in $n$. For each input $u\#v\in \Sigma^n\{\#\}\Sigma^n$ it holds that\footnote{$u\oplus v$ shall denote the bitwise XOR operation on $u$ and $v$.}
	\[R|_{u\#v} = R(u\#v) = \{0^{i}10^{n-i-1}\mid 0\leq i< n,~ (u\oplus v)[i] = 0 \}\cup \{0^n\}.\]
	It suffices to show that every transducer recognizing
	\begin{multline*}
	E_1':=E_1\cap\big(\Sigma^n\{\#\}\Sigma^n\big)^2 =\\ \{(u\#v,u'\#v')\mid u,v,u',v'\in\Sigma^n\wedge\forall~ 0\leq i< n:  (u\oplus v)[i] = (u'\oplus v')[i] \}
	\end{multline*}
	has size exponential in $n$.
	This holds because $\big(\Sigma^n\{\#\}\Sigma^n\big)^2$ is definable by  a deterministic synchronous transducer of size $\mathcal{O}(n)$.
	Note that the existence of a synchronous transducer of sub-exponential size for $E_1$ would imply the existence of one for the intersection.
	Furthermore, observe that we cut out all pairs of malformed inputs $x\in (\Sigma\cup\{\#\})^*$ --- \textit{i.e.}\ with $R(x) = \emptyset$.
	For the sake of contradiction, suppose there is synchronous transducer $\mathcal{B}_n = (Q,\Sigma,\Sigma,q_0,\Delta,F)$ defining $E_1'$ with $|Q|<2^n$.
	By the pigeonhole principle there are $i\in\underline{n}$, $p\in Q$ and $u,u',x,x',v,v',y,y'\in\Sigma^n$ such that $(u\oplus u')[i] \neq (x\oplus x')[i]$ and
	\[\mathcal{B}_n: q_0\xrightarrow{u\#/u'\#}p\xrightarrow{v/v'}F\text{ as well as }\mathcal{B}_n: q_0\xrightarrow{x\#/x'\#}p\xrightarrow{y/y'}F. \]
	In particular, $u\#v$ and $u'\#v'$ as well as $x\#y$ and $x'\#y'$ are equivalent.
	Moreover, $x\#v$ and $x'\#v'$ are equivalent, too, since $\mathcal{B}$ permits an accepting run.
	W.l.o.g. we have that $(u\oplus u')[i] = 0$ and $(x\oplus x')[i] = 1$.
	That is, $u[i] = u'[i]$ and $x[i] \neq x'[i]$.
	Thus, $(u\oplus v)[i] = (u'\oplus v)[i] = (u'\oplus v')[i]$.
	We deduce $v[i] = v'[i]$.
	Then, similarly to the previous reasoning, we have that $(x\oplus v)[i] = (x\oplus v')[i] = (x'\oplus v')[i]$.
	But this yields $x[i] = x'[i]$ which is a contradiction to $(x\oplus x')[i] = 1$.
	Thus, every deterministic transducer defining $E_j'$ has size at least $2^n$ which proves our claim.
\end{example}

\subsection{From Indices of Equivalence Relations to Slenderness of Languages}\label{subsec:omega-rec-in-aut-infinite}
We now turn to the case of infinite words. The relation $E_j$ can be shown to be $\omega$-automatic, similarly to the case of finite words. However, it is not possible, in general, for a given $\omega$-automatic relation to define a set of representatives by means of a synchronous transducer, as shown by \cite{KL06}: There exists a binary $\omega$-automatic equivalence relation such that there is no $\omega$-regular set of representatives of the equivalence classes.

Here is how we proceed instead. The first step is similar to the approach of \cite{CCG06}:
We construct  synchronous transducers for the complements $\overline{E_j}$ of the equivalence relations $E_j$ in polynomial time (starting from a deterministic transducer for $R$).
We then provide a decision procedure to decide for a given transducer for $\overline{E_j}$ whether the index of $E_j$ is finite in doubly exponential time. This procedure is based on an encoding of ultimately periodic words by finite words.

First observe that a tuple in $\Sigma_1^\omega\times\ldots\times\Sigma_j^\omega$ can be seen as an infinite word over $\Sigma = \Sigma_1\times\ldots\times\Sigma_j$ (this is not the case for tuples over finite words, since the words may be of different length).
Hence, we can view each $E_j$ as a binary equivalence relation $E\subseteq \Sigma^\omega\times\Sigma^\omega$. For this reason, we only work with binary relations in the following.

We start by showing that for deciding whether $E$ has finite index it suffices to consider sets of ultimately periodic representatives $u_iv_i^\omega$ such that the periods $|v_i|$ and prefix lengths $|u_i|$ are the same, respectively, for all the representatives (Lemma~\ref{rec_in_aut_omega_lemma_2}).
In the second step $E$ is transformed into an automatic equivalence relation $E_\#$ over finite words using encodings of ultimately periodic words as finite words, where a word $uv^\omega$ is encoded by $u\#v$ as done by \cite{CNP93} (Definition~\ref{definition_E_finite_A} and Lemma~\ref{rec_in_aut_omega_lemma_3}).
Since $E_\#$ is an automatic relation over finite words, it is possible to obtain a finite automaton for a set of representatives of $E_\#$.
Finally, we reduce the decision problem whether $E$ has finite index to deciding slenderness (see Definition~\ref{def:slender} below) for polynomially many languages derived from the set of representatives of $E_\#$ (Lemmas~\ref{rec_in_aut_omega_lemma_4} \& \ref{rec_in_aut_omega_lemma_5}).
Therefore, by proving that deciding slenderness for (nondeterministic) finite automata is \textsc{NL}-complete (Lemma~\ref{lemma_slenderness_decidable}) we obtain our result.

\begin{definition}[\cite{PS95}] \label{def:slender}
	A language $L\subset\Sigma^*$ is slender if there exists a $k<\omega$ such that for all $\ell<\omega$ it holds that $|L\cap \Sigma^\ell|< k$.
\end{definition}

We now formalize the ideas sketched above.

\begin{lemma}\label{rec_in_aut_omega_lemma_2}
	Let $E\subseteq \Sigma^\omega\times\Sigma^\omega$ be an $\omega$-automatic equivalence relation.
	Then $E$ has not finite index if and only if for each $k> 0$ there are
	\[u_1,\ldots,u_k,~v_1,\ldots v_k\in \Sigma^* \text{ with } |u_i| = |u_j| \text{ and } |v_i|=|v_j|\text{ for all } 1\leq i\leq j\leq k  \]
	such that $(u_iv_i^\omega,u_jv_j^\omega)\notin E$ for all $1\leq i < j\leq k$.
\end{lemma}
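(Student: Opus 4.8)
The plan is to prove the two directions separately, with the backward implication immediate and the forward implication carrying all the work (in two stages). For the backward direction, suppose that for every $k$ we can find $u_1,\dots,u_k,v_1,\dots,v_k$ as stated. Discarding the length constraints entirely, the $k$ words $u_1v_1^\omega,\dots,u_kv_k^\omega$ are pairwise $E$-inequivalent, so $E$ has at least $k$ classes; as $k$ is arbitrary, $E$ has infinite index. The equal-length conditions are irrelevant here. For the forward direction I would first reduce ``infinite index'' to the existence of arbitrarily many pairwise inequivalent \emph{ultimately periodic} representatives, and then normalize any finite selection of them to have common prefix and period lengths. Throughout I use that, since $E$ is $\omega$-automatic, it is recognized by a synchronous transducer, \textit{i.e.}\ it is an $\omega$-regular language over the product alphabet $\Sigma\times\Sigma$.

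Stage~1 (reduction to ultimately periodic representatives). I would prove the contrapositive: if only finitely many $E$-classes contain an ultimately periodic word, then $E$ has finite index. Fix ultimately periodic representatives $w_1,\dots,w_m$ of these classes. Since each $w_i$ is ultimately periodic, $\{w_i\}\times\Sigma^\omega$ is $\omega$-regular over $\Sigma\times\Sigma$; intersecting with $E$ and projecting onto the second component shows that each class $C_i:=[w_i]_E$ is $\omega$-regular. The finite union $U:=\bigcup_{i=1}^m C_i$ is therefore $\omega$-regular, hence so is its complement. If the complement were nonempty it would contain an ultimately periodic word (\textit{cf.}\ \cite{Buec62}), which must lie in some $C_i\subseteq U$, a contradiction. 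Thus $U=\Sigma^\omega$ and $E$ has at most $m$ classes. Consequently, infinite index yields, for every $k$, at least $k$ pairwise inequivalent ultimately periodic words $\alpha_1,\dots,\alpha_k$, say $\alpha_i=x_iy_i^\omega$.

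Stage~2 (length normalization). Here I exploit that an ultimately periodic word may be re-presented with a longer prefix or an inflated period without changing the word. Splitting $y_i=s_it_i$ gives $x_iy_i^\omega=(x_is_i)(t_is_i)^\omega$, so the prefix length can be raised by any amount while the period length stays $|y_i|$; replacing $y_i$ by $y_i^{Q/|y_i|}$ raises the period length to $Q$. I would therefore set $Q:=\lcm(|y_1|,\dots,|y_k|)$ and $P:=\max_i|x_i|$, and for each $i$ re-present $\alpha_i$ as $u_iv_i^\omega$ with $|u_i|=P$ and $|v_i|=Q$, by first inflating the period to length $Q$ and then extending and rotating the prefix to length exactly $P$. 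Since the underlying $\omega$-words are unchanged, the $\alpha_i$ remain pairwise $E$-inequivalent, and they now meet the common-length requirements, which establishes the claim.

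The main obstacle is Stage~1: for an arbitrary $\alpha$ the class $[\alpha]_E$ need not be $\omega$-regular, so the covering argument must be routed through ultimately periodic representatives and relies on the two facts that a single ultimately periodic word gives an $\omega$-regular singleton and that every nonempty $\omega$-regular language contains an ultimately periodic word. By contrast, Stage~2 and the backward direction are elementary combinatorics on words.
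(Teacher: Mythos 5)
Your proof is correct and follows essentially the same route as the paper: your Stage~1 is precisely the paper's auxiliary Lemma~\ref{rec_in_aut_omega_lemma_1} (regularity of classes with ultimately periodic representatives, plus the fact that a nonempty $\omega$-regular language contains an ultimately periodic word), and your Stage~2 is the same prefix-rotation and period-inflation normalization, differing only in the order of the two adjustments. The backward direction is handled the same trivial way in both.
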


To prove Lemma~\ref{rec_in_aut_omega_lemma_2} we first show the following, slightly weaker, version of it:
\begin{lemma}\label{rec_in_aut_omega_lemma_1}
	Let $E\subseteq \Sigma^\omega\times\Sigma^\omega$ be a $\omega$-automatic equivalence relation.
	Then $E$ has not finite index if and only if there are infinitely many (pairwise different) equivalence classes of $E$ containing an ultimately periodic representative.
\end{lemma}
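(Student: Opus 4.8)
The plan is to prove the two directions separately, with the backward (``infinitely many up-classes $\Rightarrow$ infinite index'') direction being trivial and the forward direction handled by its contrapositive. For the easy direction, if there are infinitely many pairwise different equivalence classes that happen to contain ultimately periodic representatives, then in particular $E$ has infinitely many equivalence classes, so it does not have finite index. For the hard direction I would prove the contrapositive: assuming that only finitely many equivalence classes of $E$ contain an ultimately periodic representative, I would show that $E$ has finite index.

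So suppose $C_1,\ldots,C_m$ are all the classes of $E$ containing an ultimately periodic word, and fix ultimately periodic representatives $p_1,\ldots,p_m$ with $C_i=[p_i]_E$. The first key step is that each $C_i$ is $\omega$-regular: since $p_i$ is ultimately periodic, $\{p_i\}\times\Sigma^\omega$ is $\omega$-automatic, and intersecting with the $\omega$-automatic relation $E$ yields the $\omega$-automatic relation $\{p_i\}\times[p_i]_E$, whose second projection $[p_i]_E$ is therefore $\omega$-regular. This is exactly the argument already used in the proof of Lemma~\ref{lemma_restrictions_recognizable_finite_implies_rec_omega_automatic}, so I would simply invoke it. Consequently $C:=\bigcup_{i=1}^m C_i$ is $\omega$-regular, being a finite union of $\omega$-regular languages.

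The second key step is to observe that $C$ contains every ultimately periodic word: for any ultimately periodic $p$, the class $[p]_E$ contains the ultimately periodic word $p$, hence $[p]_E$ is one of $C_1,\ldots,C_m$, so $p\in C$. Therefore the $\omega$-regular language $\Sigma^\omega\setminus C$ contains no ultimately periodic word at all. Since every nonempty $\omega$-regular language contains an ultimately periodic word (\textit{cf.}\ \cite{Buec62}), it follows that $\Sigma^\omega\setminus C=\emptyset$, i.e.\ $C=\Sigma^\omega$. Thus the (pairwise disjoint) classes $C_1,\ldots,C_m$ cover all of $\Sigma^\omega$, which means they are \emph{all} the classes of $E$ and $E$ has index at most $m$. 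This establishes the contrapositive and hence the forward direction.

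The main obstacle is the forward direction, and the decisive observation is the interplay in the second step: the union of the finitely many ``ultimately-periodic-containing'' classes is $\omega$-regular, its complement is $\omega$-regular and contains no ultimately periodic word, and therefore the complement must be empty. Everything else is routine once one has the two standard ingredients---that the $E$-class of an ultimately periodic word is $\omega$-regular (closure of $\omega$-automatic relations under intersection and projection, as in Lemma~\ref{lemma_restrictions_recognizable_finite_implies_rec_omega_automatic}), and that nonempty $\omega$-regular languages always contain an ultimately periodic word.
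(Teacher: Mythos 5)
Your proposal is correct and follows essentially the same route as the paper's proof: the backward direction is the trivial one, and the forward direction is handled by contraposition, showing each class with an ultimately periodic representative is $\omega$-regular (via intersection of $E$ with a singleton product and projection), and then concluding that the complement of their finite union must be empty because every nonempty $\omega$-regular language contains an ultimately periodic word. No gaps; the argument matches the paper's in structure and in every essential step.
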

\begin{proof}
	\underline{$\Leftarrow$}: Indeed, if $E$ has finite index then there are only finitely many equivalence classes of $E$ (containing an ultimately periodic word) by definition.
	
	\underline{$\Rightarrow$}:
	Suppose there are only finitely many equivalence classes of $E$ containing an ultimately periodic word.
	Let $C_1,\ldots,C_n$ be those equivalence classes and $u_1v_1^\omega,\ldots, u_nv_n^\omega$ be ultimately periodic words such that $u_iv_i^\omega$ is in $C_i$ for all $1\leq i\leq n$.
	Each $C_i$ is $\omega$-regular language, since it holds that
	\[ C_i = \dom(\{(w,u_iv_i^\omega)\in E \mid w\in\Sigma^\omega\}) = \dom(E\cap(\Sigma^\omega\times \{u_iv_i^\omega\})), \]
	$\omega$-automatic relations are closed under intersection and projection, and $\{u_i,v_i^\omega\}\times\Sigma^\omega$ is clearly $\omega$-automatic (in terms of automata, we can obtain an automaton for $C_i$ by fixing the second input tape of a synchronous transducer for $E$ to $u_iv_i^\omega$).
	
	Therefore, the (finite) union $C$ of the $C_i$ and its complement $\Sigma^\omega\setminus C$ are also $\omega$-regular.
	If $\Sigma^\omega\setminus C$ is non-empty, it contains an ultimately periodic word due to \cite{Buec62}.
	But this would contradict the assumption that the $C_i$ are all equivalence classes of $E$ containing an ultimately periodic word.
	Thus, the complement of $C$ is empty.
	It follows that $C_1,\ldots,C_n$ are all equivalence classes of $E$, and, hence, $E$ has finite index.
\end{proof}

\begin{proof}[of Lemma~\ref{rec_in_aut_omega_lemma_2}]
	\underline{$\Leftarrow$}:
	We prove the claim by contraposition.
	Suppose $E$ has finite index.
	Let $m_0:=\equiindex(E)<\omega$.
	Then for each collection of words $u_1,\ldots,u_m,~v_1,\ldots v_m\in \Sigma^*$ with $|u_i| = |u_j|=\ell$ and $|v_i|=|v_j|=p$ such that all $u_iv_i^\omega$ are pairwise non-equivalent, we have that $m\leq m_0$.
	Otherwise, there would be more than $m_0$ equivalence classes which is a contradiction.
	
	\underline{$\Rightarrow$}: Suppose $E$ has not finite index.
	Let $m> 0$. Due to Lemma \ref{rec_in_aut_omega_lemma_1} there are infinitely many pairwise non-equivalent (w.r.t.\ $E$) ultimately periodic words.
	Hence, we can pick ultimately periodic words $w_i = u_iv_i^\omega,~i\in \underline{m}$ which are pairwise non-equivalent. That is, $(u_iv_i^\omega,u_jv_j^\omega)\notin E$ for all indices $1\leq i<j\leq m$.
	We rewrite these ultimately periodic words such that they meet the conditions of the claim.
	W.l.o.g.\ we have that $|u_1|\geq |u_i|$ for all $i\in\underline{m}$.
	We define $u_1' := u_1$ as well as $v_1' := v_1$.
	Moreover, for each $i\in\underline{m}\setminus \{1\}$ let $p_i:=\lfloor\frac{|u_1|-|u_i|}{|v_i|}\rfloor$.
	Furthermore, consider the factorization $v_i = \hat{v}_i\tilde{v}_i$ where $|\hat{v}_i|=(|u_1|-|u_i|)\bmod |v_i|$. Note that $|u_1|-|u_i|\geq 0$, since $|u_1|\geq|u_i|$.
	Lastly, we define $u_i':= u_iv_i^{p_i}\hat{v}_i$.
	Then it holds that
	\begin{align*}
	|u_i'| &= |u_i|	&+&~p_i|v_i|										&+&~|\hat{v}_i|\\
	&= |u_i|		&+&~\lfloor\frac{|u_1|-|u_i|}{|v_i|}\rfloor|v_i|	&+&~((|u_1|-|u_i|)\bmod |v_i|)\\
	&= |u_i|		&+&~\big[|u_1|-|u_i| - ((|u_1|-|u_i|)\bmod |v_i|)\big]		&+&~((|u_1|-|u_i|)\bmod |v_i|)\\
	&= |u_i|		&+&~|u_1|-|u_i|&\\
	&= |u_1|		&&&
	\end{align*}
	Therefore, we have that $|u_i'|= | u_1 | = |u_j'|$ for all $1\leq i < j\leq m$.
	Moreover,
	\[ u_iv_i^\omega = u_iv_i^{p_i}\hat{v}_i\tilde{v}_i(\hat{v}_i\tilde{v}_i)^\omega = u_i'\tilde{v_i}(\hat{v}_i\tilde{v}_i)^\omega = u_i'(\tilde{v}_i\hat{v}_i)^\omega.\]
	Thus, by defining $v_i' := \tilde{v}_i\hat{v}_i$ we derive pairs $u_i',v_i'\in\Sigma^*$ such that $w_i = u_i'(v_i')^\omega$ and $|u_i'|=|u_j'|$ for all $1\leq i < j\leq m$.
	
	It remains to rewrite the $v_i'$ such that all periods have the same length.
	For that purpose, let $\ell:=\lcm(|v_1'|,\ldots,|v_k'|)$, $\ell_i = \frac{\ell}{|v_i'|}$, and define $v_i'':= (v_i')^{\ell_i}$ for all $i\in\underline{m}$.
	Then $u_i'(v_i')^\omega = u_i'(v_i'')^\omega = u_iv_i^\omega$ and $|v_i''| = |v_i'| \frac{\ell}{|v_i'|} = \ell$.
	Hence, the pairs $u_i',v_i''\in\Sigma^*$ satisfy the conditions of the claim.
\end{proof}

We proceed by transforming $E$ into an automatic equivalence relation $E_\#$ and showing that it is possible to compute in exponential time a synchronous transducer for it, given a synchronous B\"uchi transducer for $\overline{E}$.
\begin{definition}\label{definition_E_finite_A}
	Let $E\subseteq \Sigma^\omega\times\Sigma^\omega$ be an $\omega$-automatic equivalence relation. Furthermore, let $\Gamma := \Sigma\cup\{\#\}$ for a fresh symbol $\#\notin\Sigma$.
	Then the relation $E_\# \subseteq \Gamma^*\times\Gamma^*$ is defined by
	\[
	E_\# := \{(u\#v,x\#y)\mid u,v,x,y\in \Sigma^*, |u|=|x|, |v|=|y|, (uv^\omega,xy^\omega)\in E\}.
	\]
\end{definition}

\begin{lemma}
\label{rec_in_aut_omega_lemma_3}
	Let $E\subseteq \Sigma^\omega\times\Sigma^\omega$ be an $\omega$-automatic equivalence relation and $\mathcal{A}$ a synchronous B\"uchi transducer defining the complement $\overline{E}$ of $E$.
	Then, given $\mathcal{A}$, one can construct a synchronous transducer $\mathcal{A}_\#$ defining $E_{\#}$ in exponential time in the size of $\mathcal{A}$.
	In particular, $E_{\#}$ is an automatic relation and $\mathcal{A}_\#$ has size exponential in $\mathcal{A}$.
\end{lemma}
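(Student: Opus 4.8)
The plan is to view the synchronous reading of a pair of encodings as a single word over the product alphabet, and then recast membership in $E_\#$ as a (complemented) variant of the Calbrix--Nivat--Podelski encoding \cite{CNP93} applied to the \emph{single} $\omega$-automaton $\mathcal{A}$.

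First I would observe that, since $E_\#$ only relates pairs $(u\#v,x\#y)$ with $|u|=|x|$ and $|v|=|y|$, a synchronous transducer for $E_\#$ reads at each step a pair of letters from $\Gamma\times\Gamma$, and a well-formed input is exactly a word $\hat u\,(\#,\#)\,\hat z$ with $\hat u\in\Delta^*$ and $\hat z\in\Delta^+$, where $\Delta:=\Sigma\times\Sigma$ and $\hat u$ (resp.\ $\hat z$) is the letterwise pairing of $u,x$ (resp.\ $v,y$). The constraints $|u|=|x|,|v|=|y|$ are automatically enforced by insisting that the only occurrence of $\#$ in either component is the aligned pair $(\#,\#)$ and that $\hat z$ is non-empty; this is a trivial regular constraint. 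Under this view $\mathcal{A}$ is just a nondeterministic B\"uchi automaton over $\Delta$, the synchronous word corresponding to $(uv^\omega,xy^\omega)$ is the ultimately periodic word $\hat u\hat z^\omega$ over $\Delta$, and $(u\#v,x\#y)\in E_\#$ holds iff $\hat u\hat z^\omega$ is \emph{rejected} by $\mathcal{A}$, i.e.\ lies in $E=\overline{\overline{E}}$.

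Next I would give a \emph{deterministic} finite automaton $\mathcal{C}$, reading $\hat u\,(\#,\#)\,\hat z$, that decides whether $\mathcal{A}$ accepts $\hat u\hat z^\omega$, using the transition-profile technique to keep the construction single-exponential. While reading $\hat u$, the automaton $\mathcal{C}$ maintains by the subset construction the set $S=\{q\mid \mathcal{A}\colon q_0\xrightarrow{\hat u} q\}$. On the separator it freezes $S$ and begins tracking deterministically the pair of profiles of the portion $\hat z'$ of $\hat z$ read so far, namely $\tau=\{(p,q)\mid \mathcal{A}\colon p\xrightarrow{\hat z'} q\}$ and $\tau_F=\{(p,q)\mid \mathcal{A}\colon p\xrightarrow[F]{\hat z'} q\}$; these are updated letter by letter by the usual composition rules, so $\mathcal{C}$ stays deterministic. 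A configuration $(S,\tau,\tau_F)$ reached at the end is declared accepting for $\mathcal{C}$ iff there is a state $q$ with $p\,\tau^*\,q$ for some $p\in S$ together with a $\tau$-cycle through $q$ that uses at least one $\tau_F$-edge; by the standard lasso characterisation of B\"uchi acceptance on ultimately periodic words (\textit{cf.}\ \cite{Buec62}) this holds exactly when $\mathcal{A}$ accepts $\hat u\hat z^\omega$. Since $S$ ranges over $2^{|Q|}$ sets and $(\tau,\tau_F)$ over $2^{\mathcal{O}(|Q|^2)}$ relations, $\mathcal{C}$ has $2^{\mathcal{O}(|Q|^2)}$ states and is computable in exponential time, the acceptance test being a polynomial graph-reachability computation on $Q$.

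Finally, because $\mathcal{C}$ is deterministic, I obtain $\mathcal{A}_\#$ essentially for free: I complement $\mathcal{C}$ by swapping accepting and non-accepting configurations, so that it accepts exactly when $\mathcal{A}$ \emph{rejects} $\hat u\hat z^\omega$, and intersect with the regular well-formedness language above. Read back as a synchronous transducer over $\Gamma\times\Gamma$, this yields a transducer for $E_\#$ of size exponential in $\mathcal{A}$, and in particular $E_\#$ is automatic. The main obstacle is the single-exponential bound: the naive route---build a nondeterministic encoding automaton for $\{\hat u\#\hat z\mid \hat u\hat z^\omega\in L(\mathcal{A})\}$ and then determinise in order to complement---would incur a second exponential. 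Performing the subset construction for the prefix and the profile construction for the period simultaneously and deterministically is precisely what avoids this, so that the passage from $\overline{E}$ to $E_\#$ (the complementation step) costs nothing beyond flipping final states.
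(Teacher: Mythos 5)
Your proof is correct and takes essentially the same route as the paper: both reduce B\"uchi acceptance of $\hat u\hat z^\omega$ by $\mathcal{A}$ to a deterministic transition-profile computation on the finite word $\hat u(\#,\#)\hat z$ and then negate the lasso condition to pass from $\overline{E}$ to $E_\#$, exploiting that profile tracking is deterministic so complementation is free. The only cosmetic difference is that you keep just the reachable set $S$ for the prefix where the paper stores the full profile $\tau(u,u')$; both carry exactly the information needed for the acceptance test and stay within a single exponential.
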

For the proof of Lemma \ref{rec_in_aut_omega_lemma_3}, we introduce the notion of transition profiles, which also play a central role in the original complementation proof for B\"uchi automata, as described, \textit{e.g.}, in \cite[Section~2]{Tho90}.
\begin{definition}[transition profile]\label{def:transition-profile}
	Let $\mathcal{A} = (Q,\Sigma,q_0,\Delta,F)$ be a (nondeterministic) B\"uchi automaton and $w\in\Sigma^*$.
	A \emph{transition profile} over $\mathcal{A}$ is a directed labeled graph $\tau = (Q,E)$
	where $E\subseteq Q\times\{1,F\}\times Q$.
	The transition profile $\tau(w) = (Q,E_w)$ induced by $w$ is the transition profile where $E_w$ contains an edge from $p$ to $q$ if and only if $p\xrightarrow{w}q$, and this edge is labeled with $F$ if and only if $p\smash{\xrightarrow[\smash{F}]{w}}q$.
	Finally, $TP(\mathcal{A}) := \{\tau(w)\mid w\in\Sigma^* \}$ denotes the set of all transition profiles over $\mathcal{A}$ induced by a word $w\in\Sigma^*$.
\end{definition}
It is well-known that for all words $v, w$ the transition profile $\tau(vw)$ is determined by the transition profiles $\tau(v)$ and $\tau(w)$.
In particular, $(TP(\mathcal{A}),\cdot)$ with $\tau(v)\cdot\tau(w) = \tau(vw)$ is a monoid with neutral element $\tau(\epsilon)$.
The following lemma is a simple observation that directly follows from the definition of transition profiles. 
\begin{lemma}[\cite{BLO12}]
	\label{lemma_transition_profile_acceptance}
	Let $\mathcal{A} = (Q,\Sigma,q_0,\Delta,F)$ be a B\"uchi automaton and $uv^\omega\in\Sigma^\omega$ be an ultimately periodic word.
	Then $uv^\omega\in L(\mathcal{A})$ if and only if there is a $p\in Q$ such that there is an edge from $q_0$ to $p$ in $\tau(u)$ and in $\tau(v)$ a cycle with an $F$ labeled edge is reachable from $p$.
\end{lemma}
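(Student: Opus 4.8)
The plan is to prove Lemma~\ref{lemma_transition_profile_acceptance} directly from the definition of the transition profiles together with the standard characterization of B\"uchi acceptance for ultimately periodic words. Recall that $uv^\omega \in L(\mathcal{A})$ means there is a run of $\mathcal{A}$ on $uv^\omega$ starting in $q_0$ and visiting some accepting state infinitely often. First I would unfold this: such a run reads the prefix $u$ and arrives at some state $p$, then reads the infinite sequence of $v$-blocks; writing $p_0 = p, p_1, p_2, \ldots$ for the sequence of states reached after each successive block $v$, the run decomposes as $q_0 \xrightarrow{u} p_0 \xrightarrow{v} p_1 \xrightarrow{v} p_2 \xrightarrow{v} \cdots$. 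By the pigeonhole principle, since $Q$ is finite, some state $q$ occurs infinitely often among the $p_i$; moreover (the crucial point) the run is accepting if and only if there is some recurring state $q$ such that between two of its consecutive occurrences an accepting state is visited, i.e.\ one of the $v$-blocks gives an $F$-labeled self-transition $q \xrightarrow[F]{v} q$.

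The two directions then fall out of the definition of $\tau(u)$ and $\tau(v)$ in Definition~\ref{def:transition-profile}. For the forward direction, given an accepting run I would extract from it the state $p$ reached after $u$, so that $p_0 \xrightarrow{u} p$ gives an edge from $q_0$ to $p$ in $\tau(u)$. Tracking the block-states $p_0, p_1, \ldots$ as a path in the graph $\tau(v)$ (each $p_i \xrightarrow{v} p_{i+1}$ is an edge of $\tau(v)$), the infinitely-recurring state $q$ lies on this path and hence is reachable from $p$ in $\tau(v)$; because an accepting state is seen infinitely often, the segment between two occurrences of $q$ yields an $F$-labeled edge, so a cycle through $q$ carrying an $F$-label is reachable from $p$. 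For the converse, I would assemble a run from the promised graph data: the $q_0$-to-$p$ edge in $\tau(u)$ witnesses a run $q_0 \xrightarrow{u} p$; the reachability of a cycle in $\tau(v)$ gives a finite path $p \xrightarrow{v^m} q$ for some $m \geq 0$ landing on a state $q$ on an $F$-cycle, and the $F$-labeled cycle gives $q \xrightarrow[F]{v^d} q$ for the cycle length $d$. Concatenating these witnessing runs --- using that an edge in a transition profile is by definition exactly the existence of a corresponding run, and that $\tau$ is compositional --- produces an actual run of $\mathcal{A}$ on $u (v^m)(v^d)^\omega = uv^\omega$ that revisits $q$ and hence an accepting state infinitely often.

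The only genuinely delicate point --- and the one I would be most careful about --- is translating between \emph{edges in the transition profile} and \emph{actual runs} in both directions, and in particular verifying that the $F$-label composes correctly. The definition records $p \xrightarrow[F]{w} q$ precisely when some run from $p$ to $q$ on $w$ touches an accepting state, and one must check that concatenating an $F$-labeled edge with ordinary edges yields a run that still visits $F$ in the right block; this is where the compositionality of $\tau$ (the monoid structure noted before the lemma) is used, since the $F$-labeled cycle on $q$ reappears in every iteration of the period. Everything else is routine pigeonhole and path-extraction bookkeeping. Since the statement is attributed as ``a simple observation that directly follows from the definition,'' I expect the proof to be short, with the verification of the $F$-label propagation being the item that deserves explicit mention rather than any real obstacle.
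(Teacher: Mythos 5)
Your proof is correct. Note that the paper itself gives no proof of this lemma --- it is attributed to \cite{BLO12} and described as a direct consequence of Definition~\ref{def:transition-profile} --- and your argument is precisely the standard one being alluded to: decompose the run at the $v$-block boundaries, pigeonhole a recurring block-state $q$, and translate run segments into edges of $\tau(u)$ and $\tau(v)$ in both directions. One small imprecision: the clause ``i.e.\ one of the $v$-blocks gives an $F$-labeled self-transition $q \xrightarrow[F]{v} q$'' overstates what pigeonhole yields, since the accepting visit may occur in a block strictly between two occurrences of $q$, so one only obtains a closed walk $q \xrightarrow{v^k} q$ in which \emph{some} $\tau(v)$-edge is $F$-labeled rather than a single-block self-loop; your actual execution in the following paragraph works with exactly this multi-edge cycle, so nothing breaks.
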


\begin{proof}[of Lemma \ref{rec_in_aut_omega_lemma_3}]
	Let $\mathcal{A}$ be given by $\mathcal{A} = (Q,\Sigma,\Sigma,q_0,\Delta,F)$.
	We have to construct a synchronous transducer $\mathcal{A}_\#$ defining $E_{\#}$.
	Informally, on an input $(u,u')(\#,\#)(v,v')$ it works as follows.
	While reading $(u,u')$ the transducer $\mathcal{A}_\#$ computes the transition profile\footnote{Since $\mathcal{A}$ is a synchronous transducer we can view it as an Büchi automaton over the alphabet $\Sigma\times\Sigma$ which allows us to utilize transition profiles.} $\tau(u,u')$.
	After skipping $(\#,\#)$ it proceeds by computing the transition profile $\tau(v,v')$ while remembering $\tau(u,u')$.
	In the end, $\mathcal{A}_\#$ accepts if and only if for all states $p\in Q$ either in $\tau(v,v')$ no cycle with an edge labeled $F$\footnote{Recall that in transitions profiles (unlike transducers) edges may be labeled with $F$, \textit{cf.}\ Definition~\ref{def:transition-profile}.} is reachable from $p$ or there is no edge from $q_0$ to $p$ in $\tau(u,u')$.
	
	More formally, we define $\mathcal{A}_\# := (Q_\#,\Sigma,\Sigma,\tau(\varepsilon),\delta_\#,F_\#)$ where
	\begin{align*}
	Q_\# &=TP(\mathcal{A})\cup (TP(\mathcal{A})\times TP(\mathcal{A}))\text{ and,}\\
	F_\# &=\{(\tau,\tau')\in TP(\mathcal{A})^2\mid\forall p\in Q:
	\begin{cases}
	\text{in }\tau \text{ there is no edge from }q_0\text{ to  }p\text{, or}\\
	\text{in }\tau'\text{ no cycle with an edge labeled }F\\\hspace{7em}\text{is reachable from }p
	\end{cases}\}.
	\end{align*}
	The states $\tau\in TP(\mathcal{A})$ are used to read the $(u,u')$ prefix of the input while states $(\tau,\tau')$ are used to process the $(v,v')$ postfix.
	Thereby, $\tau$ is the current transition profile computed by $\mathcal{A}$ for $(u,u')$ and $\tau'$ is the current transition profile for $(v,v')$.
	Accordingly, the transition relation is defined as follows:
	\begin{align*}
	\Delta_\# := &\{(\tau,(a,b),\tau')\mid\tau,\tau'\in TP(\mathcal{A}), \tau' = \tau\cdot \tau(a,b),~ a,b\in\Sigma \}\\
	&\{(\tau,(\#,\#),(\tau,\tau(\varepsilon)))\mid \tau\in TP(\mathcal{A})\}\\
	&\{((\tau,\tau'),(a,b),(\tau,\tau''))\mid\tau,\tau'\in TP(\mathcal{A}), \tau'' = \tau'\cdot t(a,b),~ a,b\in\Sigma \}
	\end{align*}
	
	\underline{Complexity:} We have that $|Q_\#| = |TP(\mathcal{A})|+|TP(\mathcal{A})|^2\in \mathcal{O}(|TP(\mathcal{A})|^2)$.
	Furthermore, a transition profile can be described by a function $\tau: Q\times Q\rightarrow \{0,1,F\}$ --- \textit{i.e.}\ there is no edge, there is an edge  without label, or there is an edge labeled $F$ from $p$ to $q$ if $(p,q)$ is mapped to $0$, $1$, or $F$, respectively.
	Thus, $|TP(\mathcal{A})| = 3^{|Q|^2}$.
	In addition, given a transition profile $\tau$ the conditions in the definition of $F_\#$ and $\Delta_\#$ can be decided in polynomial time by a nested depth first search on $\tau$.
	Hence, $\mathcal{A}_\#$ can be computed in exponential time given $\mathcal{A}$.
	
	\underline{Correctness:} Obviously, $\mathcal{A}_\#$ rejects any malformed input pair (\textit{e.g.}\ if $u$ and $u'$ have different length) because no transitions are defined for the cases $(\#,a)$, $(a,\#)$, $(\varepsilon,a)$, $(a,\varepsilon)$, $a\in\Sigma$ (or $F_\#\cap TP(\mathcal{A}) = \emptyset$ in the case that no $\#$ occurs).
	On the other hand, consider a well-formed input pair $(u,u')(\#,\#)(v,v')$ with $|u|=|u'|$ and $|v| = |v'|$.
	Recall that $(TP(\mathcal{A}),\cdot)$ is a monoid.
	Hence, the run of $\mathcal{A}_\#$ on $(u,u')$ is unique and ends in $\tau(u,u')$ (the initial state is the neutral element $\tau(\varepsilon)$).
	Furthermore, like in the case of the prefix $(u,u')$ the run of $\mathcal{A}_\#$ on the suffix $(v,v')$ starting in $(\tau(u,u'),\tau(\varepsilon))$ is unique and ends in $(\tau(u,u'),\tau(v,v'))$.
	Thus, by the definition of $F_\#$, the transducer $\mathcal{A}_\#$ accepts $(u,u')(\#,\#)(v,v')$ if and only if $|u|=|u'|$, $|v| = |v'|$ and for all $p\in Q$ there is no edge from $q_0$ to $p$ in $\tau(u,u')$ or in $\tau(v,v')$ no cycle with an edge labeled $F$ is reachable from $p$.
	With Lemma \ref{lemma_transition_profile_acceptance}, it follows that $\mathcal{A}_\#$ accepts if and only if $|u|=|u'|$, $|v|=|v'|$ and $(uv^\omega,u'v'^\omega)\notin\overline{E}$.
	In conclusion, $R_*(\mathcal{A}_\#) = E_\#$.
\end{proof}

With a synchronous transducer for $E_\#$ at hand, we can compute a synchronous transducer defining a set of unique representatives of $E_\#$ similarly to the approach of \cite{CCG06} which we outlined in Section~\ref{subsec:omega-rec-in-aut-revision}, specifically to the step described in the paragraph following the proof of Lemma~\ref{lemma_restrictions_recognizable_finite_implies_rec_omega_automatic}.
For convenience, we will denote the set of representatives obtained by this construction by $L_\#(E)$ (although it is not unique in general).
We can now readjust Lemma \ref{rec_in_aut_omega_lemma_2} to $E_\#$ (or, more precisely, $L_\#(E)$).

\begin{lemma}\label{rec_in_aut_omega_lemma_4}
	Let $E\subseteq\Sigma^\omega\times\Sigma^\omega$ be an $\omega$-automatic equivalence relation.
	Then $E$ has finite index if and only if there is a $k<\omega$ such that for all $m,n>0: |L_{\#}(E)\cap\Sigma^n\{\#\}\Sigma^m|\leq k$.
\end{lemma}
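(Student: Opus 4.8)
The plan is to translate the condition on $L_\#(E)$ directly into a statement about the equivalence classes of $E$ restricted to ultimately periodic words of a fixed prefix and period length, and then read off both directions, invoking Lemma~\ref{rec_in_aut_omega_lemma_2} for the harder implication.

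The heart of the argument is a book-keeping identity about $E_\#$. By Definition~\ref{definition_E_finite_A}, two encodings $u\#v$ and $x\#y$ can only be $E_\#$-related when $|u|=|x|$ and $|v|=|y|$; hence every $E_\#$-class is entirely contained in a single block $\Sigma^n\{\#\}\Sigma^m$. Moreover, on such a block the encoding map $u\#v\mapsto uv^\omega$ is a bijection onto $\{uv^\omega\mid |u|=n,\,|v|=m\}$ (if the lengths agree, then $uv^\omega=xy^\omega$ forces first $u=x$ and then $v=y$), and by Definition~\ref{definition_E_finite_A} it carries $E_\#$ to $E$. Since $L_\#(E)$ contains exactly one representative per $E_\#$-class, the number $|L_\#(E)\cap\Sigma^n\{\#\}\Sigma^m|$ is independent of the chosen set of representatives and equals
\[
|L_\#(E)\cap\Sigma^n\{\#\}\Sigma^m| = N(n,m),
\]
where I write $N(n,m)$ for the number of distinct $E$-classes among the ultimately periodic words $uv^\omega$ with $|u|=n$ and $|v|=m$. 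Once this identity is established, both directions become short.

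For ``$\Rightarrow$'', if $E$ has finite index $m_0$ then any family of pairwise non-equivalent words has at most $m_0$ members, so $N(n,m)\le m_0$ for all $n,m$, and I may take $k=m_0$. For ``$\Leftarrow$'' I would argue by contraposition using Lemma~\ref{rec_in_aut_omega_lemma_2}: if $E$ does not have finite index, then for every $k'>0$ the lemma supplies words $u_1,\dots,u_{k'},v_1,\dots,v_{k'}$ with common prefix length $n$ and common period length $m$ such that the $u_iv_i^\omega$ are pairwise non-equivalent. These witness $N(n,m)\ge k'$, that is $|L_\#(E)\cap\Sigma^n\{\#\}\Sigma^m|\ge k'$; since $k'$ is arbitrary, no uniform bound $k$ can exist.

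The only point needing a little care is the side condition $n,m>0$, which does not literally appear in Lemma~\ref{rec_in_aut_omega_lemma_2}. The period length is automatically positive for an ultimately periodic word, and a prefix length of $0$ can always be raised to $1$ by simultaneously shifting one letter of each period into the prefix, using $v^\omega = v[0]\,(v[1]\cdots v[m-1]\,v[0])^\omega$, which preserves the property that all prefix lengths coincide and all period lengths coincide. Hence I can assume $n,m>0$ without loss of generality. I do not expect a genuine obstacle here; the main work is simply verifying the counting identity above (the length-preservation of $E_\#$ and the injectivity of the encoding on each block), after which the equivalence follows from Lemma~\ref{rec_in_aut_omega_lemma_2} almost immediately.
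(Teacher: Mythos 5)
Your proof is correct and follows essentially the same route as the paper: both reduce the statement to Lemma~\ref{rec_in_aut_omega_lemma_2} via the observation that $E_\#$ preserves the lengths $|u|$ and $|v|$, so that the representatives of $E_\#$ lying in a block $\Sigma^n\{\#\}\Sigma^m$ count exactly the $E$-classes of ultimately periodic words with prefix length $n$ and period length $m$. Your explicit counting identity merely packages in one step what the paper establishes separately in each direction by arguing about lexicographically minimal representatives, and your treatment of the side condition $n>0$ (shifting one period letter into the prefix) is a detail the paper glosses over.
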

\begin{proof}
	We prove both directions by contraposition.
	Suppose $E$ does not have finite index.
	We have to show that for all $k>0$ there are $m,n> 0$ such that $|L_{\#}(E)\cap\Sigma^n\{\#\}\Sigma^m|>k$ holds.
	Let $k>0$.
	Due to Lemma \ref{rec_in_aut_omega_lemma_2} there are $k+1$ many pairs $(u_i,v_i)\in \Sigma^*\times\Sigma^*$ with $|u_i| = |u_j| =: n$ and $|v_i| = |v_j| =: m$ for all $1\leq i < j\leq k+1$ such that $(u_i,u_j)(v_i,v_j)^\omega \notin E$.
	It follows that $(u_i\#v_i,u_j\#v_j)\notin E_\#$  for each $1\leq i < j\leq k+1$.
	W.l.o.g.\ we can choose the $(u_i,v_i)$ as the lexicographical smallest pairs with this property.
	We claim that $u_i\#v_i\in L_\#(E)$ for all $i\in\underline{k+1}$.
	Assume that there is a $i\in\underline{k+1}$ such that $u_i\#v_i\notin L_\#(E)$.
	Then there are words $x,y\in\Sigma^*$ such that $(x\#y,u_i\#v_i)\in E_\#$ and $x\#y<_\text{lex}u_i\#v_i$.
	In particular, $|x| = |u_i| = |u_j|$ and $|y| = |v_i| = |v_j|$ for all $j\in\underline{k+1}$.
	But then, $(x\#y,u_j\#v_j)\notin E_\#$ because $E_\#$ is an equivalence relation.
	This is a contradiction to the minimality (w.r.t. the lexicographical order) of $u_i\#v_i$.
	Hence,
	\begin{multline*}
	|L_\#(E)\cap \{u\#v\mid |u|=n,|v|=m \}| \geq |L_\#(E)\cap \{u_i\#v_i\mid 1\leq i\leq k+1 \}| = k+1>k.
	\end{multline*}
	
	On the contrary, assume that $\forall k>0~\exists m,n> 0: |L_{\#}(E)\cap\Sigma^n\{\#\}\Sigma^m|>k$ does hold.
	Again, let $k>0$.
	Then there are $m,n>0$ such that for each $L_{m,n} := L_\#(E)\cap \Sigma^n\{\#\}\Sigma^m$ it holds that $|L_{m,n}| > k$.
	Thus, there are pairwise different pairs $(u_i,v_i)$ such that $u_i\#v_i\in L_{m,n}$ for $1\leq i\leq k$.
	Moreover, by definition we have that $|u_i| = |u_j| = n$ and $|v_i|=|v_j| = m$ for all $1\leq i<j\leq k$.
	We claim that for each $i\neq j$ we have that $(u_iv_i^\omega,u_jv_j^\omega)\notin E$.
	Otherwise, there are $i,j$ such that $(u_i\#v_i,u_j\#v_j)\in E_\#$ and $(u_j\#v_j,u_i\#v_i)\in E_\#$ since $E_\#$ is symmetric.
	But then, because both $u_i\#v_i$ and $u_j\#v_j$ are in $L_\#(E)$, we have that $u_i\#v_i \not<_\text{lex}u_j\#v_j$ and $u_j\#v_j \not<_\text{lex}u_i\#v_i$. This is a contradiction.
	Therefore, we conclude that $E$ does not have finite index due to Lemma \ref{rec_in_aut_omega_lemma_2}.
\end{proof}

Note that the condition in Lemma \ref{rec_in_aut_omega_lemma_4} is similar to slenderness but not equivalent to the statement that $L_{\#}(E)$ is slender.
For instance, consider the language $L$ given by the regular expression $a^*\#b^*$. For any $m,n>0$ we have that $|L\cap \Sigma^n\{\#\}\Sigma^m| = |\{a^n\#b^m\}|\leq 1$. But $L$ is not slender: Let $\ell >0$.
Then $a^{\ell-1-i}\#b^{i}\in L\cap \Sigma^\ell$ for all $0\leq i < \ell$.
Hence, $|L\cap \Sigma^\ell|\geq \ell$ and, thus, $L$ cannot be slender.
However, the next result shows that there is a strong connection between the condition in Lemma \ref{rec_in_aut_omega_lemma_4} and slenderness.

\begin{lemma}\label{rec_in_aut_omega_lemma_5}
	Let $L$ be a language of the form $L = \bigcup_{(i,j)\in I} L_i\{\#\}L_j$ where $I\subset \mathbb{N}^2$ is a finite index set and $L_i,L_j\subseteq(\Sigma\setminus\{\#\})^*$ are non-empty regular languages for each pair $(i,j)\in I$.
	Then there is a $k<\omega$ such that for all $m,n\geq 0: |L\cap\Sigma^n\{\#\}\Sigma^m|\leq k$ if and only if for all $(i,j)\in I$ it holds that $L_i$ and $L_j$ are slender.
\end{lemma}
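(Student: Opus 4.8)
The plan is to prove both directions of the biconditional by relating the "bounded per length-pair" condition on the union $L = \bigcup_{(i,j)\in I} L_i\{\#\}L_j$ to the slenderness of the individual components $L_i$ and $L_j$. The clean observation to exploit is that $\#\notin\Sigma\setminus\{\#\}$, so every word of $L$ of the shape $w\#w'$ decomposes uniquely at its single $\#$, and moreover $|L\cap\Sigma^n\{\#\}\Sigma^m|$ is governed by the counting function $f_{i,j}(n,m) := |L_i\cap\Sigma^n|\cdot|L_j\cap\Sigma^m|$ summed over $(i,j)\in I$ (being careful that a given word $w\#w'$ may lie in several summands, so $|L\cap\Sigma^n\{\#\}\Sigma^m|$ is bounded above by $\sum_{(i,j)\in I} f_{i,j}(n,m)$ and bounded below by any single $f_{i,j}(n,m)$).

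For the direction asserting that slenderness of all the $L_i,L_j$ implies the bound, I would argue directly: if each component language is slender, there is a uniform constant $c$ with $|L_\ell\cap\Sigma^p|<c$ for every index $\ell$ occurring in $I$ and every length $p$. Then for any $m,n$ we get $|L\cap\Sigma^n\{\#\}\Sigma^m|\le\sum_{(i,j)\in I}|L_i\cap\Sigma^n|\cdot|L_j\cap\Sigma^m| < |I|\cdot c^2$, so $k:=|I|\,c^2$ works. This direction is routine.

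The interesting direction is the contrapositive of "bound implies slender": assuming some component, say $L_{i_0}$, is not slender, I must exhibit, for every $k$, lengths $m,n$ with $|L\cap\Sigma^n\{\#\}\Sigma^m|>k$. Here I fix any pair $(i_0,j_0)\in I$ and pick, using non-slenderness of $L_{i_0}$, a length $n$ with $|L_{i_0}\cap\Sigma^n|$ as large as desired; since $L_{j_0}$ is nonempty I can also fix some $m$ with $L_{j_0}\cap\Sigma^m\neq\emptyset$ (taking $m=|w'|$ for a fixed $w'\in L_{j_0}$). Combining a large set of length-$n$ words of $L_{i_0}$ with this single $w'$ yields more than $k$ distinct words $w\#w'$ in $L\cap\Sigma^n\{\#\}\Sigma^m$. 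The symmetric argument handles a non-slender $L_{j_0}$.

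The main obstacle is the potential \emph{overlap} between the summands $L_i\{\#\}L_j$: the same word $w\#w'$ may be counted in several pairs, so the lower-bound step must produce \emph{distinct} words of $L$, not merely a large value of some $f_{i,j}$. I resolve this by working inside a single fixed index pair and noting that distinctness of the chosen length-$n$ words $w$ already forces distinctness of the concatenations $w\#w'$ (the $\#$ occurs at the same position and $w'$ is fixed), so these words are genuinely distinct elements of $L$ regardless of how the union overlaps. A secondary subtlety is that non-slenderness supplies largeness at \emph{some} length rather than a monotone growth, but since the bound in the lemma must hold for \emph{all} $m,n$, a single witnessing length $n$ per value of $k$ suffices, and the nonemptiness hypothesis on the $L_j$ (respectively $L_i$) guarantees a compatible $m$ exists.
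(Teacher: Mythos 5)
Your proposal is correct and follows essentially the same route as the paper's proof: both directions reduce the length-pair bound on $L$ to the per-summand counts $|L_i\cap\Sigma^n|\cdot|L_j\cap\Sigma^m|$ (using finiteness of $I$ for the upper bound and non-emptiness of the partner language, via a single fixed word, for the lower bound). The paper packages the reduction from the union to the individual summands as an explicit chain of equivalences, but the underlying argument is the one you give.
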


\begin{proof}
	\allowdisplaybreaks
	It holds that	\begin{align}
	&~\exists k									&&\forall m,n\geq 0
	&&
	&&|L\cap\Sigma^n\{\#\}\Sigma^m|\leq k
	\label{rec_in_aut_omega_lemma_5_eq_1}\\
	\Leftrightarrow
	&~\exists k									&&\forall m,n\geq 0
	&&
	&&|\bigcup_{(i,j)\in I}(L_i\{\#\}L_j\cap\Sigma^n\{\#\}\Sigma^m)|\leq k
	\label{rec_in_aut_omega_lemma_5_eq_2}\\
	\Leftrightarrow
	&~\exists k									&&\forall m,n\geq 0
	&&
	&&\sum_{(i,j)\in I}|L_i\{\#\}L_j\cap \Sigma^n\{\#\}\Sigma^m|\leq k
	\label{rec_in_aut_omega_lemma_5_eq_3}\\
	\Leftrightarrow
	&~\big(\exists k_{i,j}\big)_{(i,j)\in I}	&&\forall m,n\geq 0
	&&\bigwedge_{(i,j)\in I}
	&&|L_i\{\#\}L_j\cap \Sigma^n\{\#\}\Sigma^m|\leq k_{i,j}
	\label{rec_in_aut_omega_lemma_5_eq_4}\\
	\Leftrightarrow
	&~\big(\exists k_{i,j}\big)_{(i,j)\in I}	&&\bigwedge_{(i,j)\in I}
	&&\forall m,n\geq 0
	&&|L_i\{\#\}L_j\cap \Sigma^n\{\#\}\Sigma^m|\leq k_{i,j}
	\label{rec_in_aut_omega_lemma_5_eq_5}\\
	\Leftrightarrow
	&~\bigwedge_{(i,j)\in I}					&&\exists k_{i,j}
	&&\forall m,n\geq 0
	&&|L_i\{\#\}L_j\cap \Sigma^n\{\#\}\Sigma^m|\leq k_{i,j}
	\label{rec_in_aut_omega_lemma_5_eq_6}
	\end{align}
	Note that $\eqref{rec_in_aut_omega_lemma_5_eq_2}\Rightarrow\eqref{rec_in_aut_omega_lemma_5_eq_3}$ does hold since $I$ is finite.
	Furthermore, $\eqref{rec_in_aut_omega_lemma_5_eq_4}\Rightarrow\eqref{rec_in_aut_omega_lemma_5_eq_5}$ and $\eqref{rec_in_aut_omega_lemma_5_eq_5}\Rightarrow\eqref{rec_in_aut_omega_lemma_5_eq_6}$ do hold because $\forall$ distributes over $\wedge$ and the locality principle, respectively.
	
	Further on, we show that $\eqref{rec_in_aut_omega_lemma_5_eq_6}$ holds if and only if for all $(i,j)\in I$ it holds that $L_i$ and $L_j$ are slender.
	
	\underline{$\Leftarrow$}:	We prove the claim by contraposition.
	Suppose $L_i$ or $L_j$ for some $(i,j)\in I$ is not slender, say $L_i$ (for the case that $L_j$ is not slender the reasoning is analogous).
	Then $\forall k\exists m: |L_i\cap \Sigma^m| > k$.
	Let $k>0$ and $m$ such that $|L_i\cap \Sigma^m| > k$.
	Pick $v\in L_j\neq\emptyset$ and define $n:=|v|$.
	Note that by the choice of $n$ we have that $|L_j\cap\Sigma^n|\geq 1$.
	Clearly, it holds that $|L_i\{\#\}L_j| = |L_i||L_j|$.
	Moreover, since $L_i$ and $L_j$ do not contain any word with the letter $\#$ and both languages are non-empty by assumption, it follows that
	\[|L_i\{\#\}L_j\cap \Sigma^n\{\#\}\Sigma^m| = |L_i\cap \Sigma^n||L_j\cap \Sigma^m| > k\cdot 1\geq k.\]
	Thus, $\forall k\exists m,n\geq 0 |L_i\{\#\}L_j\cap \Sigma^n\{\#\}\Sigma^m| > k$.
	Hence, $\eqref{rec_in_aut_omega_lemma_5_eq_6}$ does not hold.
	
	\underline{$\Rightarrow$}: Let $(i,j)\in I$. By assumption $L_i$ and $L_j$ are slender.
	Thus, there are $k_i,k_j > 0$ such that $|L_i\cap \Sigma^n|\leq k_i$ and $|L_j\cap \Sigma^m|\leq k_j$ for all $m,n\geq 0$.
	It follows that for all $m,n\geq 0:$
	\[|L_i\{\#\}L_j\cap \Sigma^n\{\#\}\Sigma^m| = |L_i\cap \Sigma^n||L_j\cap \Sigma^m| \leq k_i k_j=:k.\]
	Hence, $\eqref{rec_in_aut_omega_lemma_5_eq_6}$ does hold, and thus, the lemma is proved.
\end{proof}

The last ingredient we need is the decidability of slenderness in polynomial time.
Lemma~\ref{lemma_slenderness_decidable} can be shown analogously to the proof given by \cite{Tao06} where it is shown that the finiteness problem for B\"uchi automata is \textsc{NL}-complete.
Indeed, there is a strong connection between these two problems which we shall briefly revisit in Subsection \ref{subsec:omega-rec-in-aut-slenderness-vs-finiteness}.
\begin{lemma}\label{lemma_slenderness_decidable}
	Deciding slenderness for (nondeterministic) finite automata is \textsc{NL}-complete.
\end{lemma}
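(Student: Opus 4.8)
The plan is to prove the two directions of the \textsc{NL}-completeness claim separately: membership in \textsc{NL} and \textsc{NL}-hardness. For membership it is convenient to decide the complement, \emph{non}-slenderness; since $\textsc{NL}=\textsc{coNL}$ by the Immerman--Szelepcs\'enyi theorem, this yields slenderness $\in\textsc{NL}$ as well. I would first trim the input automaton $\mathcal{A}=(Q,\Sigma,q_0,\Delta,F)$, discarding every state that is not reachable from $q_0$ or from which no final state is reachable; both tests are graph reachability and hence \textsc{NL}. The core of the membership argument is a structural characterization of non-slenderness by short, \textsc{NL}-checkable witnesses, based on the density dichotomy for regular languages: the density $n\mapsto|L\cap\Sigma^n|$ is either bounded, $\Theta(n^d)$ for some $d\geq 1$, or $2^{\Theta(n)}$, so a language fails to be slender (Definition~\ref{def:slender}) exactly when its density is at least linear. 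I would capture the two unbounded regimes by two witnesses.

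The first witness (exponential growth) is a reachable-and-co-reachable state $p$ through which run two cycles of equal length but different label. To detect it I use the product $\mathcal{A}\times\mathcal{A}$ over $\Sigma\times\Sigma$, whose transitions move the two copies synchronously on a letter pair $(a,b)$. The check guesses $p$, verifies $q_0\rightarrow^*p$ and that $p$ reaches $F$ in $\mathcal{A}$, and then verifies in the product a path $(p,p)\rightarrow^*(p,p)$ using at least one \emph{off-diagonal} transition ($a\neq b$); a reachability search maintaining the current product state and one flag bit suffices, so this stays in \textsc{NL}. Given such $p$ with $x\colon q_0\rightarrow^*p$, equal-length cycle labels $c_1\neq c_2$, and $y\colon p\rightarrow^*F$, decomposing any candidate word into blocks of length $|c_1|$ shows that the $2^m$ choices in $x\,c_{i_1}\cdots c_{i_m}\,y$ give $2^m$ \emph{distinct} words of a single length, certifying non-slenderness.

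The second witness (at least linear growth) consists of two cycles in series that are pumped independently, at a reachable state $p$ and a co-reachable state $r$ with $p\rightarrow^*r$. After raising both cycles to suitable powers so their lengths agree, the family $x\,c_1^{\,i}\,z\,c_2^{\,m-i}\,y$ has constant length for fixed $m$, and the corresponding \textsc{NL} check guesses $p,r$, verifies the relevant reachabilities, and tests a non-degeneracy condition in a product automaton ensuring that at least two of these words differ. I expect the \emph{main obstacle} to be exactly the completeness of this characterization for \emph{nondeterministic} automata, where distinct runs need not spell distinct words: one must show that unbounded density forces one of the two witnesses, and that the non-degeneracy test genuinely forces unboundedly many distinct words rather than merely distinct accepting runs. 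I would settle this by a pigeonhole/counting argument on the trimmed $\mathcal{A}$: if for arbitrarily large $n$ there are more than $|Q|^2$ words of length $n$, two of them must share state pairs at two synchronized positions, producing the cycle configuration of one of the witnesses; the sole degenerate case, in which both cycles are powers of a common word aligned through $z$, contributes only a single word per length and is precisely the slender case excluded by the non-degeneracy test.

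Finally, for \textsc{NL}-hardness I would give a logspace reduction from directed $s$-$t$ reachability (\textsc{STCON}). Given a digraph $G$ with vertices $V$, source $s$, and target $t$, build $\mathcal{A}_G$ over $\{a,b\}$ with state set $V$, initial state $s$, an $a$-transition $u\rightarrow v$ for each edge $(u,v)$, two self-loops $t\xrightarrow{a}t$ and $t\xrightarrow{b}t$, and $F=\{t\}$. If $t$ is reachable from $s$ by a path of edge length $k$, then $L(\mathcal{A}_G)\supseteq a^{k}\{a,b\}^*$, which is non-slender; if $t$ is unreachable, then $L(\mathcal{A}_G)=\emptyset$, which is slender. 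Thus \textsc{STCON} reduces to non-slenderness, and since $\textsc{NL}=\textsc{coNL}$ is closed under complement of reductions, slenderness is \textsc{NL}-hard as well, which together with membership completes the proof.
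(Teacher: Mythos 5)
Your overall strategy --- decide \emph{non}-slenderness, invoke \textsc{NL} $=$ \textsc{coNL}, and reduce from directed $s$-$t$ reachability for hardness --- matches the paper, and both your hardness reduction and your first witness (two distinct equal-length cycles through one reachable, co-reachable state, detected via an off-diagonal cycle in $\mathcal{A}\times\mathcal{A}$) are correct. The gap is in your second witness, at exactly the place you flag as the main obstacle but then pass over. The non-degeneracy condition you need is a property of concrete \emph{words}: $c_1z\neq zc_2$ for a cycle label $c_1$ at $p$, a connecting label $z$, and a cycle label $c_2$ at $r$ with $|c_1|=|c_2|$. (This is indeed the right condition: by the word-equation fact that $x^dz=zy^d$ with $|x|=|y|$ forces $xz=zy$, it yields $m+1$ pairwise distinct words $xc_1^izc_2^{m-i}y$ of a single length --- note that ``at least two of these words differ,'' which is all you claim, is not by itself enough for non-slenderness, so you need this stronger consequence anyway.) But you have not shown how to verify $c_1z\neq zc_2$ in \textsc{NL}, and the synchronized product walk you gesture at does not do it: that condition compares a guessed word against a copy of itself shifted by $|c_1|$ positions, so the two occurrences of $z$ occupy different time windows of the walk, and an \textsc{NL} machine can neither store the guessed $z$ nor re-read it to enforce that both windows carry the same word. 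This is precisely the difficulty the paper's witness is engineered to avoid: there the ``difference'' is between two paths $u_1,u_2$ that leave the \emph{same} state $q$ at the \emph{same} time and disagree at a common position $i$, which a walk in $\mathcal{A}\times\mathcal{A}$ checks on the fly with one flag bit; the existence of unboundedly many distinct pumped words is then \emph{derived} (one of $u_1,u_2$ cannot be a prefix of $w^\omega$ for the cycle $w$ at $q$), rather than being part of the guessed object.

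A second, lesser, issue is completeness of your two-witness characterization. Your pigeonhole sketch (two words of the same large length share a state pair at two synchronized positions) produces two equal-length cycles at two possibly \emph{different} states, which is neither of your witnesses, and two distinct words per length is in any case consistent with slenderness; converting unbounded density into one of your witnesses requires the structural argument the paper actually carries out (if no witness exists, then every co-reachable cycle state $q$ has all continuations confined to $\{w_q\}^*Z_q$ for finite $Z_q$, so the language is contained, up to finitely many words, in a finite union of sets $X_q\{w_q\}^*Z_q$ and hence is slender). I would replace your second witness by the paper's single witness --- a cycle at $q$, two paths from $q$ disagreeing at a synchronized position, each leading to a state that lies on a cycle and reaches $F$ --- which subsumes both of your witnesses, is checkable by plain reachability plus one flag, and for which both directions of correctness go through.
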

\begin{proof}
	The proof of this lemma corresponds essentially to the proof given by \cite{Tao06} for \textsc{NL}-completeness of the finiteness problem for B\"uchi automata.
	However, there are some minor but critical technical differences.
	
	We prove that the non-slenderness problem, \textit{i.e.}\ whether for a given automaton $\mathcal{A}$ the language $L_*(\mathcal{A})$ is not slender, is \textsc{NL}-complete.
	Then the \textsc{NL}-complete\-ness of the slenderness problem follows immediately because \textsc{NL} = \textsc{coNL} (\textit{cf.}\ \cite{Sze88}).
	
	To show \textsc{NL}-hardness it suffices to provide a many-one reduction from the reachability problem for directed graphs which is complete for \textsc{NL}.
	Given a directed graph $\mathcal{G}$ and two nodes  $s,t$ of $\mathcal{G}$ we obtain an automaton $\mathcal{A}_\mathcal{G}$ over the alphabet $\{a,b\}$ by labeling each edge of $\mathcal{G}$ with $a$ and declaring $s$ and $t$ to be the initial state and the (sole) accepting state, respectively.
	Furthermore, we add two transitions $(s,a,s)$ and $(s,b,s)$. Then $\mathcal{A}_\mathcal{G}$ recognizes the non-slender language $\{a,b\}^*L$ for some $L\subseteq \{a\}^*$ if and only if $t$ is reachable from $s$ in $\mathcal{G}$, and, otherwise, $\emptyset$.
	
	Let $\mathcal{A} = (Q,\Sigma,q_0,\Delta,F)$ be the given automaton.
	We claim that $L_*(\mathcal{A})$ is not slender if and only if there are $q,p_1,p_2\in Q$ and $f_1,f_2\in F$ such that
	\begin{enumerate}
		\item $q_0\xrightarrow{w_0} q$ and $q\xrightarrow{w} q$ for some $w_0\in\Sigma^*$ and $w\in\Sigma^+$,\label{lemma_slenderness_decidable_item1}
		\item there are $u_1,u_2\in\Sigma^+$ with $u_1[i]\neq u_2[i]$ for an index $i\leq \min(|u_1|,|u_2|)$, $q\xrightarrow{u_1}p_1$, and $q\xrightarrow{u_2}p_2$, and\label{lemma_slenderness_decidable_item2}
		\item there are $w_1,w_2\in\Sigma^+, v_1,v_2\in\Sigma^*: p_1\xrightarrow{w_1}p_1\xrightarrow{v_1}f_1$ and  $p_2\xrightarrow{w_2}p_2\xrightarrow{v_2}f_2$.\label{lemma_slenderness_decidable_item3}
	\end{enumerate}
	Suppose our claim holds.
	Then membership in \textsc{NL} because the conditions can easily verified by a nondeterministic logspace Turing machine (all conditions boil down to reachability, $u_1[i]\neq u_2[i]$ can be asserted on the fly in a parallel search).
	
	It remains to prove the claim.
	Suppose conditions \ref{lemma_slenderness_decidable_item1},\ref{lemma_slenderness_decidable_item2}, \ref{lemma_slenderness_decidable_item3} hold.
	Then either $u_1$ or $u_2$ is not a prefix of $w^\omega$, say w.l.o.g.\ $u_1$.
	Furthermore, we can assume that $|w_1| = |w|$.
	Otherwise, by repeating each word until the least common multiple of their lengths is reached we get words satisfying this property.
	Hence, for all $i,j$ the labelings of the accepting runs
	\[ q_0\xrightarrow{w_0}q\xrightarrow{w^i}q\xrightarrow{u_1}p_1\xrightarrow{w_1^j}p_1\xrightarrow{v_1}f_1 \]
	are pairwise different.
	Thus, $L_*(\mathcal{A})$ is not slender (for all solutions of $i+j = n$ for a fixed $n$ a unique word in $L_*(\mathcal{A})$ is obtained and all these words have the same length).
	
	On the contrary, suppose $L_*(\mathcal{A})$ is not slender.
	Consider the set of states
	\[P := \{q\in Q \mid \exists f\in F~\exists w\in\Sigma^+: q_0\rightarrow q\xrightarrow{w}q\rightarrow f  \}.\]
	If $P$ is empty then $L_*(\mathcal{A})$ is finite, and, thus, slender which is a contradiction.
	Assume for the sake of contradiction that for no $q\in P$ there are $p_1,p_2,f_1,f_2$ as above satisfying, together with $q$, the conditions \ref{lemma_slenderness_decidable_item1},\ref{lemma_slenderness_decidable_item2}, \ref{lemma_slenderness_decidable_item3}.
	Let $q\in P$ and $f\in F, w_q\in\Sigma^+$ be witnessing the membership of $q\in P$.
	By choosing $p_2 := q, f_2:=f$ and $u_2:= w_q$ we have that there is no $u_1$ which is not a prefix of $w^\omega$ and leads from $q$ to a productive state $p_1$  that is reachable from itself (via a non-empty word $w_2$).
	Let $\mathcal{A}_q$ be the automaton $\mathcal{A}$ with initial state $q$.
	We conclude that $L_*(\mathcal{A}_q) \subseteq \{w_q \}^*Z_q$ where $Z_q$ is a finite language.
	Moreover, since $P$ contains all productive states with a self-loop it follows that, up to finitely many words, $L_*(\mathcal{A})\subseteq \bigcup_{q\in P}X_q\{w_q\}^*Z_q$.
	Finally, observe that $X_q$ can be assumed to be finite.
	Otherwise, $q$ is reachable from another state in $q'\in P$ and, thus, $L_*(\mathcal{A}_q)Z_q$ is subsumed by $L_*(\mathcal{A}_{q'})Z_{q'}$.
	Then it is immediate that $\bigcup_{q\in P}X_q\{w_q\}^*Z_q$ is a slender language.
	It follows that $L_*(\mathcal{A})$ is slender which is a contradiction.
\end{proof}

Finally, we can combine our results to obtain the main result of this section.
Firstly, we state our approach to check whether an automatic equivalence has finite index and, afterwards, join it with the approach of \cite{CCG06}.
\begin{theorem}\label{lemma_rec_in_aut_omega_finite_index_decidable}
	Let $E\subseteq\Sigma^\omega\times\Sigma^\omega$ be an $\omega$-automatic equivalence relation and $\mathcal{A}_\#$ be a (nondeterministic) synchronous transducer defining $E_\#$.
	Then it is decidable in single exponential time whether $E$ has finite index.
\end{theorem}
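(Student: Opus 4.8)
The plan is to reduce the finite-index question for $E$ to a family of slenderness tests on the set of representatives $L_\#(E)$, combining Lemmas~\ref{rec_in_aut_omega_lemma_4} and~\ref{rec_in_aut_omega_lemma_5} with the slenderness test of Lemma~\ref{lemma_slenderness_decidable}. The only genuinely expensive step will be building an automaton for $L_\#(E)$ from $\mathcal{A}_\#$; everything after that is polynomial in the size of that automaton, so the whole procedure stays single exponential in $|\mathcal{A}_\#|$.

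First I would construct an automaton for $L_\#(E)$. Since $E_\#$ relates only words $u\#v$ and $x\#y$ with $|u|=|x|$ and $|v|=|y|$, equivalent words have the same length and the same position of $\#$, so the length-lexicographic order restricted to a class is just the lexicographic comparison of equal-length words. The order relation $<_\text{lex}$ is automatic and definable by a synchronous transducer with a constant number of states. Taking the product with $\mathcal{A}_\#$ (polynomial) and projecting onto the larger (second) component yields an NFA of polynomial size for the set $N$ of \emph{non}-representatives, i.e.\ words possessing a lexicographically smaller equivalent of the same shape. Complementing $N$ within the well-formed words $\Sigma^*\{\#\}\Sigma^*$ gives $L_\#(E)$; this is the step that forces one determinization and hence an automaton $\mathcal{C}$ of size single exponential in $|\mathcal{A}_\#|$.

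Next I would put $L_\#(E)$ into the shape required by Lemma~\ref{rec_in_aut_omega_lemma_5}. Each word accepted by $\mathcal{C}$ is read by processing some $u\in(\Sigma\setminus\{\#\})^*$ up to a state $q$, taking a $\#$-transition to a state $q'$, and then processing some $v\in(\Sigma\setminus\{\#\})^*$ to a final state. Letting $A_q$ be the $\#$-free language from the initial state to $q$ and $B_{q'}$ the $\#$-free language from $q'$ to a final state, I obtain $L_\#(E)=\bigcup_{(q,q')}A_q\{\#\}B_{q'}$, the union ranging over the $\#$-transitions of $\mathcal{C}$ with $A_q,B_{q'}\neq\emptyset$. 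There are at most polynomially many such pairs in $|\mathcal{C}|$, and automata for $A_q,B_{q'}$ arise from $\mathcal{C}$ by relocating the initial/final states. Chaining Lemma~\ref{rec_in_aut_omega_lemma_4} and Lemma~\ref{rec_in_aut_omega_lemma_5} then yields: $E$ has finite index if and only if every $A_q$ and every $B_{q'}$ in this decomposition is slender.

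Finally I would run the test of Lemma~\ref{lemma_slenderness_decidable} on each of the polynomially many component automata. Every such test lies in \textsc{NL} and is therefore polynomial in $|\mathcal{C}|$, so the total cost is dominated by the construction of $\mathcal{C}$, giving single exponential time in $|\mathcal{A}_\#|$. The main obstacle is keeping the construction of $L_\#(E)$ to a single exponential: the projection has to be carried out \emph{before} complementation so that only one determinization is incurred, and one must check that the decomposition into products $A_q\{\#\}B_{q'}$ matches the hypothesis of Lemma~\ref{rec_in_aut_omega_lemma_5} exactly, in particular that discarding the empty components leaves the counts $|L_\#(E)\cap\Sigma^n\{\#\}\Sigma^m|$ unchanged.
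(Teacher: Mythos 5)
Your proposal is correct and follows essentially the same route as the paper's proof: build a transducer for the ``has a lexicographically smaller equivalent'' relation from $\mathcal{A}_\#$, project, complement once (the single exponential step) to get an automaton for $L_\#(E)$, decompose it over its $\#$-transitions into products $L_{q_0p}\{\#\}L_{qF}$ with empty factors discarded, and reduce via Lemmas~\ref{rec_in_aut_omega_lemma_4} and~\ref{rec_in_aut_omega_lemma_5} to polynomially many slenderness tests handled by Lemma~\ref{lemma_slenderness_decidable}. Your observations that the projection must precede the complementation and that discarding empty components does not affect the counts are exactly the points the paper's complexity and correctness arguments rest on.
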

\begin{proof}
		Let $<_\text{lex}$ denote some (fixed) lexicographical ordering on $(\Sigma\cup\{\#\})^*$.
		We claim that the following algorithm decides, given a synchronous transducer $\mathcal{A}_\#$ defining $E_\#$, whether $E$ has finite index in exponential time.
	\begin{enumerate}
		\item Construct a synchronous transducer defining $E_\#^< := \{(u\#v,u'\#v')\in E_\# \mid u\#v <_\text{lex} u'\#v'\}$.
		\item Project $E_\#^<$ to the second component and obtain a transducer defining
		\begin{multline*}
		P_\# := \{u\#v\in \Sigma^*\{\#\}\Sigma^*\mid \exists u',v'\in \Sigma^*: u'\#v' <_\text{lex} u\#v \wedge (u'\#v',u\#v)\in E_\# \}.
		\end{multline*}
		\item Construct an automaton $\mathcal{B}_\# = (Q,\Sigma\cup\{\#\},\Delta,q_0,F)$ for $L_\#(E) = \overline{P_\#}\cap \Sigma^*\{\#\}\Sigma^*$.
		\item Construct automata for the factors $L_{q_0p}$ and $L_{qF}$ of the decomposition \[L_\#(E) = \bigcup_{(p,\#,q)\in\Delta} L_{q_0p}\{\#\}L_{qF}, \] of $L_\#(E)$ where \[L_{q_0p} := \{u\in\Sigma^*\mid \mathcal{B}_\#:q_0\xrightarrow{u}p \}\text{ and }L_{qF} := \{v\in\Sigma^*\mid \mathcal{B}_\#:q\xrightarrow{v}F \}.\]
		\item For each $(p,\#,q)$ such that $L_{q_0p}\neq \emptyset$ and $L_{qF}\neq \emptyset$,
		check if $L_{q_0p}$ and $L_{qF}$ are slender.
		If all checked languages are slender return yes ($E$ has finite index), otherwise no.
	\end{enumerate}

	\underline{Complexity:} Obtaining a transducer for $E_\#^<$ given $\mathcal{A}_\#$ is immediate.
	It can be checked on the fly by a modification of $\mathcal{A}_\#$ that rejects once it encounters a tuple $(x,y)$ of letters witnessing $u\#v\not <_\text{lex}u'\#v'$.
	Note that the condition $|u\#v| = |u'\#v'|$ is already verified by $\mathcal{A}_\#$.
	Projection and intersection of synchronous transducers can be achieved in polynomial time while the complementation of a synchronous transducer is achievable in exponential time.
	Hence, $\mathcal{B}_\#$ is exponential in the size of $\mathcal{A}_\#$.
	Further on, automata for $L_{q_0p}$ and $L_{qF}$ can easily be obtained from $\mathcal{B}$ in polynomial time.
	Furthermore, there are only polynomial many --- to be more precise, at most $|Q|^2$ many --- such languages $L_{q_0p}$ and $L_{qF}$ and emptiness as well as slenderness in the last step can be checked in polynomial time due to Lemma \ref{lemma_slenderness_decidable}.
	All in all, the given decision procedure runs in single exponential time.
	
	\underline{Correctness:}
	Indeed, $\mathcal{B}_\#$ defines a set of representatives of $E_\#$: it accepts precisely the words $u\#v$ such that there is no lexicographically smaller word $u'
	\#v'$ which is equivalent to $u\#v$ (\textit{cf.}\ \cite{CCG06}).
	By Lemma \ref{rec_in_aut_omega_lemma_4} $E$ has finite index if and only if there is a $k<\omega$ such that for all $m,n>0: |L_{\#}(E)\cap\Sigma^n\{\#\}\Sigma^m|\leq k$.
	If $L_{q_0p} = \emptyset$ or $L_{qF} = \emptyset$ for some $(p,\#,q)\in\Delta$ the segment $L_{q_0p}\{\#\}L_{qF}$ can be removed from the union $L_\#(E) = \bigcup_{(p,\#,q)\in\Delta} L_{q_0p}\{\#\}L_{qF}$ without altering the language.
	The remaining union (\textit{i.e.}\ with all those segments removed) satisfies the condition of Lemma \ref{rec_in_aut_omega_lemma_5}.
	It follows that $E$ has finite index if and only if there is a $k<\omega$ such that for all $m,n>0: |L_{\#}(E)\cap\Sigma^n\{\#\}\Sigma^m|\leq k$ if and only if for each $(p,\#,q)$ such that $L_{q_0p}\neq \emptyset$ and $L_{qF}\neq \emptyset$ it holds that $L_{q_0p}$ and $L_{qF}$ are slender.
	Thus, the decision procedure is correct.
\end{proof}

\begin{theorem}\label{theorem_rec_in_aut_omega_decidable}
	Given a complete deterministic synchronous parity transducer $\mathcal{A}$ it is decidable in double exponential time whether $R_\omega(\mathcal{A})$ is $\omega$-recognizable.
\end{theorem}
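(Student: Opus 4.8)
The plan is to compose the three preceding results into a pipeline, carefully tracking the size blow-up at each stage so that the combined procedure runs in doubly exponential time. Let $R = R_\omega(\mathcal{A})$ be of arity $k$. By Lemma~\ref{lemma_restrictions_recognizable_finite_implies_rec_omega_automatic}, $R$ is $\omega$-recognizable if and only if each of the $k$ equivalence relations $E_1,\ldots,E_k$ has finite index. Hence it suffices to decide, for each fixed $j\in\{1,\ldots,k\}$, whether $E_j$ has finite index and to answer ``yes'' precisely when all $k$ tests succeed; this costs only an extra factor of $k$.

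For a fixed $j$ I would first construct a nondeterministic synchronous B\"uchi transducer $\mathcal{A}^{(j)}$ for the complement $\overline{E_j}$, viewing $E_j$ as a binary relation over the product alphabet $\Sigma_1\times\cdots\times\Sigma_j$ as explained before Definition~\ref{def:slender}. The crucial requirement is that $|\mathcal{A}^{(j)}|$ stay \emph{polynomial} in $|\mathcal{A}|$. By definition, $(\bar u,\bar v)\in\overline{E_j}$ iff there is a completion $\bar w = (w_{j+1},\ldots,w_k)$ witnessing exactly one of $(\bar u,\bar w)\in R$ and $(\bar v,\bar w)\in R$. Since $\mathcal{A}$ is a complete deterministic parity transducer, a deterministic parity transducer for $\overline{R}$ of the same size is obtained by incrementing every priority, and each of $R$ and $\overline{R}$ translates into a nondeterministic B\"uchi automaton with only a polynomial blow-up. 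The transducer $\mathcal{A}^{(j)}$ then reads $(\bar u,\bar v)$ synchronously, guesses $\bar w$ letter by letter, and runs in parallel a B\"uchi automaton for $R$ on $(\bar u,\bar w)$ together with one for $\overline{R}$ on $(\bar v,\bar w)$ (and the symmetric disjunct); the shared guess $\bar w$ is projected away for free because the automaton is nondeterministic. As a product of polynomially many polynomial-size B\"uchi automata, $\mathcal{A}^{(j)}$ is polynomial in $|\mathcal{A}|$.

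Next I would apply Lemma~\ref{rec_in_aut_omega_lemma_3} to $\mathcal{A}^{(j)}$ to obtain a synchronous transducer $\mathcal{A}^{(j)}_\#$ defining $(E_j)_\#$ (Definition~\ref{definition_E_finite_A}); since the transition-profile construction is exponential, $|\mathcal{A}^{(j)}_\#|$ is exponential in $|\mathcal{A}^{(j)}|$ and hence exponential in $|\mathcal{A}|$. Finally, feeding $\mathcal{A}^{(j)}_\#$ into Theorem~\ref{lemma_rec_in_aut_omega_finite_index_decidable} decides whether $E_j$ has finite index in time single-exponential in $|\mathcal{A}^{(j)}_\#|$, that is, doubly exponential in $|\mathcal{A}|$. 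Running this for all $j$ and taking the conjunction yields the claimed doubly exponential decision procedure for $\omega$-recognizability of $R$.

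The step I expect to be the main obstacle is the second one: making the complement transducer $\mathcal{A}^{(j)}$ genuinely polynomial. If one complemented after an intermediate determinization, or eliminated $\bar w$ by a subset construction, then $\mathcal{A}^{(j)}$ would already be exponential, and the two subsequent exponential stages would push the overall bound up to triply exponential. The determinism of $\mathcal{A}$ (so that complementing the parity condition is a cost-free priority shift, using completeness) together with the nondeterministic guessing of $\bar w$ (so that existential projection is free) are precisely what keep this stage polynomial and the final complexity at doubly exponential.
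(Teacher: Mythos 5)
Your proposal is correct and follows essentially the same route as the paper's proof: reduce via Lemma~\ref{lemma_restrictions_recognizable_finite_implies_rec_omega_automatic} to finite-index tests for the $E_j$, build a \emph{polynomial-size} nondeterministic Büchi transducer for $\overline{E_j}$ by exploiting determinism and completeness of $\mathcal{A}$ for the complementation and nondeterminism for the existential completion $\bar w$ (the paper phrases this as $\overline{E_j} = R\circ_j \overline{R'}\cup \overline{R}\circ_j R'$), then apply Lemma~\ref{rec_in_aut_omega_lemma_3} and Theorem~\ref{lemma_rec_in_aut_omega_finite_index_decidable}. You also correctly pinpoint the same complexity-critical step the paper emphasizes, namely keeping the $\overline{E_j}$ transducer polynomial so the two subsequent exponentials compose to a doubly exponential bound.
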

\begin{proof}
	Due to \cite{CCG06} we can obtain synchronous B\"uchi transducers for the $\omega$-automatic equivalence relations $\overline{E_j}$ w.r.t.\ $R:=R_\omega(\mathcal{A})$ for all $1\leq j\leq k$ in polynomial time.
	For the sake of completeness we will briefly sketch the construction.
	Let $R' := \{ (u_{j+1},\ldots,u_k,u_1,\ldots,u_j)\mid (u_1,\ldots,u_k)\in R\}$.
	That is, $R'$ is obtained from $R$ by swapping the first $j$ entries and the $k-j$ last entries of each tuple.
	Clearly, $R'$ is $\omega$-automatic.
	Hence, we can obtain transducers for $\overline{R}$ and $\overline{R'}$ in polynomial time, since $R$ is given by a \emph{complete} synchronous deterministic parity transducer.
	Furthermore, let $\circ_j$ be the composition operation on $k$-ary relations linking the last $k-j$ entries of a tuple  with the first $k-j$ entries.
	That is, if $(u_1,\ldots,u_j,w_{j+1},w_k)\in S$ and $(w_{j+1},w_k,v_1,\ldots,v_j)\in T$  then $(u_1,\ldots,u_j,v_1,\ldots,v_j)\in S\circ_j T$.
	Finally, observe that $\overline{E_j} = R\circ_j \overline{R'}\cup \overline{R}\circ_j R'$.
	Note that this step is achievable in polynomial time, since we are constructing a transducer for $\overline{E_j}$ and not $E_j$ (the transducers defining the compositions $R\circ_j \overline{R'}$ and $\overline{R}\circ_j R'$ are nondeterministic).
	
	Further on, observe that we can understand an equivalence relation $E_j$ over $\aryomegarelation{j}$ as an equivalence relation over $\Sigma^\omega$ with $\Sigma^\omega =\big(\Sigma_1\times\ldots\times\Sigma_j\big)^\omega \approx \aryomegarelation{j}$.
	Fix $j\in\underline{k}$.
	Then a transducer that defines $E_\#$ w.r.t.\ $E:=E_j$ is constructible in single exponential time due to Lemma \ref{rec_in_aut_omega_lemma_3} given a transducer for $\overline{E_j}$.
	By Lemma~\ref{lemma_restrictions_recognizable_finite_implies_rec_omega_automatic}, we have that $R$ is recognizable if and only if all $E_j$ have finite index.
	The latter is decidable for $E:=E_j$ in single exponential time given a transducer for $E_\#$  due to Theorem \ref{lemma_rec_in_aut_omega_finite_index_decidable}.
	
	Thus, it is decidable in double exponential time whether $R_\omega(\mathcal{A})$ is $\omega$-re\-cog\-niz\-able.
\end{proof}

\subsection{Slenderness vs.\ Finiteness}\label{subsec:omega-rec-in-aut-slenderness-vs-finiteness}
As mentioned above, we state the connection between the slenderness problem for regular languages and the finiteness problem for Büchi automata.
Recall that the algorithm given by \cite{CCG06} for deciding recognizability of automatic relations checks in the end whether the (regular) set of representatives of an equivalence relation is finite.
The decision procedure given in Theorem \ref{theorem_rec_in_aut_omega_decidable} for recognizability of $\omega$-automatic relations instead checks for slenderness.
We say that an automaton is trimmed w.r.t.\ the Büchi condition if for each state of the automaton there is a non-empty word leading to an accepting state.
\begin{lemma}\label{lemma_slenderness_vs_finiteness}
               Let $\mathcal{A}$ be an automaton trimmed w.r.t.\ the Büchi condition. Then $L_\omega(\mathcal{A})$ is finite if and only if $L_*(\mathcal{A})$ is slender.
\end{lemma}

\begin{proof}
	Suppose $L_\omega(\mathcal{A})$ is finite.
	Since $\mathcal{A}$ is trimmed w.r.t.\ the Büchi condition, each word $w\in L_*(\mathcal{A})$ is a prefix of some word in $L_\omega(\mathcal{A})$ because each (finite) run of $\mathcal{A}$ can be extended to an infinite run visiting infinitely many accepting states.
	Thus, for each $\ell<\omega$ there are at most  $k:= |L_\omega(\mathcal{A})| < \omega$ many words of length $\ell$.
	Hence, $L_*(\mathcal{A})$ is slender.

    Assume $L_\omega(\mathcal{A})$ is infinite and let $k<\omega$.
    It suffices to show that there are more than $k$ pairwise different words of the same length $\ell$ in $L_*(\mathcal{A})$.
    Then $L_*(\mathcal{A})$ cannot be slender.
    Since $\mathcal{A}$ has only finitely many states and $L_\omega(\mathcal{A})$ is infinite we can find an $\ell<\omega$, a state $q$ of $\mathcal{A}$, and $k+1$ infinite words $\alpha_0,\ldots,\alpha_k$ in $L_\omega(\mathcal{A})$ such that for $0\leq i\leq k$ the prefixes $\alpha_i[0,\ell]$ are pairwise different and an accepting run of $\mathcal{A}$ on $\alpha_i$ is in state $q$ after processing the prefix $\alpha_i[0,\ell]$.
    Let $w$ be a finite word leading from $q$ into some accepting state of $\mathcal{A}$.
    Note that $w$ exists because $\mathcal{A}$ is trimmed w.r.t.\ the Büchi condition.
    Then the words $\alpha_i[0,\ell]w$ are all pairwise different but of the same length and in $L_*(\mathcal{A})$.
\end{proof}
 \section{Deciding Recognizability of Automatic Relations}\label{section_visibly_binary_automatic} \label{sec:rec-in-aut}
In Section~\ref{subsec:omega-rec-in-aut-revision} we have sketched the approach presented by \cite{CCG06} for deciding recognizability of an automatic relation. In this section we revisit the problem to obtain an exponential time upper  bound for the case of binary relations. The procedure is based on a reduction to the regularity problem for  \textsc{VPA}s (Lemma~\ref{lemma_rec_in_aut_visibly_construction}). The other main contribution in this section, which is interesting on its own, is a polynomial time algorithm to solve the regularity problem for \textsc{DVPA}s. We start by describing the regularity test.

\subsection{Deciding Regularity for Deterministic Visibly Pushdown Automata}
We start by briefly discussing why the polynomial time regularity test
for visibly pushdown processes as presented by \cite{Srb06} does not
imply our result. The model used by \cite{Srb06} cannot use
transitions that cause a pop operation when the stack is empty. One
can try to circumvent this problem by introducing new internal symbols
that simulate pop-operations on the empty stack: For each $r \in
\Sigma$ introduce a new internal symbol $a_r$, and modify the DVPA
such that it can read $a_r$ instead of $r$ when the stack is empty (we
do not detail such a construction because it is straight
forward). This yields a DVPA without pop-operations on the empty
stack. However, this operation changes the accepted language, and this
change does not preserve regularity, in general. To see this, consider
the following example with one call symbol $c$ and one return symbol
$r$. The VPA has two states $q_c$ and $q_r$, where $q_c$ is initial,
and both states are final. The stack alphabet is $\{\gamma,\bot\}$,
and the transitions are $(q_c,c,q_c,\gamma)$, $(q_c,r,\gamma,q_r)$,
$(q_r,r,\gamma,q_r)$, $(q_r,r,\bot,q_r)$. This DVPA accepts the
regular language $c^*r^*$. Obviously, there is no DVPA that accepts
the same language without pop-operations on the empty stack. The
transformation into a DVPA without pop-operations on the empty stack,
as described above, introduces a new internal symbol $a_r$ and results
in a DVPA accepting all words of the form $c^nr^m$ with $m \le n$, or
of the form $c^nr^na_r^*$. This language is not regular anymore,
showing that such a transformation cannot be used in the context of a
regularity test.

We now proceed with the description of our polynomial time regularity
test for DVPAs. It is based on the following result, which states that
in a DVPA accepting a regular language, all configurations, that only
differ ``deep inside'' the stack, are equivalent.

\begin{lemma}\label{lemma_rec_in_aut_visibly_core_lemma}
	Let $\mathcal{P}$ be a \textsc{DVPA} with $n$ states. Then $L(\mathcal{P})$ is regular if and only if all pairs $(p,\alpha\beta), (p,\alpha\beta')$ of reachable configurations of $\mathcal{P}$ with $|\alpha| \geq n^3+1$ are $\mathcal{P}$-equivalent.
\end{lemma}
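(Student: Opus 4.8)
The plan is to prove the two directions separately, with the left-to-right (contrapositive) direction being the conceptually harder one. First I would set up the easy direction: assume all pairs $(p,\alpha\beta),(p,\alpha\beta')$ of reachable configurations with $|\alpha|\geq n^3+1$ are $\mathcal{P}$-equivalent. The idea is that equivalence $\approx_{\mathcal{P}}$ then has only finitely many classes among reachable configurations, because the accepted language $L(p,\alpha)$ depends only on the state $p$ together with the \emph{top} $n^3+1$ stack symbols (everything deeper is irrelevant by assumption). Since there are $n$ states and at most $|\Gamma|^{n^3+1}$ relevant stack prefixes, the number of equivalence classes of reachable configurations is finite. I would then build a finite automaton whose states are these classes, with transitions induced by the (deterministic) transition function of $\mathcal{P}$; one must check that the transition is well-defined on classes, i.e.\ that $(p,\alpha)\approx_{\mathcal{P}}(q,\beta)$ is preserved by reading a letter. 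This uses that $\approx_{\mathcal{P}}$ is a right congruence with respect to the proceed relation, which follows directly from the definition $L(p,\alpha)=L(q,\beta)$. Hence $L(\mathcal{P})$ is recognized by a finite automaton and is regular.

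For the harder direction I would argue the contrapositive: assume there exist reachable configurations $(p,\alpha\beta),(p,\alpha\beta')$ with $|\alpha|\geq n^3+1$ that are \emph{not} $\mathcal{P}$-equivalent, and derive that $L(\mathcal{P})$ is not regular. Since they are inequivalent, there is a word $w$ accepted from one but not the other; by the unique decomposition of words recalled in the preliminaries, such a distinguishing $w$ either stays well-matched (so the behavior is stack-independent, contradicting inequivalence) or must pop deep into the stack, reaching the region where $\alpha$ sits and where $\beta,\beta'$ differ. The key structural step is to use a counting/pigeonhole argument on the stack of height $\geq n^3+1$: because $|\alpha|\geq n^3+1$ and $\mathcal{P}$ has only $n$ states, along the push sequence that created $\alpha$ there must be a repeated configuration-type, yielding a pumpable ``loop'' in the stack. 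The plan is to exhibit a family of reachable configurations $(p,\alpha_i)$ obtained by pumping this loop, all pairwise inequivalent because each still distinguishes via a tailored distinguishing word, so that $\approx_{\mathcal{P}}$ has infinitely many classes, whence (by the Myhill--Nerode principle, adapted to DVPA configurations reachable by input words) $L(\mathcal{P})$ is not regular.

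The precise bound $n^3+1$ is what I expect to be the main obstacle, and it is where I would spend the most care. The cube arises from the need to track \emph{three} independent pieces of finite information along the stack: the state at the push, the state to which control would return when this stack symbol is eventually popped (a ``summary'' or return-state function), and possibly a third component recording acceptance behavior; each ranges over a set of size $O(n)$, and a pigeonhole argument over a stack of height exceeding $n^3$ forces a repetition of the full triple at two distinct heights. Making this rigorous requires defining the right notion of stack-symbol summary and showing that a repeated summary lets one cut out or insert a stack segment while \emph{preserving reachability} and \emph{controlling equivalence}. I would formalize the summary of a pushed symbol $\gamma$ at a given moment as the pair (current state, partial function mapping each state to the state reached upon popping $\gamma$ via a well-matched computation), analogous to transition profiles used earlier; the finiteness of this summary space gives the $n^3$ threshold once one counts the product of the relevant state components. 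The delicate point is ensuring the pumped configurations remain genuinely inequivalent and reachable simultaneously, which is exactly what forces the distinguishing witnesses $\beta,\beta'$ deep in the stack to survive the pumping.
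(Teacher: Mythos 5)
Your overall skeleton matches the paper's proof: the easy direction via a finite quotient of reachable configurations under $\approx_\mathcal{P}$, and the hard direction via a pigeonhole argument along a stack of height $n^3+1$ followed by pumping. However, there are two genuine gaps in the hard direction. First, you have not identified the right triple behind the exponent $3$. The two inequivalent configurations $(p,\alpha\beta)$ and $(p,\alpha\beta')$ are reached by \emph{two different} input words $u$ and $v$, each of which pushes its own copy of $\alpha$; any pumping must be performed \emph{simultaneously} in $u$, in $v$, and in the separating word $w$ (which must pop all of $\alpha$, giving a decomposition $w=w_m r_m\ldots w_1 r_1 w'$). The paper's pigeonhole is therefore over triples $(p_i,s_i,q_i)$, where $p_i$ and $s_i$ are the states of the $u$-run and the $v$-run at the moment the $i$-th symbol of $\alpha$ is pushed, and $q_i$ is the state of the separating run at the moment that symbol is popped; there are exactly $n^3$ such triples, and repeating the segment between two equal triples keeps \emph{both} configurations reachable with the \emph{same} pumped top portion $\alpha_\ell$ and keeps the (pumped) separating word valid. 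Your proposed summary, a pair consisting of the current state and a partial return-state function, does not synchronize the two reachability runs at all, and a partial function from states to states has on the order of $(n+1)^n$ values, so it cannot yield the bound $n^3+1$.

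Second, your concluding step does not follow from what you construct. The pumping gives, for each $\ell$, a \emph{pair} $(p,\alpha_\ell\beta)\not\approx_\mathcal{P}(p,\alpha_\ell\beta')$; it does not give that the configurations for different $\ell$ are pairwise inequivalent, and a family of inequivalent pairs alone could live in as few as two Myhill--Nerode classes, so "infinitely many classes" is not established. The property the construction actually delivers is that the pair at level $\ell$ is separated by \emph{some} word but by \emph{no} word of length less than $\ell$ (since any separating word must pop the whole of $\alpha_\ell$, which has length at least $\ell$). The paper derives the contradiction from this: if $L(\mathcal{P})$ were recognized by a complete DFA with state set $S$, the words reaching the two configurations would lead to two DFA states that are either equal (contradicting separability) or distinguishable by a word of length at most $|S|^2$ (contradicting the lower bound $\ell$ once $\ell>|S|^2$). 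You should replace your "infinitely many classes" conclusion by this bounded-distinguisher argument, or else actually prove pairwise inequivalence of the pumped family, which your sketch does not do.
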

\begin{proof}
	Let $\mathcal{P} = (P,\Sigma,\Gamma,p_0,\bot,\Delta,F)$ be a deterministic visibly pushdown automaton. Furthermore, we set $n:=|P|$ and $m:=n^3+1$.
	
	\underline{$\Leftarrow$}:
	Suppose all pairs $(p,\alpha\beta), (p,\alpha\beta')$ of reachable configurations of $\mathcal{P}$ with $|\alpha| \geq m$ are $\mathcal{P}$-equivalent.
	Let $\mathfrak{C}\subseteq (P\times(\Gamma\setminus\{\bot\})^*\{\bot\})$ be the set of reachable configurations of $\mathcal{P}$.
	Then $\mathfrak{C}_{/\approx_\mathcal{P}}$ is finite, since each reachable configuration of the form $(p,\alpha\beta)$ is $\mathcal{P}$-equivalent to a reachable configuration $(p,\alpha\beta')$ with $|\beta'|\leq|\beta|$ minimal.
	Therefore, $L(\mathcal{P})$ is regular.
	A witnessing finite automaton is the canonical quotient automaton over the set of all reachable configurations given by 
	\[\mathcal{A}_{\approx_\mathcal{P}} = (\mathfrak{C}_{/\approx_\mathcal{P}},\Sigma,[q_0,\bot]_{\approx_\mathcal{P}}, \Delta_{/\approx_\mathcal{P}},F_{/\approx_\mathcal{P}})\] where $F_{/\approx_\mathcal{P}} = \{[p,\alpha]_{\approx_\mathcal{P}}\in \mathfrak{C}_{/\approx_\mathcal{P}}\mid p\in F \}$ and $\Delta_{\approx_\mathcal{P}}$ contains a transition $([p,\alpha]_{\approx_\mathcal{P}},a,[q,\beta]_{\approx_\mathcal{P}})$ if and only if $\mathcal{P}$ can proceed from $(p,\alpha)$ to some $(q',\beta')\in [q,\beta]_{\approx_\mathcal{P}}$ via $a$.
	
	\underline{$\Rightarrow$}:
	Suppose $L(\mathcal{P})$ is regular.
	Then there is a complete deterministic automaton $\mathcal{A}$ with state set $S$ defining $L(\mathcal{P})$.
	For the sake of contradiction, assume there are two reachable configurations $(p,\alpha\beta),(p,\alpha\beta')$ with $|\alpha|\geq m$ that are not $\mathcal{P}$-equivalent (obviously, this implies that $\beta\neq\beta'$).
	We claim that for each $\ell\in\mathbb{N}$ there is a pair $(p_\ell,\alpha_\ell),(p_\ell,\beta_\ell)$ of reachable, non-equivalent configurations such that for each $x\in L(p_\ell,\alpha_\ell)\triangle L(p_\ell,\beta_\ell)$ we have that $|x|>\ell$. In other words, there are configurations that can only be separated by words of length at least $\ell$ for each $\ell\in\mathbb{N}$.
	We postpone the proof of this claim and show that it is a contradiction to $L(\mathcal{P})$ being regular first.
	Let $u,v$ be words witnessing the reachability of $(p_\ell,\alpha_\ell)$ and $(p_\ell,\beta_\ell)$ for some $\ell\in\mathbb{N}$.
	Furthermore, let $s_u,s_v\in S$ be the unique states that are reached by $\mathcal{A}$ reading $u$ and $v$, respectively (starting in the initial state).
	Finally, w.l.o.g.\ pick a word $x_\ell \in L(p_\ell,\alpha_\ell)\setminus L(p_\ell,\beta_\ell)$ of minimal length.
	Note that $|x_\ell|\geq \ell$.
	Then $\mathcal{A}$ ends up in an accepting state reading $x_\ell$ starting from $s_u$ but in a non-accepting state reading $x_\ell$ starting from $s_v$ because $L(\mathcal{A}) = L(\mathcal{P})$.
	But then there is a word $y_\ell$ of length at most $|S|^2$ such that $\mathcal{A}$ ends up in an accepting state reading $y_\ell$ starting from $s_u$ but in a non-accepting state reading $y_\ell$ starting from $s_v$.
	Moreover, $y_\ell \in L(p_\ell,\alpha_\ell)\setminus L(p_\ell,\beta_\ell)$ because $L(\mathcal{A}) = L(\mathcal{P})$ and both automata are deterministic.
	This is a contradiction to the choice of $x_\ell$ for a sufficient large $\ell$.
	
	It remains to construct the configurations $(p_\ell,\alpha_\ell)$ and $(p_\ell,\beta_\ell)$.
	Let $u,v\in\Sigma^*$ be words witnessing the reachability of $(p,\alpha\beta)$ and $(p,\alpha\beta')$, respectively.
	Furthermore, let $w\in\Sigma^*$ be a witness for the non-equivalence of these configurations.
	W.l.o.g.\ $w\in L(p,\alpha\beta)\setminus L(p,\alpha\beta')$.
	That is, $uw\in L(\mathcal{P})$ but $vw\notin L(\mathcal{P})$ because $\mathcal{P}$ is deterministic.
	Since $\alpha$ is on the top of the stack of both configurations and the state component is the same, we have that $|w|\geq |\alpha|\geq m$ ($\mathcal{P}$ has to pop $\alpha$ from the stack while reading $w$ which requires $m$ return symbols in $w$; otherwise, the runs cannot differ).
	More precisely, there are well-matched words $w_1,\ldots, w_{m}$ and $r_1,\ldots,r_{m}$ such that $w=w_mr_m\ldots w_1r_1w'$ for some $w'\in\Sigma^*$.
	Similarly, $u=u'c_1^uu_2\ldots u_mc_m^uu_{m+1}$ and $v=v'c_1^vv_2\ldots v_mc_m^vv_{m+1}$ where the $u_i,v_i$ are well-matched words, $c_i^u,c_i^v$ are call symbols and $u',v'$ are words responsible for the lower stack contents $\beta$ and $\beta'$, respectively.
	All in all, the runs of $\mathcal{P}$ on $uw$ and $vw$ have the following shape ($\alpha = \gamma_n\ldots \gamma_1$, $f\in F$, $e\notin F$):\\
	\resizebox{\textwidth}{!}{		\begin{tikzpicture}
		\matrix(m)[matrix of math nodes, row sep=2em, column sep=2em, text height=1.5ex, text depth=0.25ex,ampersand replacement=\&]
		{(p_0,\bot) \& (p_1,\beta) \& (p_1',\gamma_1\beta) \& (p_2,\gamma_1\beta) \& \ldots \& (p'_m,\gamma_n\ldots\gamma_1\beta)  \&  (p,\alpha\beta)  \\
			\phantom{(p_0,\bot)}			\& (q_{m},\alpha\beta) \& (q_{m}',\gamma_{m-1}\ldots\gamma_1\beta) \& \ldots \& (q_1,\gamma_1\beta) \& (q_1',\beta) \& (f,\lambda)\\
			(p_0,\bot) \& (s_1,\beta') \& (s_1',\gamma_1\beta') \& (s_2,\gamma_1\beta') \& \ldots \& (s'_m,\gamma_n\ldots\gamma_1\beta')  \&  (p,\alpha\beta')  \\
			\phantom{(p_0,\bot)}			\& (q_{m},\alpha\beta') \& (q_{m}',\gamma_{m-1}\ldots\gamma_1\beta') \& \ldots \& (q_1,\gamma_1\beta') \& (q_1',\beta') \& (e,\rho)\\
		};
		\path[dashed]
		(m-1-2) edge node {} (m-2-5.north)
		(m-1-3) edge node {} (m-2-6.north)
		(m-3-2) edge node {} (m-4-5.north)
		(m-3-3) edge node {} (m-4-6.north)
		;
		\path[->,thick]
		(m-1-1) edge node[above] {$u'$} (m-1-2)
		(m-1-2) edge node[above] {$c_1^u$} (m-1-3)
		(m-1-3) edge node[above] {$u_2$} (m-1-4)
		(m-1-4) edge node[above] {$c_2^u$} (m-1-5)
		(m-1-5) edge node[above] {$c_m^u$} (m-1-6)
		(m-1-6) edge node[above] {$u_{m+1}$} (m-1-7)
		
		(m-2-1) edge node[above] {$w_m$} (m-2-2)
		(m-2-2) edge node[above] {$r_m$} (m-2-3)
		(m-2-3) edge node[above] {$w_{m-1}$} (m-2-4)
		(m-2-4) edge node[above] {$w_1$} (m-2-5)
		(m-2-5) edge node[above] {$r_1$} (m-2-6)
		(m-2-6) edge node[above] {$z'$} (m-2-7)
		
		(m-3-1) edge node[above] {$v'$} (m-3-2)
		(m-3-2) edge node[above] {$c_1^v$} (m-3-3)
		(m-3-3) edge node[above] {$v_2$} (m-3-4)
		(m-3-4) edge node[above] {$c_2^v$} (m-3-5)
		(m-3-5) edge node[above] {$c_m^v$} (m-3-6)
		(m-3-6) edge node[above] {$v_{m+1}$} (m-3-7)
		
		(m-4-1) edge node[above] {$w_m$} (m-4-2)
		(m-4-2) edge node[above] {$r_m$} (m-4-3)
		(m-4-3) edge node[above] {$w_{m-1}$} (m-4-4)
		(m-4-4) edge node[above] {$w_1$} (m-4-5)
		(m-4-5) edge node[above] {$r_1$} (m-4-6)
		(m-4-6) edge node[above] {$z'$} (m-4-7)
		;
		\end{tikzpicture}}\\
	Observe that the symbol pushed by the transitions originating in $p_i$ or $s_i$ is popped from the transition originating in $q_i$.
	Since $\mathcal{P}$ has $n$ states, there are $n^3$ possible valuations for a triple $(p_i,s_i,q_i)$.
	On the other hand, there are $m=n^3+1$ many triples $(p_i,s_i,q_i)$ in the outlined run.
	It follows that there are indices $1\leq i< j\leq m$ such that $(p_i,s_i,q_i) = (p_j,s_j,q_j)$.
	Let $\ell > 0$. By repeating the path fragments identified by $i$ and $j$ we obtain that the configurations $(p,\alpha_\ell\beta)$ and $(p,\alpha_\ell\beta')$ with $\alpha_\ell:= \gamma_m\ldots \gamma_{j}(\gamma_{j-1}\ldots\gamma_{i})^\ell \gamma_{i-1}\ldots\gamma_1$ are reachable.
	The reachability is witnessed by the words
	\begin{multline*}
	u_\ell = u'c_1^uu_2\ldots u_{i}(c_{i}^uu_{i+1}\ldots u_{j})^\ell c_j^uu_{j+1}\ldots u_{m+1}\\\text{and }v_\ell = v'c_1^vv_2\ldots v_{i}(c_{i}^vv_{i+1}\ldots v_{j})^\ell c_j^vv_{j+1}\ldots v_{m+1}.
	\end{multline*}
	Moreover, $(p,\alpha_\ell\beta)$ and $(p,\alpha_\ell\beta')$ are not $\mathcal{P}$-equivalent because the word
	\[w_mr_m\ldots w_{j}r_{j}w_{j-1}(r_{j-1}\ldots r_{i}w_{i-1})^\ell r_{i-1}\ldots r_1z'\] separates them.
	We conclude the proof by the observation that $(p,\alpha_\ell\beta)$ and $(p,\alpha_\ell\beta')$ cannot be separated by any word of length less than $\ell$, since $|\alpha_\ell|\geq\ell$.
\end{proof}

\begin{theorem}\label{theorem_rec_in_aut_visibly_decidable}
	It is decidable in polynomial time whether a given \textsc{DVPA} defines a regular language.
\end{theorem}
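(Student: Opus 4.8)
The plan is to turn the infinitary characterization of Lemma~\ref{lemma_rec_in_aut_visibly_core_lemma} into a polynomial-time test. Writing $n := |P|$ and $m := n^3+1$, that lemma says $L(\mathcal{P})$ is regular iff no \emph{separating pair} exists, i.e.\ no two reachable configurations $(p,\alpha\beta)$ and $(p,\alpha\beta')$ with $|\alpha|\ge m$ and $L(p,\alpha\beta)\neq L(p,\alpha\beta')$. The whole task is therefore to decide, in polynomial time, whether such a separating pair is present, even though there are infinitely many candidate configurations.

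The key structural tool I would set up first is a \emph{peeling recursion} for configuration equivalence that isolates the influence of the shared top $\alpha$. For a state $p$ and a single stack symbol $X$, let $D_{p,X}\subseteq P$ be the set of states reachable immediately after $X$ is popped, i.e.\ all $q$ for which some input word $x$ satisfies $\mathcal{P}\colon (p,X\gamma)\xrightarrow{x}(q,\gamma)$ for every stack content $\gamma$. By the stack-content invariance of well-matched words (Section~\ref{sec:prelim}) these sets are well defined independently of the deeper stack, and they are computable in polynomial time by the usual saturation / reachability procedure for \textsc{VPA}s. Using determinism — so that each input has a unique run and hence each popping word has a unique pop-target $q\in D_{p,X}$ — one shows $(p,X\sigma)\approx_{\mathcal{P}}(p,X\sigma')$ iff $(q,\sigma)\approx_{\mathcal{P}}(q,\sigma')$ for all $q\in D_{p,X}$: words accepted before $X$ is popped are accepted from both or from neither and depend only on $p,X$, while a word that pops $X$ to state $q$ continues as $L(q,\sigma)$ versus $L(q,\sigma')$. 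Iterating over the symbols of $\alpha$ yields a set $D_{p,\alpha}\subseteq P$ of pop-chain targets with $(p,\alpha\beta)\approx_{\mathcal{P}}(p,\alpha\beta')$ iff $(q,\beta)\approx_{\mathcal{P}}(q,\beta')$ for all $q\in D_{p,\alpha}$.

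With this in hand I would search for a separating pair by a reachability/cycle-detection argument on a polynomial-size product structure. Reachable configurations are summarized symbolically by the reachable $(\text{state},\text{top-symbol})$ data together with the pop-target sets, all of polynomial size. To compare two configurations I would walk down their stacks in lock-step through the $D$-sets, maintaining a subset of $P$ of currently relevant states; a separating pair with $|\alpha|\ge m$ then corresponds to a path that first realizes a common top of length at least $m$ and afterwards reaches a configuration pair certified non-equivalent at the level below. Because there are only $n$ states and $n^3$ valuations of the state-triples that the core lemma's pumping uses, any separating behavior persisting past depth $m$ must revisit a triple and hence be witnessed by a \emph{cycle} in this product graph; conversely such a cycle can be pumped into genuine separating pairs exactly as in the proof of Lemma~\ref{lemma_rec_in_aut_visibly_core_lemma}. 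Thus the existence of a separating pair reduces to reachability of a designated ``non-equivalent'' node together with a pumpable cycle, both checkable in polynomial time, and $L(\mathcal{P})$ is regular iff no such witness is found.

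The step I expect to be the main obstacle is precisely this second reduction: proving that comparing configurations with \emph{unboundedly} long common tops can be captured faithfully by a polynomial-size finite structure. One must show that only polynomially much of the stack is relevant to non-equivalence, and that once a common top exceeds length $m$ any non-equivalence is detectable through a repeated-triple cycle rather than requiring an unbounded search — this is where the $n^3+1$ bound of the core lemma and the determinism of $\mathcal{P}$ (giving unique, and hence directly comparable, runs) do the real work. The remaining ingredients — computing the sets $D_{p,X}$, the reachable summaries, and the final emptiness/reachability checks — are routine polynomial-time graph computations on \textsc{VPA}s.
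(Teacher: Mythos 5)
Your first half is sound and matches the paper's strategy: reduce to the criterion of Lemma~\ref{lemma_rec_in_aut_visibly_core_lemma}, and exploit the stack-invariance of well-matched words so that only the ``popping skeleton'' $r_1,\dots,r_m$ of a separating word matters; your peeling identity for $D_{p,X}$ is correct and is essentially the same insight that underlies the paper's construction. But the proposal stops exactly where the work is. The peeling identity only applies while the two configurations agree in state \emph{and} top symbol, i.e.\ only within the common prefix $\alpha$. A violating pair is $(p,\alpha\beta)$ versus $(p,\alpha\beta')$ with $\beta\neq\beta'$ arbitrary, and once the separating word has popped through $\alpha$ the two runs sit on \emph{different} stacks and, one unmatched return later, generally in different states; from that point on you must track a \emph{pair} of states of $\mathcal{P}$ driven by the same guessed input, not ``a subset of $P$''. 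You must also handle the asymmetries this creates: the two stacks may have different heights, one run may pop the empty stack while the other still has symbols (the very feature that makes this model harder than the one of Srba), and the separating word may end in a suffix $z'$ with no unmatched returns. None of this is covered by the $D_{p,X}$ recursion, and the ``product graph with a pumpable cycle'' is never defined at the point where it would have to encode all of it; as you note yourself, this is the main obstacle, and it is not a routine reachability computation.

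The paper closes this gap by representing \emph{pairs of configurations} as inputs to a nondeterministic synchronous transducer $\mathcal{A}$ over the stack alphabet. $\mathcal{A}$ maintains a pair of states of $\mathcal{P}$, and each of its transitions guesses, for the current pair of stack symbols, a block $w_ir_i$ (well-matched word followed by an unmatched return); the existence of such a block is precomputed in polynomial time by $\textsc{post}^*$-saturation (Proposition~\ref{proposition_rec_in_aut_visibly}) on a product \textsc{VPA} $\mathcal{P}^2$, and two control bits deal with stacks of different heights and with pops on the empty stack. The reachable configuration pairs and the ``common top of length at least $n^3+1$'' constraint are likewise given by synchronous transducers, and regularity fails iff the intersection of the three is nonempty, a polynomial-time emptiness check. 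To salvage your route you would have to prove the missing lemma that the relation of non-equivalent reachable configuration pairs is recognized by a polynomial-size synchronous transducer; your pop-target sets are a fragment of its transition relation but do not suffice on their own.
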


In the proof of Theorem \ref{theorem_rec_in_aut_visibly_decidable} we will make extensive use of the following well-known result for pushdown systems:
\begin{proposition}[\cite{BEM97}]\label{proposition_rec_in_aut_visibly}
	Let $\mathcal{P} = (P,\Sigma,\Gamma,p_0,\bot,\Delta,F)$ be a pushdown automaton and $\mathcal{C}\subseteq P(\Gamma\setminus\{\bot\})^*\{\bot\}$ be a regular set of configurations.
	Then the set
	\[\textsc{post}_\mathcal{P}^*(\mathcal{C}) := \{\mathfrak{c}\in P(\Gamma\setminus\{\bot\})^*\{\bot\} \mid \exists \mathfrak{d}\in\mathcal{C},u\in\Sigma^*: \mathcal{P}: \mathfrak{d}\xrightarrow{u}\mathfrak{c} \}\]
	of reachable configurations from $\mathcal{C}$ is regular.
	Moreover, an automaton defining $\textsc{post}_\mathcal{P}^*(\mathcal{C})$ can be effectively computed in polynomial time given $\mathcal{P}$ and an automaton defining $\mathcal{C}$.
\end{proposition}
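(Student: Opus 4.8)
The plan is to prove the result by the classical \emph{saturation} technique for pushdown systems, representing regular sets of configurations by finite automata over the stack alphabet and computing a fixed point. First I would fix a finite automaton $\mathcal{A}$ over $\Gamma$ that recognizes $\mathcal{C}$ in the following sense: the state set of $\mathcal{A}$ contains a distinguished copy of the control states $P$ used as initial states, and a configuration $(p,\alpha)$ belongs to $\mathcal{C}$ exactly when $\mathcal{A}$ has a run on $\alpha$ from the initial state $p$ to a final state. Such a \emph{$\mathcal{P}$-automaton} exists whenever $\mathcal{C}$ is regular and can be obtained from a given automaton for $\mathcal{C}$ in polynomial time. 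I would also bring $\Delta$ into a normal form in which every transition has one of three shapes --- an internal step $\langle p,\gamma\rangle\hookrightarrow\langle q,\gamma'\rangle$ that rewrites the top symbol, a pop $\langle p,\gamma\rangle\hookrightarrow\langle q,\varepsilon\rangle$, or a push $\langle p,\gamma\rangle\hookrightarrow\langle q,\gamma'\gamma''\rangle$ that writes exactly two symbols --- since arbitrary stack operations decompose into these with only a linear blow-up, and the bottom symbol $\bot$ is carried along as a never-removed marker that $\mathcal{A}$ simply reads at the end of $\alpha$.

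The core of the construction is a saturation procedure that repeatedly adds transitions (and a bounded number of fresh states) to $\mathcal{A}$ until no new transition can be added. For each push rule $\langle p,\gamma\rangle\hookrightarrow\langle q,\gamma'\gamma''\rangle$ I would introduce a single fresh non-initial state $q_{\gamma'}$, depending only on the target state $q$ and the pushed symbol $\gamma'$, so that the total number of new states is at most $|P|\cdot|\Gamma|$. The saturation rules are then: for an internal rule $\langle p,\gamma\rangle\hookrightarrow\langle q,\gamma'\rangle$, whenever the current automaton has a transition $p\xrightarrow{\gamma}s$, add $q\xrightarrow{\gamma'}s$; for a pop rule $\langle p,\gamma\rangle\hookrightarrow\langle q,\varepsilon\rangle$, whenever $p\xrightarrow{\gamma}s$, add an $\varepsilon$-transition $q\xrightarrow{\varepsilon}s$; and for a push rule $\langle p,\gamma\rangle\hookrightarrow\langle q,\gamma'\gamma''\rangle$, whenever $p\xrightarrow{\gamma}s$, add the two transitions $q\xrightarrow{\gamma'}q_{\gamma'}$ and $q_{\gamma'}\xrightarrow{\gamma''}s$. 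Intuitively, a pop replaces reading the old top symbol $\gamma$ by jumping past it from the new control state $q$, while a push inserts a short detour through the fresh state that spells out the two newly written symbols before rejoining the path for the untouched lower stack.

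For correctness I would show that the saturated automaton $\mathcal{A}'$ recognizes exactly $\textsc{post}_\mathcal{P}^*(\mathcal{C})$. The inclusion $\textsc{post}_\mathcal{P}^*(\mathcal{C})\subseteq\mathrm{Conf}(\mathcal{A}')$ follows by induction on the number of steps in a run $\mathfrak{d}\xrightarrow{u}\mathfrak{c}$ of $\mathcal{P}$: each of the three transition types preserves the invariant that the current configuration is accepted by the current automaton, using precisely the transition added by the matching saturation rule. For the converse I would prove, by induction on the number of saturation steps, that every accepting run of $\mathcal{A}'$ on a configuration $(p,\alpha)$ witnesses reachability of $(p,\alpha)$ from $\mathcal{C}$; the key lemma is that whenever $\mathcal{A}'$ contains a path edge $q\xrightarrow{\gamma'}q_{\gamma'}$ introduced for a push rule, the control state $q$ with $\gamma'$ on top is reachable from some configuration of $\mathcal{C}$, and the remainder of the path from $q_{\gamma'}$ corresponds to the stack present before that push. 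The $\varepsilon$-transitions contributed by pop rules are handled either by an explicit $\varepsilon$-closure or by carrying them through the induction.

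The termination and complexity argument is routine once the saturation is in place: the state set is fixed (the original automaton plus at most $|P|\cdot|\Gamma|$ fresh states), so there are only polynomially many possible transitions; every round adds at least one transition and none is ever removed, so the fixed point is reached after polynomially many rounds, each computable in polynomial time, yielding the claimed polynomial bound. I expect the main obstacle to be the soundness direction ($\subseteq$): the fresh push states and the $\varepsilon$-transitions make accepting runs of $\mathcal{A}'$ deviate from the literal shape of the stack, so the induction invariant must be set up carefully --- tracking, for each occurrence of a fresh state on an accepting path, which push it stems from and reconstructing the corresponding intermediate configuration --- in order to convert an abstract accepting run of $\mathcal{A}'$ back into a concrete reachability run of $\mathcal{P}$.
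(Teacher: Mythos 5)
The paper does not prove this proposition at all --- it is imported by citation from \cite{BEM97} --- and your proposal is a correct reconstruction of exactly the saturation argument used in that line of work: the $\mathcal{P}$-automaton representation, the normal form with pop/swap/push-two rules, one fresh state per pair of target control state and pushed symbol (hence at most $|P|\cdot|\Gamma|$ new states), the three saturation rules, and the monotone fixed-point complexity bound all match the standard construction for $\textsc{post}_\mathcal{P}^*$, and you correctly flag the two genuinely delicate points (the soundness induction for runs through fresh states, and the need for $\varepsilon$-closure so that pop-generated $\varepsilon$-transitions can trigger later saturation steps). So your proof is correct and essentially the same as the cited source's; there is nothing in the paper itself to diverge from.
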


\begin{proof}[of Theorem \ref{theorem_rec_in_aut_visibly_decidable}]
	Let $\mathcal{P}=(P,\Sigma,\Gamma,p_0,\bot,\Delta,F)$ be the given deterministic visibly pushdown automaton.
	We construct a synchronous transducer accepting distinct pairs $(p,\alpha\beta)$, $(p,\alpha\beta')$ of configurations falsifying the condition of Lemma \ref{lemma_rec_in_aut_visibly_core_lemma}.
	That is,
	\begin{enumerate}
		\item Both $(p,\alpha\beta)$ and $(p,\alpha\beta')$ are reachable from $(p_0,\bot)$,\label{visibly_proof_1}
		\item $|\alpha| \geq |P|^3+1$ (the $|P|^3+1$ topmost stack symbols are equal) and both configurations have the same state component, and\label{visibly_proof_2}
		\item they are not $\mathcal{P}$-equivalent.\label{visibly_proof_3}
	\end{enumerate}
	It suffices to construct synchronous transducers in polynomial time in $\mathcal{P}$ verifying \ref{visibly_proof_1}, \ref{visibly_proof_2}, and \ref{visibly_proof_3}, respectively.
	Then the claim follows because the intersection of synchronous transducers is computable in polynomial time.
	Furthermore, the obtained transducer defines the empty relation $\emptyset$ if and only if $L(\mathcal{P})$ is regular due to Lemma \ref{lemma_rec_in_aut_visibly_core_lemma}.
	The emptiness problem for synchronous transducer is decidable in polynomial time in terms of a graph search.
	
	Let $\mathfrak{C} := \textsc{post}_\mathcal{P}^*(\{(p_0,\bot)\})$ be the set of reachable configurations.
	Due to Proposition \ref{proposition_rec_in_aut_visibly} an automaton defining $\mathfrak{C}$ is computable in polynomial time.
	Thus, a synchronous transducer defining $\mathfrak{C}\times\mathfrak{C}$ is effectively obtainable in polynomial time, too (take two copies of the automaton for $\mathfrak{C}$ and let them run in parallel).
	$\mathfrak{C}\times\mathfrak{C}$ contains exactly all pairs of configurations satisfying \ref{visibly_proof_1}.
	Constructing a synchronous transducer verifying \ref{visibly_proof_2} is trivial.
	Its size is in $\mathcal{O}(|P|^3)$.
	
	It remains to construct a synchronous transducer verifying \ref{visibly_proof_3}.
	The idea is to guess a separating word and simulate $\mathcal{P}$ in parallel starting in the two configurations given as input to $\mathcal{A}$.
	For that purpose, it will be crucial to show that it suffices to guess only the return symbols of a separating word which are responsible for popping symbols from the stacks (instead of the whole separating word).
	
	By definition, two configurations $(p,\alpha\beta)$, $(p,\alpha\beta')$ are not $\mathcal{P}$-equivalent if and only if there is a word $z\in L(p,\alpha\beta)\triangle L(p,\alpha\beta')$ separating the configurations.
	Moreover, a separating word $z$ can be decomposed into $z=w_1r_1w_2r_2\ldots w_mr_mz'$ where the $w_i$ are well-matched words, the $r_i\in\Sigma_r$ are return symbols and $z'$ does not contain an unmatched return symbol (\textit{i.e.}\ $z'$'s structure is similar to a well-matched word but may contain additional call symbols).
	Note that the return symbols $r_i$ are the only symbols in $z$ allowing $\mathcal{P}$ to access the given stack contents $\alpha\beta$ and $\alpha\beta'$, respectively.
	Furthermore, it holds that $m\geq |\alpha|$. Otherwise, $z$ can certainly not separate the given configurations.
	On the other hand, $m\leq |\alpha\beta|$ or $m\leq |\alpha\beta'|$ does not hold necessarily.
	Indeed, $\mathcal{P}$ may pop the empty stack while processing $z$.
	We implement a nondeterministic synchronous transducer $\mathcal{A}$ that guesses $z$ and verifies that it separates the given configurations.
	For that purpose, it is only necessary to consider the return symbols $r_i$ in combination with the input.
	In particular, it is not necessary to simulate $\mathcal{P}$ step by step on the infixes $w_i$.
	
	The transducer $\mathcal{A}$ will maintain a pair of states $(q,s)$ of $\mathcal{P}$.
	Intuitively, the states $q$ and $s$ occur in runs of $\mathcal{P}$ starting in $(p,\alpha\beta)$ and $(p,\alpha\beta')$ on a separating word.
	Furthermore, $\mathcal{A}$ may proceed from $(q,s)$ to $(q',s')$ if there is a well-matched word $w$ and return symbol $r$ such that $\mathcal{P}$ can proceed from $q$ to $q'$ and $s$ to $s'$ via $wr$ and the topmost stack symbol, respectively.
	Note that in contrast to a full simulation, states in the run of $\mathcal{P}$ are skipped --- \textit{i.e.}\ precisely those states occurring in the run fragment on a well-matched word.
	The first pair of states is given by the input configurations (here $(p,p)$).
	Since $\mathcal{P}$'s behavior on the well-matched words $w_i$ is invariant under the stack contents which are the input of $\mathcal{A}$, the simulation of $\mathcal{P}$ on $w$ boils down to a reachability analysis of configurations.
	Moreover, the reachability analysis can be done at construction time.
	Recall that a synchronous transducer have to satisfy the property that no transition labeled $(a',b')$, $b'\neq \varepsilon$ can be taken after a transition labeled $(a,\varepsilon)$; the same applies to the first component.
	Therefore, $\mathcal{A}$ needs two control bits to handle the cases where $|\beta|\neq|\beta'|$ and $\mathcal{P}$ is popping the empty stack (which has to be done by $\varepsilon$-transitions of $\mathcal{A}$).
	That is,
	\[\mathcal{A} = ((\{q_0,q_f\}\cup P\times P)\times\{0,1\}^2,P\cup\Gamma,(q_0,0,0),\Delta_\mathcal{A},\{q_f\}\times\{0,1\}^2).\]
	The two control bits in the state space shall indicate that a transition of the form $(a,\varepsilon)$ or $(\varepsilon,a)$, respectively, has already been used.
	The accepting states --- \textit{i.e.}\ the first component is $q_f$ --- are used to indicate that the transducer guessed the postfix $z'$ of $z$ which does not contain unmatched returns.
	Afterwards, $\mathcal{A}$ must not simulate $\mathcal{P}$ any further --- hence, the accepting states are effectively sink states.
	Recall that for terms $t_1,t_2$ the indicator function defined by $\delta(t_1=t_2)$ evaluates to $1$ if $t_1=t_2$ and to $0$, otherwise.
	Given two valuation $i,j$ of the two control bits and $\mu,\nu\in\Gamma\cup\{\varepsilon\}$ we use the following shorthand notation to set the values $i',j'$ of the control bits in the next state:\\[.5em]
	$\textsc{valid}(i,j,\mu,\nu,i',j')$ holds if and only if
		\begin{enumerate}
			\item $i' = \delta(\mu = \varepsilon)$,
			\item $j' = \delta(\nu = \varepsilon)$,
			\item if $i = 1$ then $\mu = \varepsilon$, and
			\item if $j = 1$ then $\nu=\varepsilon$.
		\end{enumerate}
	Note that once a control bit is set to $1$ it cannot be reset to $0$.
	The transition relation of $\mathcal{A}$ is the union $\Delta_\mathcal{A} := \Delta_\text{aux}\cup\Delta_r\cup\Delta_{z'}$.
	The sets $\Delta_\text{aux}, \Delta_r$, and $\Delta_{z'}$ are defined as follows.
	\begin{multline*}
	\Delta_\text{aux}:= \{((q_0,0,0),(p,p), (p,p,0,0))\mid p\in P \}\\\cup \{((q_f,i,j),(\mu,\nu),(q_f,i',j'))\mid \mu,\nu\in\Gamma\cup\{\varepsilon\}, \textsc{valid}(i,j,\mu,\nu,i',j') \}
	\end{multline*}
	Starting in $q_0$ the transducer initializes the states of $\mathcal{P}$. Furthermore, once it is in the state $q_f$ the remaining input can be read. Recall that the guessed word $z$ may not pop the whole stacks of the configurations.
	Hence, it may be necessary to skip the remaining input. 
	The \underline{main} transitions guess a pair $w,r$ to pop a symbol from the stack:
	\begin{multline*}
	\Delta_r := \{((p,q,i,j),\mu,\nu,(p',q',i',j'))\mid\mu,\nu\in\Gamma\cup\{\varepsilon\}, \textsc{valid}(i,j,\mu,\nu,i',j'),\\ \exists r\in\Sigma_r,w\in\Sigma^*\text{ well-matched}:
	\mathcal{P}:(p,\mu\bot)\xrightarrow{wr}(p',\bot), (q,\nu\bot)\xrightarrow{wr}(q',\bot) \}.
	\end{multline*}
	
	Finally, the transducer can guess the trailing part $z'$ of $z$ which has no unmatched returns.
	Since it is the last part of the runs of $\mathcal{P}$ and $z$ separates the given configurations it has to lead to states $p',q'$ with $p'\in F \Leftrightarrow q'\notin F$.
	Note that $z' = \varepsilon$ and $p'=p,q'=q$ is a valid choice.
	Thus, there is no need to introduce transitions in $\Delta_r$ leading to accepting states.
	\begin{multline*}
	\Delta_{z'} := 
	\{((p,q,i,j),\mu,\nu,(q_f,i',j'))\mid \mu,\nu\in\Gamma\cup\{\varepsilon\}, \textsc{valid}(i,j,\mu,\nu,i',j'),\\ \exists p',q'\in P~\exists \lambda,\rho\in (\Gamma\setminus\{\bot\})^*\{\bot\}~\exists z'\in\Sigma^*: z'\text{ has no unmatched returns, and}\\
	p'\in F \Leftrightarrow q'\notin F, \mathcal{P}:(p,\bot)\xrightarrow{z'}(p',\lambda), \mathcal{P}:(q,\bot)\xrightarrow{z'}(q',\rho) \}.
	\end{multline*}
	
	The correctness follows immediately from  the fact that $\mathcal{P}$'s behavior on the $w_i$ as well as $z'$ is invariant under the stack content and the decomposition $z=w_1r_1\ldots w_mr_mz'$.
	Indeed, the input $(p\alpha\beta,p\alpha\beta')$ is accepted by $\mathcal{A}$ if and only if there is a word $z=w_1r_1\ldots w_mr_mz'$ such that $\mathcal{P}: (p,\alpha\beta)\xrightarrow{z}(p',\lambda)$ and $\mathcal{P}: (p,\alpha\beta')\xrightarrow{z}(q',\rho)$ for some stack contents $\lambda,\rho$ and $(p',q')\in F\times (P\setminus F)\cup (P\setminus F)\times F$ if and only if $(p,\alpha\beta)\not\approx_\mathcal{P}(p,\alpha\beta')$.
	
	Clearly, $\mathcal{A}$ has size polynomial in $\mathcal{P}$ but we have to show that the transition relation can be computed in polynomial time.
	For that purpose we consider the visibly pushdown automaton \[\mathcal{P}^2 := (P\times P,\Sigma,\Gamma\times\Gamma,(p_0,p_0),(\bot,\bot),\Delta^2,F\times F)\] where
	\begin{align*}
	\Delta^2 := ~&\{ ((p,q),c,(p',q'),(\mu,\nu))\mid (p,c,p',\mu),(q,c,q',\nu) \in\Delta \}\\
	~& \{((p,q),r,(\mu,\nu),(p',q'))\mid (p,r,\mu,p'),(q,r,\nu,q') \in\Delta \wedge \mu,\nu\neq\bot \}\\
	~& \{((p,q),a,(p',q'))\mid (p,a,p'),(q,a,q') \in\Delta \}.
	\end{align*}
	
	Informally, $\mathcal{P}^2$ simulates two copies of $\mathcal{P}$ on the same input.
	Note that we forbid to pop the empty stack by purging the respective transitions.
	Also, no conflict arises while using the stack because the pop and push behavior is controlled by the common input word.
	$\mathcal{P}$ can proceed from $(p,\zeta)$ to $(p',\zeta)$ via a well-matched word $w$ if and only if it can proceed from $(p,\bot)$ to $(p',\bot)$ without popping the empty stack.
	Since $\mathcal{P}^2$ cannot pop the empty stack, $\mathcal{P}$ can proceed from $p$ to $p'$ and from $q$ to $q'$ via a well-matched word $w$ if and only if the configuration $((p',q'),(\bot,\bot))$ is reachable from $((p,q),(\bot,\bot))$ by $\mathcal{P}^2$.
	The set of all these configurations can be determined by checking whether \[(p',q')(\bot,\bot)\in \textsc{post}^*_{\mathcal{P}^2}(\{(p,q)(\bot,\bot)\})\]
	holds.
	In turn, an automaton defining $\textsc{post}^*_{\mathcal{P}^2}(\{(p,q)(\bot,\bot)\})$ can be computed in polynomial time for each pair $(p,q)$ due to Proposition \ref{proposition_rec_in_aut_visibly}.
	Also, there are only $|P|^4$ many possible values for $p,q,p',q'$.
	Moreover, $\mathcal{P}:(p,\mu\bot)\xrightarrow{wr}(p'',\bot)$ holds for a well-matched word $w$ and $r\in\Sigma_r$ if and only if \[\mathcal{P}:(p,\mu\bot)\xrightarrow{w}(p',\mu\bot)\xrightarrow{r}(p'',\bot)\]
	for any $\mu\in\Gamma\cup\{\varepsilon\}$ and $r\in\Sigma_r$.
	Again there are only polynomial many combinations (in $|P|,|\Sigma|$ and $|\Gamma|$).
	Altogether, we conclude that $\Delta_r$ can be effectively obtained in polynomial time.
	The transition set $\Delta_{z'}$ can be computed similarly.
	Since the guessed words $z'$ are not well-matched but do not touch the existing stack content  (they do not have unmatched returns), it has to be verified whether
	\[(p',q')\zeta\in \textsc{post}^*_{\mathcal{P}^2}(\{(p,q)(\bot,\bot)\})\text{ for some } \zeta\in\smash{\big((\Gamma\setminus\{\bot\})\{\bot\}\big)^2}.\]
	This is achievable in polynomial time by a graph search because Proposition \ref{proposition_rec_in_aut_visibly} provides an automaton defining $\textsc{post}^*_{\mathcal{P}^2}(\{(p,q)(\bot,\bot)\})$ of polynomial size for each pair of states $(p,q)$.
\end{proof}

\subsection{Deciding Recognizability of Binary Automatic Relations}
With the regularity test for \textsc{DVPA}s established we turn towards our second objective which is to decide recognizability of binary automatic relations.
Recall that for a word $u$ we denote its reversal by $\rev(u)$.
 
\begin{lemma}\label{lemma_rec_in_aut_visibly_construction}
	Let $R\subseteq \Sigma_1^*\times\Sigma_2^*$ with $\Sigma_1\cap\Sigma_2=\emptyset$ be an automatic relation and $\#\notin\Sigma_1\cup\Sigma_2$ be a fresh symbol.
	Furthermore, let $\mathcal{A}$ be a (nondeterministic) synchronous transducer defining $R$.
	Then $L_R := \{\rev(u)\#v\mid (u,v)\in R \}$ is definable by a \textsc{DVPA} whose size is single exponential in $|\mathcal{A}|$.
\end{lemma}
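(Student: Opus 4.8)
The plan is to realize the synchronous transducer $\mathcal{A}$ by a DVPA that reads $\rev(u)$ as a sequence of \emph{call} symbols, the separator $\#$ as an \emph{internal} symbol, and $v$ as a sequence of \emph{return} symbols; that is, $\Sigma_c := \Sigma_1$, $\Sigma_\text{int} := \{\#\}$, and $\Sigma_r := \Sigma_2$ (these are disjoint by hypothesis, so the stack operation really is determined by the input letter). While reading $\rev(u) = u_m\cdots u_1$ the automaton pushes the letters $u_i$, so that after the separator the top of the stack carries $u_1$, then $u_2$, and so on. Reading $v = v_1 v_2\cdots$ then pops these letters, and each return step on input letter $v_i$ popping $u_i$ lets the DVPA feed the synchronized pair $(u_i,v_i)$ into a subset-construction simulation of $\mathcal{A}$, keeping a set $S\subseteq Q$ of currently reachable states in the control. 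This subset construction removes the nondeterminism of $\mathcal{A}$ and is the source of the single exponential blow-up.

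Two mismatches between $m=|u|$ and $n=|v|$ have to be handled. If $m<n$, then once the stack is emptied the remaining letters $v_{m+1},\dots,v_n$ are read by popping the empty stack (which the VPA model explicitly allows), each such step simulating $\mathcal{A}$ on a pair $(\varepsilon,v_j)$. The harder case is $m>n$: here $\rev(u)\#v$ ends while the letters $u_{n+1}\cdots u_m$ are still on the stack and must be matched against $\varepsilon$ by $\mathcal{A}$, yet a VPA can neither inspect its stack nor move at end of input, and acceptance may depend only on the control state. I resolve this by precomputing, \emph{during the reverse reading of $u$}, the $\varepsilon$-completion of every suffix. Concretely, together with each pushed letter $u_i$ I store the transition profile $T_i$ of $(u_{i+1}\cdots u_m,\varepsilon\cdots\varepsilon)$, i.e.\ the relation $\{(q,q')\mid \mathcal{A}\colon q\xrightarrow{(u_{i+1}\cdots u_m,\varepsilon\cdots\varepsilon)}q'\}$; over finite words only this reachability part of Definition~\ref{def:transition-profile} is needed, since acceptance just means reaching a final state. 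These profiles are computed incrementally: the control state in the push phase holds the accumulated profile $A$ (equal to $T_i$ just before $u_i$ is pushed), each call on a letter $a$ pushes $(a,A)$ and updates $A$ to $\tau(a,\varepsilon)\cdot A$ using the transition-profile monoid. This is exactly why $\rev(u)$ is pushed rather than $u$: reading the reverse exposes each suffix $u_{i+1}\cdots u_m$ before $u_i$ is pushed.

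With the profiles available on the stack, the simulation additionally carries a flag recording ``$\mathcal{A}$ would accept if the input ended now''. When a return on $v_i$ pops the stack symbol $(u_i,T_i)$, the control updates $S$ via $\tau(u_i,v_i)$ (folding in any $\varepsilon$-closures) and sets the flag to $1$ iff some $q$ in the updated $S$ is mapped by $T_i$ to a final state; when a return pops the empty stack, the update uses $\tau(\varepsilon,v_j)$ and the identity profile (no remaining $u$). The internal $\#$-transition initializes $S:=\{q_0\}$ and uses the fully accumulated profile $A$ to set the flag for the case $v=\varepsilon$. The accepting states of the DVPA are exactly those carrying flag $1$. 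Since at end of input $T_i$ is precisely the $\varepsilon$-completion of the still-unread suffix $u_{n+1}\cdots u_m$, a word is accepted iff $\mathcal{A}$ accepts $(u,v)$, i.e.\ iff $\rev(u)\#v\in L_R$. Malformed inputs (a second $\#$, a call after $\#$, or a return before $\#$) are sent to a rejecting sink, and at every transition the next control state, the stack operation, and the pushed or popped symbol are uniquely determined, so the automaton is indeed deterministic.

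For the size bound, the stack alphabet is $\{\bot\}\cup(\Sigma_1\times\mathcal{R})$ and the state set consists of the push-phase profiles $\mathcal{R}$ together with the pairs in $2^{Q}\times\{0,1\}$ (plus the sink), where $\mathcal{R}$ denotes the set of transition profiles over $\mathcal{A}$, of size $2^{O(|Q|^2)}$; hence the DVPA is single exponential in $|\mathcal{A}|$. The main obstacle is the $m>n$ case described above: the entire construction hinges on computing the $\varepsilon$-completion profiles in the order induced by reading $\rev(u)$ and storing them on the stack, so that the end-of-input acceptance decision can be precomputed and folded into the control state, where alone a VPA is permitted to read it off at acceptance time.
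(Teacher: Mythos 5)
Your proof is correct and follows essentially the same route as the paper's: $\Sigma_1$ as call symbols, $\Sigma_2$ as return symbols, a forward subset construction for determinism, and --- the key point --- precomputing during the push phase, and storing on the stack alongside each letter, the information needed to decide acceptance when $|u|>|v|$. The only (cosmetic) difference is that you store the full transition profile $T_i$ of $(u_{i+1}\cdots u_m,\varepsilon)$ where the paper's reverse powerset construction stores only the state set $S_i=\{q\mid \mathcal{A}\colon q\xrightarrow{(u_{i+1}\cdots u_m,\varepsilon)}F\}$ (the preimage of $F$ under your $T_i$), and you resolve acceptance via an explicit flag rather than by intersecting two state sets in the final control state; this costs $2^{O(|Q|^2)}$ instead of $2^{|Q|}$ per stack symbol but remains single exponential.
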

\begin{proof}
	Let $\mathcal{A} = (Q,\Sigma_1,\Sigma_2,q_0,\Delta,F)$ be the given synchronous transducer.
	W.l.o.g.\ we assert that $\mathcal{A}$ does not have any transitions labeled $(\varepsilon,\varepsilon)$.
	Otherwise, they can be eliminated in polynomial time using the well-known standard $\varepsilon$-elimination procedure for $\varepsilon$-automata.
	The basic idea is to push $\rev(u)$ to the stack and use the stack as the read-only input tape to simulate $\mathcal{A}$ on $(u,v)$. For that purpose, $\Sigma_1$ becomes the set of call symbols to push $\rev(u)$ to the stack and $\Sigma_2$ becomes the set of return symbols to be able to read letters of $u$ and $v$ simultaneously.
	Unfortunately, if $\rev(u)$ is longer than $v$ then the pushdown automaton is not able to simulate transitions labeled $(a,\varepsilon)$ because there are no return symbols left.
	In other words, $u$ cannot be read to the end if $u$ is longer than $v$.
	To solve this problem the pushdown automaton performs a reverse powerset construction on $\rev(u)$ while pushing it to the stack using only transitions of the form $(p,a,\varepsilon,q)$ and stores the states on the stack --- \textit{i.e.}\ it starts with the set of accepting states and computes the set of states from which the current set of states is reachable by a transition labeled $(a,\varepsilon)$ where $a$ is supposed to be pushed to the stack.
	That way it knows whether the transducer $\mathcal{A}$ could proceed to an accepting state using the remaining part of $\rev(u)$ and transitions labeled $(a,\varepsilon)$ once $v$ has been read completely.
	Note that a reverse powerset construction yields an exponential blow-up even for deterministic transducers. Therefore, it is pointless to determinize $\mathcal{A}$ first.
	Instead, the "normal" forward powerset construction is incorporated into the construction such that the resulting visibly pushdown automaton is deterministic.\pagebreak
	
	Formally, let $\mathcal{P} := (2^Q~\dot{\cup}~(2^Q\times 2^Q),\Sigma,\Gamma,F,(\varepsilon,F),\Delta_\mathcal{P},F_\mathcal{P})$ where
	\begin{itemize}
		\item $\Sigma = \Sigma_c~\dot{\cup}~\Sigma_r~\dot{\cup}~\Sigma_\text{int}$ with $\Sigma_c := \Sigma_1$, $\Sigma_r := \Sigma_2$ and $\Sigma_\text{int} := \{\#\}$,
		\item $\Gamma := (\Sigma_c\cup\{\varepsilon\})\times 2^Q$,
		\item $F_\mathcal{P} := \{(P,S)\in 2^Q\times 2^Q\mid P\cap S\neq \emptyset \}$,
	\end{itemize}
	The states in $2^Q$ are used while pushing $\rev(u)$ to the stack and perform the reverse powerset construction. Similarly, states in $2^Q\times 2^Q$ are used to recover the constructed subsets of the reverse powerset construction (second component) and to perform the "normal" powerset construction (first component).
	Furthermore, note that the bottom stack symbol is $(\varepsilon,F)$ indicating that $\rev(u)$ has been read completely and that the transducer should be in an accepting state.
	Further on, the transition relation of $\mathcal{P}$ is defined by
	\begin{align*}
	\Delta_\mathcal{P} := ~&\{(S,c,S',(c,S)) \mid c\in\Sigma_c,S\in 2^Q, S'=\{s' \mid \exists s\in S: (s',c,\varepsilon,s)\in\Delta\} \}\\
	\cup~&\{(P,\#,(\{q_0\}, P))\mid P\in 2^Q \}\\
	\cup~&\{((P,S),r,(c,S'), (P',S'))\mid P,S,S'\in 2^Q, r\in\Sigma_r, c\in\Sigma_c\cup\{\varepsilon\}, \\
	~&\quad\quad\quad\quad\quad\quad\quad\quad\quad\quad\quad\quad\quad\quad P' = \{p'\mid \exists p\in P: (p,c,r,p')\in\Delta \} \}
	\end{align*}
	It is easy to see that $\mathcal{P}$ is deterministic.
	Moreover, for $u=a_1\ldots a_n$ the stack content has the form $(a_1,S_1)\ldots(a_n,S_n)(\varepsilon,F)$ when $\mathcal{P}$ reads the $\#$-symbol.
	Furthermore, it holds that $\mathcal{A}:s_i\xrightarrow{\smash{(a_{i+1}\ldots a_n,\varepsilon)}}F$ for precisely each $s_i\in S_i$, $1\leq i\leq n$.
	In other words, $\mathcal{A}$ accepts the remaining part of $u$ on the stack precisely from all states in $S_i$.
	In particular, this claim holds for the case $u=\varepsilon$ by the choice of the start state $F$.
	Suppose $|v|\geq |u|$.
	Then the pushdown automaton $\mathcal{P}$ simulates $\mathcal{A}$ on $(u,v)$ and ends up in a state $(P,F)$ where $P$ is the set of states reachable by $\mathcal{A}$ given the input $(u,v)$ (powerset construction). Note that the second component is $F$ because the stack has been cleared and the start state is $F$ (in the case $|v| = |u|$) and the bottom stack symbol is $(\varepsilon,F)$ (in the case $|v| > |u|$).
	Thus, $\mathcal{P}$ accepts if and only if $P\cap F\neq \emptyset$.
	It follows that $\mathcal{P}$ accepts $\rev(u)\#v$ if and only if $\mathcal{A}$ accepts $(u,v)$ in the case $|v|\geq |u|$.
	If $|v| < |u|$ then $\mathcal{P}$ ends up in a state $(P,S)$. Let $m:=|v|$.
	Then $P$ is the set of states reachable by $\mathcal{A}$ given the input $(a_1\ldots a_m,v)$ analogously to the case $|v|\geq |u|$. Furthermore, by our previous observation $S$ is the set of states from which $\mathcal{A}$ accepts $(a_{m+1}\ldots,a_n,\varepsilon)$.
	Thus, $\mathcal{P}$ accepts $\rev(u)\#v$ if and only if $P\cap S\neq\emptyset$ if and only if there is an accepting run of $\mathcal{A}$ on $(u,v)$.
\end{proof}

Since $L_R$ is regular if and only if $R$ is a recognizable relation as shown by \cite{CCG06}, we obtain the second result of this section as corollary of Theorem \ref{theorem_rec_in_aut_visibly_decidable} and Lemma \ref{lemma_rec_in_aut_visibly_construction}.
\begin{corollary}\label{corollary_rec_in_aut_decidable_visibly}
	Let $\mathcal{A}$ be a (possibly nondeterministic) synchronous transducer defining a binary relation.
	Then it is decidable in single exponential time whether $R_*(\mathcal{A})$ is recognizable.
\end{corollary}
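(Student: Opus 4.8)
The plan is to combine the two main ingredients of this section. First I would invoke Lemma~\ref{lemma_rec_in_aut_visibly_construction} to pass from the relation $R := R_*(\mathcal{A})$ to the language $L_R = \{\rev(u)\#v \mid (u,v)\in R\}$, obtaining a \textsc{DVPA} $\mathcal{P}$ for $L_R$ whose size is single exponential in $|\mathcal{A}|$. Then I would run the polynomial-time regularity test of Theorem~\ref{theorem_rec_in_aut_visibly_decidable} on $\mathcal{P}$, and report that $R$ is recognizable precisely when $\mathcal{P}$ turns out to define a regular language. Correctness of this final equivalence is exactly the characterisation ``$L_R$ is regular if and only if $R$ is recognizable'' established by \cite{CCG06}.

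One technical precondition I would address up front: Lemma~\ref{lemma_rec_in_aut_visibly_construction} requires the input and output alphabets to be disjoint, $\Sigma_1\cap\Sigma_2=\emptyset$, so that the encoding $\rev(u)\#v$ is unambiguous and the letters of the first component can serve as call symbols while those of the second serve as return symbols. If the given transducer uses overlapping alphabets, I would first relabel one copy of the shared letters with fresh symbols; this is a purely syntactic transformation computable in polynomial time, and it neither changes whether $R$ is recognizable nor affects the asymptotic size bounds. After this normalisation the hypotheses of Lemma~\ref{lemma_rec_in_aut_visibly_construction} are met, even though $\mathcal{A}$ is only required to be nondeterministic (the lemma still yields a \emph{deterministic} \textsc{VPA}, which is what makes the polynomial-time test applicable).

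For the complexity bound I would simply compose the estimates. The \textsc{DVPA} $\mathcal{P}$ produced by Lemma~\ref{lemma_rec_in_aut_visibly_construction} has size $2^{\mathcal{O}(|\mathcal{A}|)}$, and the regularity test of Theorem~\ref{theorem_rec_in_aut_visibly_decidable} runs in time polynomial in the size of its input \textsc{DVPA}. A polynomial evaluated on a single-exponential argument is again single-exponential, so the whole procedure decides recognizability of $R_*(\mathcal{A})$ in single exponential time in $|\mathcal{A}|$. The only point requiring a moment of care is to confirm that the construction of Lemma~\ref{lemma_rec_in_aut_visibly_construction} can itself be carried out in single exponential time (not merely that its output has single-exponential size), so that building $\mathcal{P}$ does not dominate the cost; this is immediate from the explicit description of $\mathcal{P}$ in that lemma, whose states, stack symbols, and transitions are all enumerable within the stated exponential bound.

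I do not expect a genuine obstacle here, since the substantive work is already encapsulated in Lemma~\ref{lemma_rec_in_aut_visibly_construction} and Theorem~\ref{theorem_rec_in_aut_visibly_decidable}; the corollary is essentially their conjunction together with the reduction of \cite{CCG06}. The closest thing to a subtlety is the alphabet-disjointness hypothesis and the bookkeeping that the exponential size of $\mathcal{P}$ is not compounded by the regularity test, both of which are handled directly by the cited results.
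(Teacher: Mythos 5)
Your proposal is correct and follows essentially the same route as the paper: reduce to $L_R$ via Lemma~\ref{lemma_rec_in_aut_visibly_construction} (after relabelling to make the alphabets disjoint), apply the polynomial-time regularity test of Theorem~\ref{theorem_rec_in_aut_visibly_decidable} to the exponential-size \textsc{DVPA}, and conclude via the equivalence between regularity of $L_R$ and recognizability of $R$. The only cosmetic difference is that the paper spells out a short self-contained proof of that last equivalence (decomposing a finite automaton for $L_R$ along its $\#$-transitions), whereas you cite \cite{CCG06} for it --- which the paper itself also does in the surrounding text, so nothing is missing.
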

\begin{proof}
	W.l.o.g.\ $\mathcal{A}$ defines a binary relation $R\subseteq \Sigma_1\times\Sigma_2$ where $\Sigma_1\cap\Sigma_2=\emptyset$. Otherwise, alphabet symbols can easily be renamed in one of the components.
	Due to Lemma \ref{lemma_rec_in_aut_visibly_construction} we can obtain a deterministic visibly pushdown automaton defining $L_R=\{\rev(u)\#v\mid (u,v)\in R \}$ in single exponential time.
	Furthermore, by Theorem \ref{theorem_rec_in_aut_visibly_decidable} it can be decided in polynomial time whether $L_R$ is regular.
	Hence, it suffices to show that $L_R$ is regular if and only if $R$ is recognizable.
	Then the claim follows immediately.
	
	Suppose $R$ is recognizable. Then $R$ can be written as $R=\bigcup_{i=1}^m L_i\times K_i$ for some regular languages $L_i$ and $K_i$.
	Thus, $L_R = \bigcup_{i=1}^m\rev(L_i)\{\#\}K_i$ is regular.
	On the contrary, assume that $L_R$ is regular.
	Then there is a finite automaton $\mathcal{B} = (Q,\Sigma_1\cup\Sigma_2\cup\{\#\},q_0,\Delta,F)$ defining $L_R$.
	Let $L_{q_0q} := \{w\mid \mathcal{B}:q_0\xrightarrow{w}q \}$ and  $L_{qF} := \{w\mid \mathcal{B}:q\xrightarrow{w}F \}$ for $q\in Q$.
	Then we can write $L_R$ as $L_R = \bigcup_{(p,\#,q)\in\Delta} L_{q_0p}\{\#\}L_{qF}$.
	Furthermore, for all these $L_{q_0p},L_{qF}\neq\emptyset$ (where $(p,\#,q)\in\Delta$) we have that $L_{q_0p}\subseteq\Sigma_1^*$ and $L_{qF}\subseteq\Sigma_2^*$.
	It follows that $R=\bigcup_{(p,\#,q)\in\Delta} \rev(L_{q_0p})\times L_{qF}$ is recognizable.
\end{proof}
 \section{Conclusion} \label{sec:conclusion}
The undecidability of the equivalence problem for deterministic
$\omega$-rational relations presented in Section~\ref{sec:equiv-drat}
exhibits an interesting difference between deterministic transducers
on finite and on infinite words. We believe that it is worth to
further study the algorithmic theory of this class of relations. For
example, the decidability of recognizability for a given
deterministic $\omega$-rational relation is an open question. The
technique based on the connection between binary rational relations
and context-free languages as presented in
Section~\ref{sec:rec-in-aut} that is used by \cite{CCG06} for deciding
recognizability of deterministic rational relations cannot be
(directly) adapted. First of all, the idea of pushing the first
component on the stack and then simulating the transducer while reading
the second component fails because this would require an infinite
stack. Furthermore, the regularity problem for deterministic
$\omega$-pushdown automata is not known to be decidable (only for the
subclass of deterministic weak B\"uchi automata \cite{LodingR12} were able to show decidability).

It would also be interesting to understand whether the decidability of
the synthesis problem (see the introduction) for deterministic
rational relations over finite
words recently proved by \cite{FiliotJLW16} can be transferred to infinite words.

For the recognizability problem of \maybeomega{}automatic relations
we have shown decidability with a doubly exponential time algorithm for
infinite words. We also provided a singly exponential time algorithm for
the binary case over finite words (improving the complexity of the
approach of \cite{CCG06} as explained in
Section~\ref{sec:omega-rec-in-aut}).
It remains open whether the singly exponential time algorithm can be extended to automatic relations of arbitrary arity.
Also, it is open whether there are matching lower complexity bounds.

The connection between automatic relations and \textsc{VPA}s raises the question, whether extensions of \textsc{VPA}s studied in the literature (as for example by
\cite{Caucal06}) can be used to identify interesting subclasses of
relations between the \maybeomega{}automatic and deterministic
\maybeomega{}rational relations. The problem of identifying such classes
for the case of infinite words has already been posed by \cite{Thomas92}. 
\acknowledgements We thank the reviewers of the DMTCS journal for their remarks which led to improvements of the presentation, in particular of the proof of Lemma~\ref{rec_in_aut_omega_lemma_1}.

\bibliographystyle{abbrvnat}

\end{document}